\crefname{algorithm}{Algorithm}{Algorithms} %
\crefname{metaalgorithm}{Meta-Algorithm}{Meta-Algorithms} %
\newenvironment{algspec}{ %
  \begin{mdframed}[style=figstyle]}{ %
  \end{mdframed}} %
\theoremstyle{plain}
\newtheorem{theorem}{Theorem}
\newtheorem{lemma}[theorem]{Lemma}
\newtheorem{corollary}[theorem]{Corollary}
\newtheorem{proposition}[theorem]{Proposition}
\newtheorem{conjecture}[theorem]{Conjecture}
\newtheorem{problem}[theorem]{Problem}
\theoremstyle{definition}
\newtheorem{definition}[theorem]{Definition}
\theoremstyle{remark}
\let\poly\relax
\DeclareMathOperator{\poly}{poly}
\DeclareMathOperator{\range}{range}
\DeclareMathOperator{\diag}{diag}
\def\EE{\mathbb{E}}
\def\SS{\mathbb{S}}
\def\FF{\mathbb{F}}
\newcommand{\supp}{\textsf{supp}}
\def\calC{\mathcal{C}}
\def\calD{\mathcal{D}}
\def\calH{\mathcal{H}}
\def\calX{\mathcal{X}}
\def\calZ{\mathcal{Z}}
\def\calP{\mathcal{P}}
\def\QRC{\mathrm{QRC}}
\def\Ham{\mathsf{Ham}}
\DeclareMathOperator{\Stab}{Stab}
\DeclareMathOperator{\row}{row}
\DeclareMathOperator{\col}{col}
\newcommand{\s}{\vb{s}}
\newcommand{\x}{\vb{x}}
\renewcommand{\mod}[1]{\ \mathrm{mod}\ #1}
\begin{document}

\title{Instantaneous Quantum Polynomial-Time Sampling and Verifiable Quantum Advantage: Stabilizer Scheme and Classical Security}

\author{Michael J. Bremner}
\email{Michael.Bremner@uts.edu.au}
\affiliation{Centre for Quantum Software and Information, University of Technology Sydney, Sydney, New South Wales 2007, Australia }
\affiliation{Centre for Quantum Computation and Communication Technology, New South Wales 2052, Australia}

\author{Bin Cheng}
\email{bincheng@nus.edu.sg}
\affiliation{Centre for Quantum Software and Information, University of Technology Sydney, Sydney, New South Wales 2007, Australia }
\affiliation{Centre for Quantum Computation and Communication Technology, New South Wales 2052, Australia}
\affiliation{Centre for Quantum Technologies, National University of Singapore, Singapore}

\author{Zhengfeng Ji}
\email{jizhengfeng@tsinghua.edu.cn}
\affiliation{Department of Computer Science and Technology, Tsinghua University, Beijing, China}
\affiliation{Zhongguancun Laboratory}







\begin{abstract}
  Sampling problems demonstrating beyond classical computing power with noisy intermediate scale quantum devices have been experimentally realized.
  In those realizations, however, our trust that the quantum devices faithfully
  solve the claimed sampling problems is usually limited to simulations of
  smaller-scale instances and is, therefore, indirect.
  The problem of verifiable quantum advantage aims to resolve this critical
  issue and provides us with greater confidence in a claimed advantage.
  Instantaneous quantum polynomial-time (IQP) sampling has been proposed to
  achieve beyond classical capabilities with a verifiable scheme based on
  quadratic-residue codes (QRC).
  Unfortunately, this verification scheme was recently broken by an attack
  proposed by Kahanamoku-Meyer.
  In this work, we revive IQP-based verifiable quantum advantage by making two
  major contributions.
  Firstly, we introduce a family of IQP sampling protocols called the \emph{stabilizer scheme}, which builds on results linking IQP circuits, the stabilizer formalism, coding theory, and an efficient characterization of IQP circuit correlation functions.
  This construction extends the scope of existing IQP-based schemes while
  maintaining their simplicity and verifiability.
  Secondly, we introduce the \emph{Hidden Structured Code} (HSC) problem as a
  well-defined mathematical challenge that underlies the stabilizer scheme.
  To assess classical security, we explore a class of attacks based on secret
  extraction, including the Kahanamoku-Meyer's attack as a special case.
  We provide evidence of the security of the stabilizer scheme, assuming the
  hardness of the HSC problem.
  We also point out that the vulnerability observed in the original QRC scheme
  is primarily attributed to inappropriate parameter choices, which can be
  naturally rectified with proper parameter settings.
\end{abstract}

\maketitle


\section{Introduction}\label{sec:introduction}

Quantum computing represents a fundamental paradigm change in the theory of
computation, and promises to achieve quantum speedup in many problems, such as
integer factorization~\cite{Shor-factorization} and database
search~\cite{grover_fast_1996}.
However, many quantum algorithms are designed to be implemented in the
fault-tolerant regime, which are too challenging for our current noisy
intermediate-scale quantum (NISQ) era~\cite{preskill_quantum_2018}.
Experimentally, we can perform random-circuit
sampling~\cite{boixo_characterizing_2018, arute_quantum_2019, zhu_quantum_2021,
  wu_strong_2021} and boson sampling~\cite{aaronson_computational_2011,
  zhong_quantum_2020} at a scale that is arguably beyond the capability of
classical simulation.
But when it comes to verifiability, although these experiments can use some
benchmarking techniques such as cross-entropy
benchmarking~\cite{arute_quantum_2019} to certify the quantum devices, they
cannot be efficiently verified in an adversarial setting without modification of
the underlying computational task.

Classical verification of quantum computation is a long-standing question, which
was first asked by Gottesman~\cite{Gottesman04}.
In the context of verifying arbitrary quantum computation, there have been a
plethora of important results~\cite{Broadbent2008, Broadbent2010-blind-original,
  aharonov_interactive_2010, Aharonov2017, Ji2016-classical, Fitzsimons2017-verifiable-blind,
  Fitzsimons2018, Reichardt2013-blind-vazirani, mahadev_classical_2018}.
The more relevant context of this work is generating a test of quantumness. 
The goal is to create a computational task that is beyond the capabilities of
classical computing, while using minimal quantum and classical computing to
generate and verify.
A motivating example is given by Shor's algorithm for integer
factorization~\cite{Shor-factorization}, which is appealing in that hard
instances can be easily generated and verified classically yet finding the
solution is beyond the capabilities of classical computers.
However, this also has the drawback that the quantum solution also seems to be
beyond the capabilities of NISQ devices.

Recently, there have been tests of quantumness that combine the power of both interactive
proofs and cryptographic primitives~\cite{brakerski_cryptographic_2018,
  brakerski_simpler_2020, kahanamoku-meyer_classically_2022}.
This class of cryptographic verification protocols usually uses a
primitive called trapdoor claw-free (TCF) functions, which has the following
properties.
First, it is a 2-to-1 function that is hard to invert, meaning that given
$y = f(\vb{x}) = f(\vb{x}')$, it is hard for an efficient classical computer to find the
preimage pair $(\vb{x}, \vb{x}')$.
Second, given a trapdoor to the function $f(\vb{x})$, the preimage pair can be
efficiently found on a classical computer.
We will refer to this class of verification protocols as the TCF-based
protocols.
The TCF-based protocols require the quantum prover to prepare the state of the
form $\sum_{\vb{x}} \ket{\vb{x}} \ket{f(\vb{x})}$.
Although a recent experiment implemented a small-scale TCF-based protocol on a
trapped-ion platform~\cite{zhu_interactive_2023}, implementing this class of
protocols is still very challenging for the current technology.

Another class of verification protocols is based on instantaneous quantum
polynomial-time (IQP) circuits initiated by Shepherd and
Bremner~\cite{shepherd_temporally_2009}.
IQP circuits are a family of quantum circuits that employ only commuting gates,
typically diagonal in the Pauli-$X$ basis.
In IQP-based verification protocols, the verifier generates a pair consisting of
an IQP circuit $U_{\mathrm{IQP}}$ and a secret key $\vb{s} \in {\{0, 1\}}^n$.
After transmitting the classical description of the IQP circuit to the prover,
the verifier requests measurement outcomes in the computational basis.
Then, the verifier uses the secret to determine whether the measurement outcomes
are from a real quantum computer.
Such a challenge seems hard for classical computers, as random IQP circuits are
believed to be computationally difficult to simulate classically with minimal
physical resources, assuming some plausible complexity-theoretic assumptions
such as the non-collapse of polynomial
hierarchy~\cite{bremner_classical_2011,bremner_average-case_2016,bremner2017achieving}.

The use of random IQP circuits for the verification protocol is however problematic due
to the anti-concentration property~\cite{yung_anti-forging_2020,
  bremner_average-case_2016}.
To address this issue, the Shepherd-Bremner scheme employs an obfuscated
quadratic-residue code (QRC) to construct the pair
$(U_{\mathrm{IQP}}, \vb{s})$~\cite{macwilliams1977book}.
While the Shepherd-Bremner scheme was experimentally attractive, it suffered
from a drawback as its cryptographic assumptions were non-standard and lacked
sufficient study compared to TCF-based protocols.
This was especially apparent when in 2019, Kahanamoku-Meyer discovered a loophole in the Shepherd-Bremner scheme, enabling a
classical prover to efficiently find the secret, which subsequently allows the
prover to generate data to spoof the test~\cite{kahanamoku-meyer_forging_2023}.
Given the potential of IQP-based protocols to achieve verifiability beyond
classical computing using fewer resources than, say, Shor's algorithm, it is
crucial to investigate the possibility of extending and rectifying the
Shepherd-Bremner construction.

In this work, we propose a new IQP-based protocol, which we refer to as the
\emph{stabilizer scheme}.
Our construction allows the verifier to efficiently generate an IQP circuit,
$U_{\mathrm{IQP}} = e^{i \pi H/8}$, and a secret, $\mathbf{s}$, so that the
correlation function relative to the secret has a magnitude equal to $2^{-g/2}$,
where $g$ is a tunable integer.
The stabilizer scheme is based on the interplay between IQP circuits, stabilizer
formalism and coding theory, and it significantly strengthens previous
constructions based on quadratic-residue codes~\cite{shepherd_temporally_2009}
or random small IQP circuits~\cite{yung_anti-forging_2020}.
Our characterization on IQP circuits builds upon and integrates several previous
results~\cite{shepherd_binary_2010,mann_simulating_2021}, which tackle this
problem from the perspective of binary matroids and Tutte polynomials.
In order to explore the classical security, we formulate the \emph{Hidden
  Structured Code} problem, which captures the hardness of classical attacks
based on secret extraction.
Then, we investigate a general class of such classical attacks, which includes
Kahanamoku-Meyer's attack as an instance.
We give positive evidence that this class of classical attacks takes exponential
time to generate the data with correct correlation relative to the secret.

Specifically, we show that a generalization of Kahanamoku-Meyer's attack, named
Linearity Attack, fails to break the stabilizer scheme if the parameters are
chosen appropriately.
Additionally, we have designed a new obfuscation technique called \emph{column
  redundancy}, which can even be used to fix the recently found weakness in the
Shepherd-Bremner construction~\cite{snoyman_comment}.
Specifically, Claim~3.1 in Ref.~\cite{kahanamoku-meyer_forging_2023} states that the attack algorithm for the Shepherd-Bremner construction succeeds in $O(n^3)$ time on average, which turns out to be true only under certain parameter choices.
This can be naturally rectified with proper parameter settings enabled by our column redundancy technique.
{We also discuss the recent new classical attacks proposed by Gross and Hangleiter~\cite{gross_secret_2023}, which impose further structural constraints on our construction. We then adjust our construction according to the new constraints and discuss its security against all known attacks.}
Our results provide positive evidence for the security of the IQP-based
verification protocols.

This paper is organized as follows.
In the rest of the Introduction, we first give the general framework of
IQP-based verification protocols.
Then, we state our main results in more detail, followed by discussing the
related works.
In \cref{sec:preliminaries}, we give the preliminaries, including
stabilizer formalism, necessary results from coding theory and the
Shepherd-Bremner construction.
In \cref{sec:stabilizer_characterization_IQP}, we give the
characterization of the state generated by IQP circuits with $\theta = \pi/4$
and the correlation function $\ev{\calZ_{\vb{s}}}$ with $\theta = \pi/8$.
Then, in \cref{sec:stab_construction}, we present the stabilizer construction
for the IQP-based protocols.
In \cref{sec:implementation}, we study the implementation of our protocol on real quantum devices, with a goal to reduce gate complexity and circuit depth.
In \cref{sec:classical_attack_security}, we analyze the classical
security of the stabilizer scheme and explore the classical attacks based
on secret extraction.
Finally, we conclude and give open problems in \cref{sec:discussion}.

\subsection{IQP-based verification protocol}\label{subsec:IQP_protocol}

Here, we focus on a specific family of IQP circuits, the $X$
program~\cite{shepherd_temporally_2009}, where all local gates are diagonal in
the Pauli-$X$ basis.
An $X$-program IQP circuit can be specified by a binary matrix $\vb{H} \in \FF_2^{m\times n}$, where $m$ is the number of gates and $n$ is the number of qubits.
The binary matrix $H$ is first transformed into an IQP Hamiltonian $H$ by
\begin{align}
  H = \Ham(\vb{H}) = \sum_{\vb{p}^T \in \row(\vb{H})} \calX_{\vb{p}} \; ,
\end{align}
where $\row(\vb{H})$ is the set of rows of $\vb{H}$, and $\calX_{\vb{p}} := X_1^{p_1} \otimes X_2^{p_2} \otimes \cdots \otimes X_n^{p_n}$.
Then, the IQP circuit is given by $U_{\vb{H}, \theta} := e^{i\theta \Ham(\vb{H})}$.
Since terms in $\Ham(\vb{H})$ commutes with each other, we have 
\begin{align}
  U_{\vb{H}, \theta} = \prod_{\vb{p}^T \in \row(\vb{H})} e^{i\theta \calX_{\vb{p}}} \ .
\end{align}
In the general case, the evolution time for each term in $\Ham(\vb{H})$ can even be different,
but we focus on the case where $\theta = \pi/8$ for all terms in this work.

\begin{figure*}[t]
  \centering %
  \includegraphics[width = 0.8\textwidth]{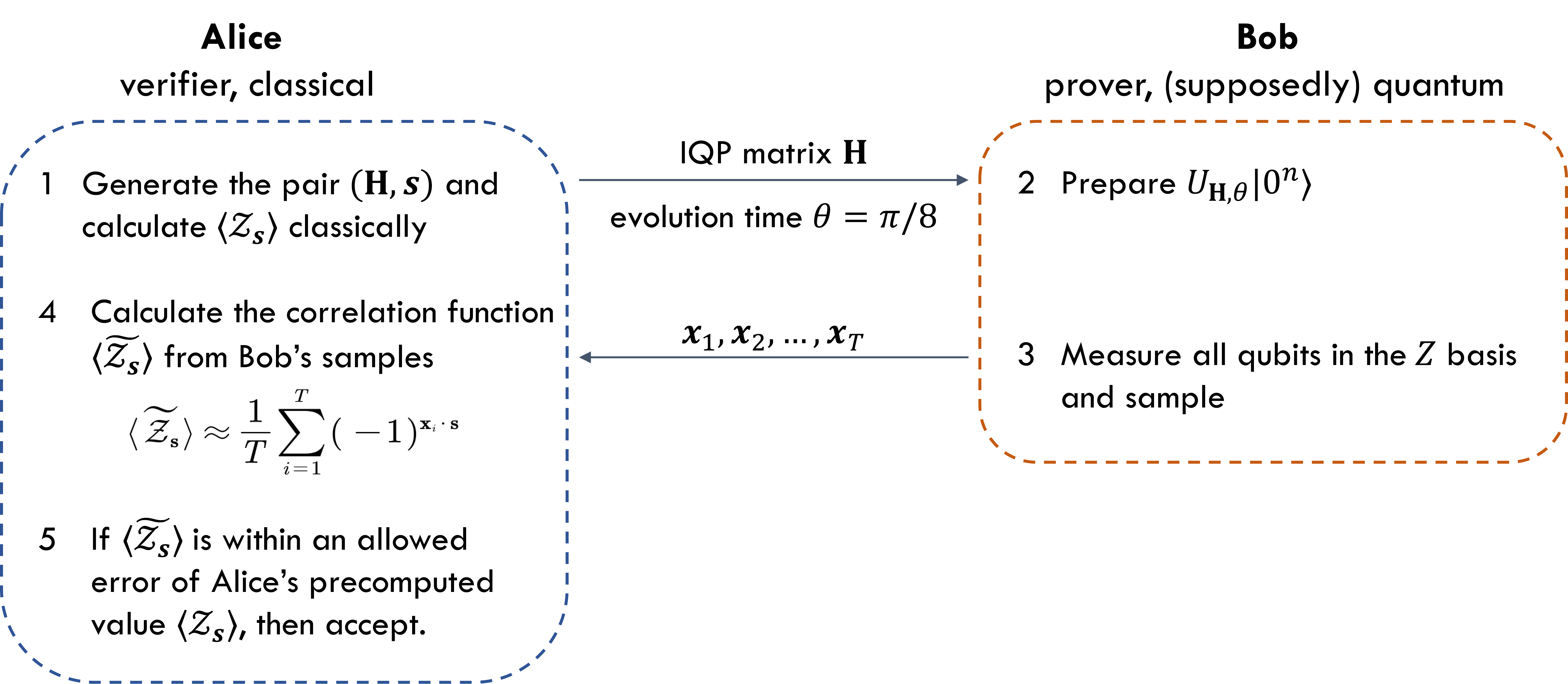}
  \caption{Schematic for IQP-based verification protocol in the case
    $\theta = \pi/8$.}\label{fig:IQP_protocol}
\end{figure*}

\paragraph{General framework.}
The general framework for the IQP-based verification protocol is shown in
Fig.~\ref{fig:IQP_protocol}.
Here, the verifier first generates the pair of an IQP matrix $\vb{H}$ and a
secret $\vb{s} \in {\{0, 1\}}^n$.
She computes the correlation function
$\ev{\calZ_{\vb{s}}} := \mel{0^n}{U_{\vb{H}, \theta}^{\dagger} \calZ_{\vb{s}} U_{\vb{H}, \theta}}{0^n}$
with respect to the secret, which can be achieved classically
efficiently~\cite{yung_anti-forging_2020,shepherd_binary_2010}.
Then, the IQP matrix $\vb{H}$ is sent to the prover,
while the secret is kept on the verifier's side.
The verifier also instructs the prover the evolution time for each term of the
Hamiltonian.
After that, the prover repeatedly prepares the state $U_{\vb{H}, \theta} \ket{0^n}$,
measures all qubits in the computational basis, and obtains a set of samples
$\vb{x}_1, \ldots, \vb{x}_T \in {\{0, 1\}}^n$, which will be sent back to the
verifier.
From the prover's measurement samples, the verifier estimates the correlation
function with respect to $\vb{s}$ by
\begin{align}\label{eq:Z_s_estimation}
  \ev*{\widetilde{\calZ_{\vb{s}}}}
  := \frac{1}{T} \sum_{i=1}^T {(-1)}^{\vb{x}_i \cdot \vb{s}} \; .
\end{align}
If the value of $\ev*{\widetilde{\calZ_{\vb{s}}}}$ is within an allowed error of
the ideal value $\ev{\calZ_{\vb{s}}}$, then the verifier accepts the result and
the prover passes the verification.

In order to ensure the effectiveness of the verification process, two key
challenges must be addressed.
The first one is to evaluate the ideal correlation function, so that the
verifier can compare it with the value obtained from the prover's measurement
outcomes.
The second one is to design a suitable pair $(\vb{H}, \vb{s})$, so that the
correlation function $\ev{\calZ_{\vb{s}}}$ is sufficiently away from zero.
Otherwise, the verifier may need to request a super-polynomial number of samples
from the prover to make the statistical error small enough, making the protocol
inefficient.

\paragraph{Evaluating the correlation function.}
To evaluate the correlation function, first note that the IQP matrix $\vb{H}$ consists of two submatrices $\vb{H}_{\vb{s}}$ and $\vb{R}_{\vb{s}}$, where $\vb{H}_{\vb{s}} \ \vb{s} = \vb{1}$ and $\vb{R}_{\vb{s}} \ \vb{s} = \vb{0}$.
Correspondingly, we have $\Ham(\vb{H}) = \Ham(\vb{H}_{\vb{s}}) + \Ham(\vb{R}_{\vb{s}})$, where $\Ham(\vb{H}_{\vb{s}})$ anti-commutes with $\calZ_{\s}$ and $\Ham(\vb{R}_{\vb{s}})$ commutes with $\calZ_{\s}$, i.e.,
\begin{align}
    \{\Ham(\vb{H}_{\vb{s}}), \calZ_{\vb{s}}\} &= 0 \ , & [\Ham(\vb{R}_{\vb{s}}), \calZ_{\vb{s}}] &= 0 \ .
\end{align}
Due to these commutation relations, the value of the correction function only
depends on the $\vb{H}_{\vb{s}}$,
i.e.~\cite{yung_anti-forging_2020,shepherd_binary_2010},
\begin{align}\label{eq:correlation_function}
  \ev{\calZ_{\s}} = \mel{0^n}{e^{i 2\theta \Ham(\vb{H}_{\vb{s}})} }{0^n} = \mel{0^n}{U_{\vb{H}_{\vb{s}}, 2\theta} }{0^n} \; .
\end{align}
Then, one can observe an intriguing point from this expression.
When $\theta = \pi/8$, the IQP circuit is non-Clifford and there is
complexity-theoretic evidence that the IQP circuits in this setting is hard to
simulate classically~\cite{bremner_average-case_2016}.
However, $U_{\vb{H}_{\vb{s}}, 2\theta} = e^{i 2\theta \Ham(\vb{H}_{\vb{s}})}$ becomes a Clifford circuit, which means that
the correlation function can be computed classically efficiently!
Indeed, $ \ev{\calZ_{\s}} = \mel{0^n}{U_{\vb{H}_{\vb{s}}, 2\theta} }{0^n}$ actually
corresponds to an amplitude of the Clifford circuit $U_{\vb{H}_{\vb{s}}, 2\theta}$.
In this way, the verifier can evaluate the correlation function efficiently
using the Gottesman-Knill algorithm~\cite{Gottesman98}.

\subsection{Main results}\label{subsec:main_results}

In this subsection, we briefly overview the main results of the paper in the
following and refer the reader to later sections for the detailed analysis.
The main objective of this work is to devise a new scheme of the IQP-based
verification protocol that strengthens its classical security and invalidates
the known attacks.
To achieve this, we start by studying the properties of the state
$U_{\vb{H}, \pi/4} \ket{0^n}$.
Given a binary matrix $\vb{H} = (\vb{c}_1, \ldots, \vb{c}_n)$, the stabilizer tableau of
$U_{\vb{H}, \pi/4} \ket{0^n}$ is given by $(\vb{G}, \vb{I}_n, \vb{r})$ (Theorem~\ref{thm:IQP_tableau}),
where the $X$ part is a Gram matrix $\vb{G} = \vb{H}^T \vb{H}$, the $Z$ part is
an identity matrix $\vb{I}_n$, and the phase column $\vb{r}$ depends on the
Hamming weight of columns in $\vb{H}$.

Next, we compute the correlation function $|\ev{\calZ_{\vb{s}}}|$ and connect it
to a property of the code $\calC_{\vb{s}}$ generated by columns of
$\vb{H}_{\vb{s}}$.
Let $\calC_{\vb{s}}^{\perp}$ be the dual code of $\calC_{\vb{s}}$,
$\calD_{\vb{s}} := \calC_{\vb{s}} \bigcap \calC_{\vb{s}}^{\perp}$ be the
self-dual intersection and consider
$g := \dim(\calC_{\vb{s}}) - \dim(\calD_{\vb{s}})$.
We then prove in Theorem~\ref{thm:cor_func_rank} that the magnitude of the
correlation function $|\ev{\calZ_{\vb{s}}}|$ is $2^{-g/2}$ if the self-dual
intersection $\calD_{\vb{s}}$ is a doubly-even code, and $0$ if it is an
unbiased even code.
Moreover, it can be proved that $\calD_{\vb{s}}$ must be in one of the two
cases, and thus the above gives a complete characterization of the magnitude of
the correlation function.
Interestingly, the $g$ number happens to be the rank of the Gram matrix
$\vb{G}_{\vb{s}} = \vb{H}_{\vb{s}}^T \vb{H}_{\vb{s}}$ associated with
$\vb{H}_{\vb{s}}$ (Proposition~\ref{prop:Gram_matrix_and_code}), which also
characterizes the overlap between $\ket{0^n}$ and
$U_{\vb{H}_{\vb{s}}, \pi/4} \ket{0^n}$ from a group-theoretic perspective
(Proposition~\ref{prop:overlap}).
\cref{thm:cor_func_rank} is an effective merging of a number of results that were first discussed by Shepherd in Ref.~\cite{shepherd_binary_2010}, with a particular focus on coding theory.
Originally, Shepherd studied IQP circuits from the perspective of binary matroids, codes, and Tutte polynomials. 

With these results established, the construction of $(\vb{H}, \vb{s})$ for the
verification protocol can be formulated as follows.
Let $\calH_{n, m, g} = \{(\vb{H}, \vb{s})\}$ be a family of pairs of an IQP
matrix $\vb{H} \in \FF_2^{m\times n}$ and a secret $\vb{s} \in \FF_2^n$ so that
the corresponding correlation function satisfies
$|\ev{\calZ_{\vb{s}}}| = 2^{-g/2}$; the precise definition is presented in
Definition~\ref{def:IQP_family}.
Here, the parameters $n$ and $m$ correspond to the size of the IQP circuits, and
$g$ corresponds to the value of the correlation function relative to the secret.
We give an efficient algorithm to sample random instances from
$\calH_{n, m, g}$, which we call the stabilizer construction
(\cref{alg:stabilizer_construction}).

Essentially, the stabilizer construction is to randomly generate an obfuscated
code and a secret, so that the corresponding correlation function is
sufficiently away from zero, to enable efficient verification.
Specifically, we reduce this problem to sampling two matrices $\vb{D}$ and
$\vb{F}$, so that $\vb{D}$ is a generator matrix of a random doubly-even code,
and $\vb{F}$ consists of $g$ random columns satisfying the constraints
$\vb{D}^T \vb{F} = \vb{0}$ and $\rank(\vb{F}^T \vb{F}) = g$.
Jointly, columns in $\vb{D}$ and $\vb{F}$ span a linear subspace that contains
the all-ones vector, which must be a codeword because
$\vb{H}_{\vb{s}}\ \vb{s} = \vb{1}$.
We give an efficient algorithm to sample such matrices $\vb{D}$ and $\vb{F}$.

A naive implementation of the circuit $U_{\vb{H}, \theta}$ requires applying gates of the form $e^{i\theta \calX_{\vb{p}}}$, which can be highly non-local.
To resolve this issue, we show that $U_{\vb{H}, \theta}$ can be compiled into an equivalent circuit consisting of CNOT circuits interleaving with rounds of $e^{i\theta X}$ gates only.
This implies that implementing the IQP circuits of our protocol requires $O(n^2/\log{n})$ CNOT gates and $O(n/\log{n})$ depth.

To explore the classical security, we consider a general class of classical
attacks based on secret extraction.
Given $(\vb{H}, \vb{s}) \in \calH_{n, m, g}$, extracting the secret $\vb{s}$
from $\vb{H}$ leads to finding the hidden code $\calC_{\vb{s}}$ from a larger
obfuscated code.
Such a hidden substructure problem seems hard for a classical computer, and we
formulate the following conjecture.

\begin{conjecture}[Hidden Structured Code (HSC) Problem]\label{conj:hsc-intro}
  For certain appropriate choices of $n, m, g$, there exists an efficiently
  samplable distribution over instances $(\vb{H}, \vb{s})$ from the family
  $\calH_{n, m, g}$, so that no polynomial-time classical algorithm can find the
  secret $\vb{s}$ given $n, m$ and $\vb{H}$ as input, with high probability over
  the distribution on $\calH_{n, m, g}$.
\end{conjecture}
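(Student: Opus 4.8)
Since \cref{conj:hsc-intro} is a hardness conjecture, an unconditional proof is not the realistic target --- such a statement would, via the IQP simulation-hardness results cited above, carry sweeping complexity-theoretic consequences. The plan is instead to establish the two ingredients that together constitute the ``positive evidence'' promised in the introduction: (i) a provable-security theorem against a broad, precisely delimited family of secret-extraction attacks --- the Linearity Attack, which contains Kahanamoku-Meyer's attack as a special case --- and (ii) a reduction embedding HSC into the well-studied landscape of average-case coding problems, so that an efficient HSC solver yields an efficient algorithm for a planted-structured-code problem believed to be hard.

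For (i), the first step is to fix the adversary's view: from $n$, $m$ and $\vb{H}$ alone one can extract the row and column spaces and the Gram matrix $\vb{G} = \vb{H}^T \vb{H}$, while everything else about the construction sits behind the random row mixing and the \emph{column redundancy} obfuscation. The next step is to characterize the set of \emph{plausible secrets}: vectors $\vb{s}'$ whose induced bipartition of the rows of $\vb{H}$ (into $\vb{s}'$-odd and $\vb{s}'$-even parts) produces a self-dual intersection that is doubly-even and of the right dimension, hence a correlation of magnitude $2^{-g/2}$. Using \cref{thm:cor_func_rank} and \cref{prop:Gram_matrix_and_code}, I would show that for the construction's distribution --- with column redundancy switched on and $g$ in the intended regime --- this set is exponentially large and indistinguishable, under the only symmetries the attack can access, from a uniformly random member, so that the true $\vb{s}$ is information-theoretically buried. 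I would then model the Linearity Attack abstractly as: hypothesize that some efficiently computable statistic of $\vb{H}$ is an affine function of $\vb{s}$ over $\FF_2$, collect the linear constraints this imposes, and solve. The key lemma is that the affine solution space of those constraints has dimension that grows with the amount of column redundancy, so the attack returns an essentially uniform element of an exponentially large set and recovers $\vb{s}$ only with exponentially small probability. Recovering Kahanamoku-Meyer's attack as the special case then serves double duty: as a correctness check, and as the explanation for the QRC failure --- there the parameters collapse that affine space to dimension $O(1)$, which is precisely the ``inappropriate parameter choice'' the introduction refers to, and column redundancy is the remedy.

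For (ii), I would build an efficient map from instances of a planted-code problem --- e.g.\ ``given a generator matrix of a random $[N,k]$ code that secretly contains a dimension-$d$ doubly-even self-orthogonal subcode together with the all-ones vector, recover that subcode'' --- to pairs $(\vb{H}, \vb{s})$ from $\calH_{n,m,g}$, with parameters chosen so that extracting $\vb{s}$ is equivalent to extracting $\calC_{\vb{s}}$ and hence the planted subcode; re-randomizing the obfuscation (fresh invertible row mixing and fresh column redundancy) would upgrade this to a statement about the samplable distribution rather than a single instance. The main obstacle --- and the reason HSC stays a conjecture rather than a theorem --- is the Gram-matrix leak: $\vb{G} = \vb{H}^T \vb{H}$ is an invariant the attacker obtains for free, and one must argue that no algebraic manipulation of $\vb{G}$ (ranks, lifts to $\mathbb{Z}_4$, Smith normal forms, quadratic-form invariants of the associated binary matroid) isolates the hidden block $\vb{H}_{\vb{s}}^T \vb{H}_{\vb{s}}$ and thereby the bipartition. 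Dimension counting handles the linear attacks; excluding \emph{all} algebraic attacks of this flavour is beyond current reach, so the honest endpoint is the conditional security statement together with the evidence assembled in (i) and (ii).
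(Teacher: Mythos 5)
You correctly recognize that \cref{conj:hsc-intro} is posed as a conjecture, so the paper never ``proves'' it; the task is to compare your evidence-gathering plan against the paper's.

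Your part (i) is essentially the paper's route, and you get the high-level diagnosis right: the paper generalizes Kahanamoku-Meyer's attack to a ``Linearity Attack,'' shows that it fails for appropriate parameter choices, and attributes the original QRC vulnerability to a parameter collapse fixable by column redundancy. But your abstract framing of the attack --- ``hypothesize some efficiently computable statistic is an affine function of $\vb{s}$ and solve the induced linear system'' --- is looser than what the paper actually exploits and is not where the paper's quantitative claim comes from. The paper's attack samples a random $\vb{d}$, forms $\vb{G}_{\vb{d}} = \vb{H}_{\vb{d}}^T\vb{H}_{\vb{d}}$, and searches $\ker(\vb{G}_{\vb{d}})$; the two load-bearing facts are (a) the bilinear identity $\vb{G}_{\vb{s}}\vb{d} = \vb{G}_{\vb{d}}\vb{s}$ (\cref{prop:correct_d}), so that $\vb{s}\in\ker(\vb{G}_{\vb{d}})$ exactly when $\vb{d}\in\ker(\vb{G}_{\vb{s}})$, which happens with probability $2^{-g}$, and (b) the expectation bound $\dim\ker(\vb{G}_{\vb{d}}) \geq n - m/2$ (\cref{thm:kernel_size}). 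The obstruction to the attack is therefore not that the affine solution set is ``information-theoretically indistinguishable'' from uniform --- the paper makes no such claim and in fact cannot, since the later Radical and Hamming's-razor attacks of Gross and Hangleiter do distinguish on structural grounds --- but that the brute-force enumeration of $\ker(\vb{G}_{\vb{d}})$ costs $2^{n-m/2}$ and that thresholding by the Gram-matrix rank cannot be verified without already knowing $g$. If you replace your indistinguishability lemma with these two concrete propositions, part (i) matches the paper.

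Your part (ii), the reduction to a planted-structured-code problem, is a direction the paper does not take. The paper simply states HSC directly and leaves random self-reducibility as an open problem; it supplements (i) not with a reduction but with numerics (\cref{fig:stab_fig}, \cref{fig:qrc_fig}), a public challenge instance, and a post-hoc analysis of the Radical, Hamming's-razor, and Double-Meyer attacks, which impose additional structural constraints (full rank of $(\vb{B},\vb{C})$, sparsity of $\vb{H}_1$ via code concatenation) beyond anything a Gram-matrix-only argument captures. Your reduction idea is a reasonable complementary contribution, but as written it conflates ``recover $\vb{s}$'' with ``recover $\calC_{\vb{s}}$'' without handling equivalent secrets and column redundancy (which the paper treats carefully in \cref{subsubsec:classical_sampling} and \cref{app:column_redundancy}), and it does not account for the Gram matrix being an attacker-visible invariant that is \emph{not} present in the standard planted-code picture --- which you yourself flag as the ``honest endpoint'' obstacle, correctly.
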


To support this conjecture, we extend Kahanamoku-Meyer's attack to target
general IQP circuits with $\theta = \pi/8$, and we call this attack the
Linearity Attack.
This generalized attack uses linear algebraic techniques to search for a
candidate set of secrets, and performs classical sampling according to this
candidate set.
By choosing appropriate parameters, random instances drawn by our stabilizer
scheme turns out to invalidate the Linearity Attack, since the search for the
candidate set takes exponential time.
As a result, the stabilizer scheme is secure against the Linearity Attack. 
Moreover, our analysis suggests that choosing a different set of parameters for
the QRC-based construction can fix the recent loophole in the original
Shepherd-Bremner scheme.
This refutes the Claim~3.1 in Ref.~\cite{kahanamoku-meyer_forging_2023}, which states that the QRC-based construction can be efficiently broken classically in general.
{In response to the new classical attacks by Gross and Hangleiter~\cite{gross_secret_2023}, we take their analysis as a guideline and suggest a possible construction under the framework that we developed.}

\subsection{Related works}\label{subsec:related_works}

The first explicit construction recipe of $(\vb{H}, \vb{s})$ for the case
$\theta = \pi/8$ is given by Shepherd and
Bremner~\cite{shepherd_temporally_2009}.
In the their construction, $\vb{H}_{\vb{s}}$ is constructed from a specific
error-correcting code, the quadratic-residue code
(QRC)~\cite{macwilliams1977book}, which guarantees that the correlation function
is always $1/\sqrt{2}$, a value sufficiently away from zero as desired.
Formally, let $\calH^{\QRC}_{n, m, q} = \{(\vb{H}, \vb{s})\}$ be a family of
pairs of an IQP matrix $\vb{H} \in \FF_2^{m\times n}$ and a secret $\vb{s}$ so
that $\vb{H}_{\vb{s}}$ generates a QRC of length $q$ (up to row permutations)
and $\vb{H}$ is of full column rank.
What the Shepherd-Bremner construction achieves is to randomly sample instances
from $\calH^{\QRC}_{n, m, q}$, where $n = (q+3)/2$.

However, it turns out that this set of parameters can only give easy instances.
In Ref.~\cite{kahanamoku-meyer_forging_2023}, Kahanamoku-Meyer gave a
secret-extraction attack (KM attack) against the Shepherd-Bremner construction.
With his attack, a classical prover can find the secret $\vb{s}$ efficiently
with high probability.
Once the secret is found, the prover can easily pass the test by generating
appropriately biased data in the direction of the secret, without the need of
actually simulating the IQP circuits.
In Ref.~\cite{yung_anti-forging_2020}, Yung and Cheng proposed to circumvent the
attack by starting with a small randomized IQP circuit and using the obfuscation
technique in the Shepherd-Bremner scheme to hide that small IQP
circuit~\cite{shepherd_temporally_2009}.
The verifier cannot directly use a fully randomized IQP circuit because the
correlation function will be close to zero for most choices of secrets in that
case, due to the anti-concentration property of IQP
circuits~\cite{bremner_average-case_2016}.
Small correlation functions make it difficult for the verifier to distinguish
between an honest quantum prover and a cheating classical prover outputting
random bit strings.
This poses a challenge, to balance the security given by randomized
constructions with the scale of the correlation functions that enables easy
verification.
This challenge is not fully resolved by the heuristic construction in
Ref.~\cite{yung_anti-forging_2020}.

In addition, Shepherd studied IQP circuits with tools of binary matroids and
Tutte polynomials, and derived some related results to this
work~\cite{shepherd_binary_2010}.
Specifically, the amplitude of the IQP circuit $\mel{0^n}{e^{i\theta H}}{0^n}$
is expressed in terms of the normalized Tutte polynomial, and its computational
complexity is studied in various cases.
When $\theta = \pi/4$, the magnitude of the related Tutte polynomial can be
efficiently evaluated using Vertigan's algorithm~\cite{vertigan_bicycle_1998},
which is similar to the Gottesman-Knill algorithm~\cite{Gottesman98}.
This idea was further explored by Mann~\cite{mann_simulating_2021}, which
related computing the amplitude to the bicycle dimension and the Brown's
invariant using results of Ref.~\cite{pendavingh_evaluation_2014}.
But when $\theta = \pi/8$ (and any other values except for the multiple of
$\pi/4$), computing the amplitude is $\#P$-hard in the worst case.
Moreover, Ref.~\cite{shepherd_binary_2010} also derived similar relation to
Eq.~\eqref{eq:correlation_function}, in the language of the normalized Tutte
polynomial.
Therefore, it was proved that the correlation function is efficiently classical
computable when $\theta = \pi/8$, and suggests that this could be used to
perform hypothesis test for access to quantum computers, although no new
construction was proposed in Ref.~\cite{shepherd_binary_2010}.

\section{Preliminaries}\label{sec:preliminaries}

\subsection{Notations}

We mainly work on the field $\FF_2$.
We use bold capital letters such as $\vb{H}$ to denote a matrix and bold
lower-case letters such as $\vb{s}$ to denote a vector.
If not stated otherwise, a vector is referred to as a column vector, and a row
vector will be added the transpose symbol, like $\vb{p}^T$.
The (Hamming) weight of a vector $\vb{x}$ is denoted as $|\vb{x}|$.
The inner product between two vectors $\vb{x}$ and $\vb{s}$ is denoted as
$\vb{x} \cdot \vb{s}$; sometimes we will also use $\vb{H} \cdot \vb{s}$ to
denote the matrix multiplication.
We use $\col(\vb{H})$ and $\row(\vb{H})$ to denote the set of columns and rows of a
matrix $\vb{H}$, respectively.
We use $c(\vb{\vb{H}})$ and $r(\vb{H})$ to denote the number of columns and the
number of rows of a matrix $\vb{H}$, respectively.
{The column space of a matrix $\vb{H}$ is denoted by $\range(\vb{H})$.}
The rank of a matrix $\vb{H}$ is denoted as $\rank(\vb{H})$.
{If not otherwise stated, full rank means full column rank.}
We use $\ker(\vb{H})$ to denote the kernel space of $\vb{H}$, i.e., the space of
vectors $\vb{v}$ such that $\vb{Hv} = \vb{0}$.
We call two square matrices $\vb{A}$ and $\vb{B}$ congruent if there exists an
invertible matrix $\vb{Q}$ satisfying $\vb{A} = \vb{Q}^T \vb{B} \vb{Q}$, denoted
as $\vb{A} \sim_c \vb{B}$.
We call such an transformation \emph{congruent transformation}.

The all-ones vector will be denoted as $\vb{1}$, with its dimension inspected
from the context; the similar rule applies to the all-zeros vector (or matrix)
$\vb{0}$.
We define $[n] := \{1, 2, \ldots, n\}$.
{Given $S \subseteq [n]$, define $\vb{1}_S$ to be a vector with 1's at the indices in $S$ and 0's elsewhere.}
The $n \times n$ identity matrix is denoted as $\vb{I}_n$.
For a vector $\vb{x}$, we define its support as $\supp(\vb{x}) := \{j: x_j = 1\}$.
If not stated otherwise, a full-rank matrix is referred to a matrix with full
column rank.

We denote the linear subspace spanned by a set of vectors
$\{\vb{c}_1, \ldots, \vb{c}_k\}$ as
$\langle \vb{c}_1, \ldots, \vb{c}_k \rangle$.
Given linear subspaces $V = \langle \vb{c}_1, \ldots, \vb{c}_l \rangle$ and
$U = \langle \vb{c}_1, \ldots, \vb{c}_k \rangle$ with $k < l$, we denote the
complement subspace of $U$ in $V$ with respect to the basis
$\{ \vb{c}, \ldots, \vb{c}_l \}$ by $(V \slash U)_{\vb{c}_1, \ldots, \vb{c}_l}$;
namely,
$(V \slash U)_{\vb{c}_1, \ldots, \vb{c}_l} := \langle \vb{c}_{k+1}, \ldots, \vb{c}_l \rangle$.
Usually, we are not interested in a specific basis, so we use $V \slash U$ to
denote a random complement subspace of $U$ in $V$, i.e.,
$V \slash U \gets_{\mathcal{R}} \{ \langle \vb{c}_{k+1}, \ldots, \vb{c}_l \rangle : V = \langle \vb{c}_1, \ldots, \vb{c}_l \rangle, U = \langle \vb{c}_1, \ldots, \vb{c}_k \rangle \}$,
where $\gets_{\mathcal{R}}$ denotes a random instance from a set.
We let $V \backslash U := \{ \vb{v}: \vb{v} \in V, \vb{v} \not\in U\}$ be the
ordinary complement of two sets.

\subsection{Stabilizer formalism}\label{subsec:stab_formalism}

\paragraph{Overlap of two stabilizer states.}

Given two stabilizer states $\ket{\psi}$ and $\ket{\phi}$, let
$\Stab(\ket{\psi})$ and $\Stab(\ket{\phi})$ be their stabilizer groups,
respectively, which are subgroups of the $n$-qubit Pauli group.
Let $\{P_1, \ldots, P_n\}$ be the generators of $\Stab(\ket{\psi})$ and
$\{Q_1, \ldots, Q_n\}$ be those of $\Stab(\ket{\phi})$.
Note that the set of generators is not unique.
Then, the overlap $|\ip{\psi}{\phi}|$ is determined by their stabilizer
groups~\cite{aaronson_improved_2004}.

\begin{proposition}[\cite{aaronson_improved_2004}]\label{prop:overlap}
  Let $\ket{\psi}$ and $\ket{\phi}$ be two stabilizer states.
  Then, $\ip{\psi}{\phi} = 0$ if their stabilizer groups contain the same Pauli
  operator of the opposite sign.
  Otherwise, $\abs{\ip{\psi}{\phi}} = 2^{-g/2}$, where $g$ is the minimum number
  of different generators over all possible choices.
\end{proposition}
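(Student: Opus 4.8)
The plan is to reduce the overlap computation to a clean normal form for the pair of stabilizer groups, and then read off the answer from that normal form. First I would invoke the standard fact that any stabilizer group has a basis-like description via its (symplectic) generating set, and that the overlap $|\langle\psi|\phi\rangle|$ is invariant under global Cliffords; so I am free to apply a Clifford $V$ that maps $\Stab(\ket{\psi})$ to the ``canonical'' group $\langle Z_1,\ldots,Z_n\rangle$, i.e. $V\ket{\psi}=\ket{0^n}$ up to a phase. After this, I only need to understand a general stabilizer group $\Stab(V\ket{\phi})$ relative to the computational-basis stabilizer. Write $g$ for the number of generators of $\Stab(V\ket{\phi})$ that must contain an $X$ or $Y$ component in \emph{any} choice of generating set — equivalently, $g = n - \dim(\Stab(V\ket{\phi})\cap\langle Z_1,\ldots,Z_n\rangle)$ as a subspace over $\FF_2$ (this is forced and basis-independent, being a dimension of an intersection of subspaces of the symplectic space). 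The claim to prove is then: $\langle 0^n|V^\dagger V|\phi\rangle = \langle 0^n | (V\ket{\phi})\rangle$ has magnitude $2^{-g/2}$ unless some element of $\Stab(V\ket{\phi})$ coincides with an element of $\langle Z_1,\ldots,Z_n\rangle$ up to sign $-1$, in which case $\langle 0^n|(V\ket{\phi})\rangle = 0$.

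Next I would establish the vanishing case. If $P\in\Stab(\ket{\psi})\cap\Stab(\ket{\phi})$ but with opposite signs, then $P\ket{\psi}=+\ket{\psi}$ and $P\ket{\phi}=-\ket{\phi}$ (or vice versa), so $\langle\psi|\phi\rangle = \langle\psi|P^\dagger P|\phi\rangle = -\langle\psi|\phi\rangle$, forcing $\langle\psi|\phi\rangle=0$; this is immediate and Clifford-invariance is not even needed here. For the nonvanishing case, after the reduction above, pick a generating set $Q_1,\ldots,Q_n$ of $\Stab(V\ket{\phi})$ in which exactly $g$ of the generators, say $Q_1,\ldots,Q_g$, have nontrivial $X$-type support and the remaining $Q_{g+1},\ldots,Q_n$ lie in $\langle Z_1,\ldots,Z_n\rangle$ (possible by Gaussian elimination on the symplectic tableau). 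One can further normalize, by multiplying the $Q_i$ among themselves and conjugating by a permutation/CNOT-type Clifford that fixes $\ket{0^n}$, so that $Q_1,\ldots,Q_g$ are ``spread out'' — each introducing one genuinely new $X$ coordinate — which exhibits $V\ket{\phi}$ as a uniform superposition over an affine subspace (a coset of a $g$-dimensional $\FF_2$-subspace), with a definite phase pattern and overall normalization $2^{-g/2}$. The overlap with $\ket{0^n}$ is then either $0$ (if $0^n$ is not in the support, i.e. the affine subspace misses the origin, which is exactly the opposite-sign obstruction coming from the $Z$-type generators $Q_{g+1},\ldots,Q_n$) or a single phase times $2^{-g/2}$; taking magnitudes gives the claim.

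The main obstacle I anticipate is the bookkeeping in the normalization step: showing that after Clifford reduction a stabilizer group with $g$ ``$X$-active'' generators really does describe a uniform superposition over a $g$-dimensional affine subspace of the computational basis, with no hidden cancellations among the phases that would shrink the magnitude below $2^{-g/2}$. This is essentially the structure theorem for stabilizer states (every stabilizer state is, up to a Clifford that fixes $\ket{0^n}$, a weighted coset state $2^{-g/2}\sum_{\vb{y}\in\vb{a}+W} i^{\ell(\vb{y})}(-1)^{q(\vb{y})}\ket{\vb{y}}$), so I would either cite it or sketch the inductive argument that peels off one $X$-active generator at a time, each step multiplying the support size by $2$ and the normalization by $2^{-1/2}$. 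The other point requiring a little care is confirming that ``$g$ is the minimum number of differing generators over all choices'' matches the intersection-dimension quantity $n-\dim(\Stab(\ket{\psi})\cap\Stab(\ket{\phi}))$ once signs are ignored, and that restoring signs only toggles between the $2^{-g/2}$ and $0$ cases without any intermediate values — which follows because the $Z$-type part of the combined tableau is a classical linear system whose solvability is all-or-nothing.
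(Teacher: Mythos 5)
Your proof is correct, and it takes a genuinely different route from the paper's. The paper computes $|\ip{\psi}{\phi}|^2 = \Tr(\dyad{\psi}\dyad{\phi})$ directly by writing each density matrix as the product of projectors $\tfrac{I+P_i}{2}$, absorbing the $n-g$ matching generators, and then observing that every non-identity term in the expansion of $(I+Q_1)\cdots(I+Q_g)$ has expectation zero in $\ket\psi$ because it anticommutes with some element of $\Stab(\ket\psi)$. That argument is purely algebraic: no Clifford reduction and no appeal to a structure theorem are needed. Your argument instead normalizes $\ket\psi$ to $\ket{0^n}$ by a Clifford, characterizes $g$ as $n - \dim\bigl(\langle Z_1,\dots,Z_n\rangle \cap \Stab(V\ket\phi)\bigr)$, and reads the overlap off the affine-coset normal form of $V\ket\phi$. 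Both are valid; the paper's is more self-contained, while yours is more geometric and gives extra information (the explicit computational-basis support). Two points you flag as ``requiring care'' are worth spelling out so the argument closes cleanly: (i) in the nonvanishing case all $Z$-type elements of $\Stab(V\ket\phi)$ must carry $+1$ sign (this is exactly the negation of the opposite-sign hypothesis), so the signed group intersection and the $\FF_2$-row-space intersection coincide, which is what makes your intersection-dimension definition of $g$ agree with the proposition's minimum-differing-generators count; and (ii) the ``no hidden cancellation'' worry is resolved simply by normalization — a uniform superposition over $2^g$ orthonormal basis states with unit-modulus phases must have each amplitude of magnitude exactly $2^{-g/2}$, so no inductive peeling is actually required once you accept the coset-support claim.
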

For completeness, we provide an alternative proof in Appendix~\ref{app:overlap}.
In particular, this implies that
$\ev{\calZ_{\vb{s}}} = \mel{0^n}{U_{\vb{H}_{\vb{s}}, \pi/4}}{0^n}$ has magnitude
either 0 or $2^{-g/2}$, where $n-g$ is the maximum number of independent
Pauli-$Z$ products in the stabilizer group of $U_{\vb{H}_{\vb{s}}, \pi/4} \ket{0^n}$.

\paragraph{Tableau representation.}
A stabilizer state or circuit can be represented by a stabilizer tableau, which
is an $n$-by-$(2n+1)$ binary matrix.
The idea is to use $2n+1$ bits to represent each generator of the stabilizer
group.
First, a single-qubit Pauli operator can be represented by $(x, z)$; $(0, 0)$
corresponds to $I$, $(1, 0)$ corresponds to $X$, $(0, 1)$ corresponds to $Z$ and
$(1, 1)$ corresponds to $Y$.
For stabilizer generators, the phase can only be $\pm 1$ since the stabilizer
group does not contain $-I$.
So, one can use an extra bit $r$ to represent the phase; $r = 0$ is for $+1$
while $r = 1$ is for $-1$.
Then, an $n$-qubit stabilizer generator can be represented by $2n+1$ bits,
\begin{align}
    (x_1, \ldots, x_n, z_1, \ldots, z_n, r) \; .
\end{align}
For example, the vector for $-X_1 Z_2$ is $(1, 0, 0, 1, 1)$.
Any stabilizer state can be specified by $n$ stabilizer generators, which
commute with each other.
Therefore, the state is associated with the following tableau,
\begin{align}
  \begin{pmatrix}
    x_{11} & \cdots & x_{1n} & z_{11} & \cdots & z_{1n} & r_1 \\
    \vdots & \ddots & \vdots & \vdots & \ddots & \vdots & \vdots \\
    x_{n1} & \cdots & x_{nn} & z_{n1} & \cdots & z_{nn} & r_n \\
  \end{pmatrix} \; ,
\end{align}
whose rows define the stabilizer generators.
The first $n$ columns are called the $X$ part, the $(n+1)$-th to $2n$-th columns
are called the $Z$ part, and the last column are called the phase column of the
stabilizer tableau.
As an example, the $\ket{0^n}$ state is stabilized by
$\langle Z_1, \ldots, Z_n \rangle$, and its stabilizer tableau is given by,
\begin{align}\label{eq:standard_tab_0}
    \begin{pmatrix}
        0 & \cdots & 0 & 1 & \cdots & 0 & 0 \\
        \vdots & \ddots & \vdots & \vdots & \ddots & \vdots & \vdots \\
        0 & \cdots & 0 & 0 & \cdots & 1 & 0 \\
    \end{pmatrix} \; .
\end{align}
We will call it the standard stabilizer tableau of $\ket{0^n}$. 

\subsection{Coding theory}\label{subsec:coding_theory}

We present some results regarding coding theory here, with the proof presented
in Appendix~\ref{app:coding_theory}.
We only consider linear codes over $\FF_2$ in this paper. 
A linear code, or simply a code $\calC$ of length $m$ is a linear subspace of
$\FF_2^m$.
One can use a generator matrix $\vb{H}$ to represent a code, with its columns
spanning the codespace $\calC$.
The dual code is defined as
$\calC^{\perp} := \{ \vb{v} \in \FF_2^m: \vb{v} \cdot \vb{w} = 0 \text{ for
} \vb{w} \in \calC\}$.
The dual code of a linear code is also a linear code.
It is not hard to see that $\calC^{\perp} = \ker(\vb{H}^T)$, which implies
$\dim(\calC) + \dim(\calC^{\perp}) = m$.
A code $\calC$ is weakly self-dual if $\calC \subseteq \calC^{\perp}$ and
(strictly) self-dual if $\calC = \calC^{\perp}$, in which case
$\dim(\calC) = m/2$.

A code $\calC$ is an even code if all codewords have even Hamming weight and a
doubly-even code if all codewords have Hamming weight a multiple of $4$.
It is not hard to show that a doubly-even code is a weakly self-dual code.
Moreover, we have the following proposition.
\begin{proposition}\label{prop:all_one_vector}
  The all-ones vector is a codeword of $\calC$ if and only if its dual code
  $\calC^{\perp}$ is an even code.
\end{proposition}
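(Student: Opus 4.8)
The claim is that $\vb{1} \in \calC$ if and only if $\calC^{\perp}$ is an even code. The natural approach is to unwind both conditions purely in terms of inner products over $\FF_2$, using the defining relation $\calC^{\perp} = \{\vb{v} : \vb{v}\cdot\vb{w} = 0 \text{ for all } \vb{w}\in\calC\}$ and the elementary fact that the Hamming weight of a binary vector $\vb{v}$ satisfies $|\vb{v}| \equiv \vb{v}\cdot\vb{v} \pmod 2$, i.e. $\vb{v}\cdot\vb{1} = |\vb{v}| \bmod 2$ where the inner product is taken in $\FF_2$. So ``$\calC^{\perp}$ is an even code'' is the statement that $\vb{v}\cdot\vb{1} = 0$ for every $\vb{v}\in\calC^{\perp}$, which is exactly the statement $\vb{1} \in (\calC^{\perp})^{\perp}$.

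First I would record the standard double-dual identity $(\calC^{\perp})^{\perp} = \calC$ for linear codes over $\FF_2$; this follows from the dimension count $\dim\calC + \dim\calC^{\perp} = m$ already stated in the excerpt (applied twice gives $\dim (\calC^\perp)^\perp = \dim \calC$), together with the trivial inclusion $\calC \subseteq (\calC^{\perp})^{\perp}$. With that in hand the proof is a short chain of equivalences: $\calC^{\perp}$ is even $\iff$ every $\vb{v}\in\calC^{\perp}$ has $\vb{v}\cdot\vb{1} = 0$ in $\FF_2$ $\iff$ $\vb{1}\in(\calC^{\perp})^{\perp} = \calC$. For the middle step I would spell out the weight-parity observation: for $\vb{v}\in\FF_2^m$, $\vb{v}\cdot\vb{1} = \sum_j v_j = |\vb{v}| \bmod 2$, so $\vb{v}\cdot\vb{1} = 0$ in $\FF_2$ is precisely the condition that $|\vb{v}|$ is even.

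There is no real obstacle here — the statement is essentially a reformulation of the double-dual identity through the weight-parity dictionary. The only point requiring a sentence of care is the justification of $(\calC^{\perp})^{\perp} = \calC$, which one could either cite as standard or derive from the dimension formula as above; I would do the latter to keep the argument self-contained. If a generator-matrix phrasing is preferred, one can alternatively note that $\calC^{\perp}$ being even means $\vb{1}$ is orthogonal to a spanning set of $\calC^{\perp}$, hence to all of $\calC^{\perp}$, which is the same as $\vb{1}\in\calC$; this avoids invoking double duality explicitly but amounts to the same computation.
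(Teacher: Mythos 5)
Your argument is the same as the paper's (both directions rest on the weight-parity identity $\vb{v}\cdot\vb{1} \equiv |\vb{v}| \pmod 2$ and the double-dual fact $(\calC^{\perp})^{\perp}=\calC$), with you simply making explicit the dimension-count justification of double duality that the paper uses tacitly in its converse step. Correct, and no meaningful difference in approach.
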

We define the notion of (un)biased even codes, which will be useful in the
stabilizer characterization of IQP circuits
(\cref{sec:stabilizer_characterization_IQP}).
\begin{definition}
  A code $\calC$ is called a \emph{biased even code} if it is an even code where
  the number of codewords with Hamming weight $0$ and $2$ modulo $4$ are not
  equal.
  It is called an \emph{unbiased even code} otherwise.
\end{definition}

Let the (maximum) self-dual subspace of $\calC$ be
$\calD := \calC \bigcap \calC^{\perp}$, which is itself a weakly self-dual code.
Note that $\calD$ must be an even code, since all codewords are orthogonal to
themselves and hence have even Hamming weight.
We have the following lemma.

\begin{lemma}\label{lemma:self_dual_even_code}
  A weakly self-dual even code is either a doubly-even code or an unbiased even
  code.
  For the former case, all columns of its generator matrix have weight 0 modulo
  4 and are orthogonal to each other.
  For the latter case, there is at least one column in the generator matrix with
  weight 2 modulo 4.
\end{lemma}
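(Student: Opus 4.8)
The plan is to work directly with a generator matrix $\vb{D} = (\vb{d}_1, \ldots, \vb{d}_k)$ of the weakly self-dual even code $\calD$ and analyze the pairwise inner products and individual weights of its columns modulo $4$. First I would record the two basic constraints: since $\calD$ is weakly self-dual, every pair of columns satisfies $\vb{d}_i \cdot \vb{d}_j = 0$ over $\FF_2$, i.e.\ $|\vb{d}_i \cap \vb{d}_j|$ is even; and since $\calD$ is an even code, each $|\vb{d}_i|$ is even. The key arithmetic identity is the inclusion--exclusion formula for the weight of a sum, $|\vb{d}_i + \vb{d}_j| = |\vb{d}_i| + |\vb{d}_j| - 2|\vb{d}_i \cap \vb{d}_j|$, and more generally the expansion of $|\vb{d}_{i_1} + \cdots + \vb{d}_{i_t}|$ modulo $4$ in terms of the $|\vb{d}_{i_a}|$ (mod $4$) and the pairwise overlaps $|\vb{d}_{i_a} \cap \vb{d}_{i_b}|$ (mod $2$); this is the standard ``shadow''/parity identity that controls weights mod $4$ in even codes.

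The dichotomy then comes from a case split on whether all columns have weight $0 \bmod 4$. \emph{Case 1: every column of $\vb{D}$ has weight $\equiv 0 \pmod 4$.} In this case I would show $\calD$ is doubly-even: first, the pairwise overlaps $|\vb{d}_i \cap \vb{d}_j|$ are not merely even but can be taken (after possibly adjusting the generating set, or directly) to be $\equiv 0 \pmod 4$ as well — actually the cleaner route is to invoke the known fact that a binary code all of whose generators have weight divisible by $4$ and which are pairwise orthogonal is doubly-even (this is exactly Proposition/Lemma-type statement about doubly-even codes being closed under sums; it follows by induction on $t$ using the weight-of-sum identity, where at each step the weight stays $0 \bmod 4$ because both the new generator's weight and twice the overlap are $0 \bmod 4$, using that the overlap of a doubly-even-so-far codeword with a weight-$0$-mod-$4$ column is even). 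Hence all codewords have weight $\equiv 0 \pmod 4$, and simultaneously I note that the claimed extra structure — columns orthogonal to each other and of weight $0 \bmod 4$ — holds by hypothesis of this case. \emph{Case 2: some column $\vb{d}_{i_0}$ has weight $\equiv 2 \pmod 4$.} Then $\vb{d}_{i_0}$ is itself a codeword of weight $2 \bmod 4$, so $\calD$ contains codewords of both residues; to conclude $\calD$ is \emph{unbiased}, I would exhibit a weight-preserving (mod-$4$-residue-swapping) bijection on $\calD$: the map $\vb{v} \mapsto \vb{v} + \vb{d}_{i_0}$ is an involution on $\calD$, and since $|\vb{v} + \vb{d}_{i_0}| = |\vb{v}| + |\vb{d}_{i_0}| - 2|\vb{v}\cap\vb{d}_{i_0}|$ with $|\vb{d}_{i_0}| \equiv 2$ and $|\vb{v}\cap\vb{d}_{i_0}|$ even (weak self-duality), we get $|\vb{v}+\vb{d}_{i_0}| \equiv |\vb{v}| + 2 \pmod 4$, so this bijection swaps the weight-$0$-mod-$4$ codewords with the weight-$2$-mod-$4$ codewords, forcing the two counts to be equal.

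The main obstacle I anticipate is Case~1 — specifically, upgrading ``each generator has weight $0\bmod 4$ and generators are pairwise $\FF_2$-orthogonal'' to ``every codeword has weight $0 \bmod 4$.'' The subtlety is that $\FF_2$-orthogonality only says pairwise overlaps are even, not divisible by $4$, so the induction has to be set up carefully: one tracks the quantity $|\vb{v}| \bmod 4$ for partial sums $\vb{v}$, and uses that if $\vb{v}$ already has weight $0 \bmod 4$ then for any column $\vb{d}$ in the (weakly self-dual) code, $|\vb{v} \cap \vb{d}| \equiv \tfrac12(|\vb{v}| + |\vb{d}| - |\vb{v}+\vb{d}|)$, combined with orthogonality $|\vb{v}\cap\vb{d}| \equiv 0 \pmod 2$, to pin down $|\vb{v}+\vb{d}| \bmod 4$. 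I would present this inductive argument in the appendix (as the excerpt says the coding-theory proofs are deferred to Appendix~\ref{app:coding_theory}), and also double-check the boundary claim that such a $\calD$ really has \emph{all} columns orthogonal with weight $0 \bmod 4$ — which in Case~1 is immediate, while in Case~2 the presence of a weight-$2$-mod-$4$ column is exactly what the lemma asserts. Finally I would remark that the two cases are exhaustive and mutually exclusive, completing the ``either/or'' claim.
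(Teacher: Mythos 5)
Your proposal is correct and takes essentially the same route as the paper: the paper also reduces everything to the elementary weight-of-sum identity $|\vb{c}_1 + \vb{c}_2| = |\vb{c}_1| + |\vb{c}_2| - 2|\vb{c}_1 \cap \vb{c}_2|$ combined with weak self-duality (every pair of codewords is $\FF_2$-orthogonal, so every overlap is even), isolated as its Lemma~\ref{lemma:sum_of_codewords}, then iterates this over a basis. The only cosmetic difference is in the unbiased case: the paper counts the number of codewords of each residue by noting that $|\vb{c}| \bmod 4$ is determined by the parity of the number of weight-$2$-mod-$4$ generators used, i.e.\ by the value of a nonzero $\FF_2$-linear functional on the coefficient vector, whereas you exhibit the involution $\vb{v} \mapsto \vb{v} + \vb{d}_{i_0}$ that swaps the two residue classes; these are two phrasings of the same $2^{d-1}$/$2^{d-1}$ counting fact, and your bijection is if anything a cleaner way to state it. Your anticipated ``obstacle'' in Case~1 is handled exactly as you describe and exactly as the paper does: the partial sums are themselves codewords of the weakly self-dual code, hence orthogonal to the next generator, so $2|\vb{v}\cap\vb{d}| \equiv 0 \pmod 4$ and the residue never leaves $0$.
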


One can apply a basis change to the generator matrix $\vb{H}$, resulting in
$\vb{HQ}$, where $\vb{Q}$ is an invertible matrix.
This will not change the code $\calC$.
Define the Gram matrix of the generator matrix by $\vb{G} := \vb{H}^T \vb{H}$.
A basis change on $\vb{H}$ transforms $\vb{G}$ into $\vb{Q}^T \vb{G} \vb{Q}$,
which is a congruent transformation.
The rank of Gram matrix is also an invariant under basis change.
It can be related to the code $\calC$ in the following way.

\begin{proposition}\label{prop:Gram_matrix_and_code}
  Given a generator matrix $\vb{H}$, let its Gram matrix be
  $\vb{G} = \vb{H}^T \vb{H}$ and the generated code be $\calC$.
  Let $\calD = \calC \bigcap \calC^{\perp}$, where $\calC^{\perp}$ is the dual
  code of $\calC$.
  Then, $\rank(\vb{G}) = \dim(\calC) - \dim(\calD)$.
\end{proposition}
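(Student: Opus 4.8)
The plan is to identify the kernel of the Gram matrix $\vb{G} = \vb{H}^T\vb{H}$ as a linear map on coefficient vectors, and then show that its image under $\vb{H}$ is exactly the self-dual intersection $\calD = \calC \cap \calC^\perp$. Write $\vb{H} \in \FF_2^{m\times k}$ with $k = c(\vb{H})$, so that $\vb{H}$ acts as a linear map $\FF_2^k \to \FF_2^m$ whose image is the codespace $\calC$. First I would observe that, by the rank--nullity theorem applied to $\vb{G}$ acting on $\FF_2^k$, we have $\rank(\vb{G}) = k - \dim\ker(\vb{G})$, so it suffices to prove $\dim\ker(\vb{G}) = k - \dim(\calC) + \dim(\calD)$, equivalently $\dim\ker(\vb{G}) = \dim\ker(\vb{H}) + \dim(\calD)$, since $\dim\ker(\vb{H}) = k - \dim(\calC)$.

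The key step is to analyze $\ker(\vb{G})$ directly. For $\vb{v}\in\FF_2^k$, note that $\vb{G}\vb{v} = \vb{H}^T(\vb{H}\vb{v})$, so $\vb{v}\in\ker(\vb{G})$ iff $\vb{H}\vb{v} \in \ker(\vb{H}^T) = \calC^\perp$ (using the identity $\calC^\perp = \ker(\vb{H}^T)$ noted in the Coding Theory subsection). But $\vb{H}\vb{v}$ is automatically in $\calC = \range(\vb{H})$. Hence $\vb{v}\in\ker(\vb{G})$ iff $\vb{H}\vb{v}\in\calC\cap\calC^\perp = \calD$. In other words, $\ker(\vb{G}) = \vb{H}^{-1}(\calD)$, the preimage of $\calD$ under the linear map $\vb{H}$. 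Since $\calD \subseteq \calC = \range(\vb{H})$, this preimage is a coset-union that surjects onto $\calD$, so $\dim \ker(\vb{G}) = \dim\ker(\vb{H}) + \dim(\calD)$ by the first isomorphism theorem (the restriction of $\vb{H}$ to $\ker(\vb{G})$ has image $\calD$ and kernel $\ker(\vb{H})$). Combining with rank--nullity for $\vb{G}$ gives
\begin{align}
  \rank(\vb{G}) &= k - \dim\ker(\vb{G}) = k - \dim\ker(\vb{H}) - \dim(\calD) \notag \\
  &= \dim(\calC) - \dim(\calD),
\end{align}
as claimed.

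I do not anticipate a serious obstacle here; the argument is essentially a bookkeeping exercise with the first isomorphism theorem, and the only subtlety worth stating carefully is that all arithmetic is over $\FF_2$, so that ``$\vb{H}^T(\vb{H}\vb{v}) = \vb{0}$ iff $\vb{H}\vb{v}$ is orthogonal to every column of $\vb{H}$'' holds without any sign or characteristic issues, and that $\calC^\perp = \ker(\vb{H}^T)$ is exactly the earlier-quoted identity. The one place to be a little careful is ensuring $\calD \subseteq \range(\vb{H})$ so that $\vb{H}$ really does map $\ker(\vb{G})$ \emph{onto} $\calD$ and not merely into it; but this is immediate since $\calD \subseteq \calC = \range(\vb{H})$ by definition of $\calD$.
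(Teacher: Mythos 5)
Your proof is correct and rests on the same key observation as the paper's: that $\vb{v}\in\ker(\vb{G})$ iff $\vb{H}\vb{v}\in\calC\cap\calC^{\perp}=\calD$, i.e.\ $\ker(\vb{G})=\vb{H}^{-1}(\calD)$. The paper proves the dimension count by first establishing it in the full-column-rank case (where the map $\vb{a}\mapsto\vb{H}\vb{a}$ is a bijection between $\FF_2^n$ and $\calC$, so $\dim\ker(\vb{G})=\dim\calD$ directly), and then reducing the rank-deficient case to the full-rank one via a change of basis $\vb{H}\vb{Q}=(\vb{H}',\vb{0})$, using the congruence-invariance of $\rank(\vb{G})$. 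You instead dispense with the case split entirely: applying the first isomorphism theorem to the restriction of $\vb{H}$ to $\ker(\vb{G})$ (which has kernel $\ker(\vb{H})$ and image $\calD$) gives $\dim\ker(\vb{G})=\dim\ker(\vb{H})+\dim\calD$ uniformly, and rank--nullity then closes the argument. This is a modest streamlining, avoiding both the separate congruence-invariance lemma and the reduction step, while keeping the same underlying identification of $\ker(\vb{G})$ as the preimage of $\calD$.
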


\subsection{Shepherd-Bremner construction}\label{subsec:SB_construction}

In the Shepherd-Bremner construction, the part $\vb{H}_{\vb{s}}$ is constructed
from the quadratic-residue code.
The quadratic residue code is a cyclic code.
Its cyclic generator has 1 in the $j$-th position if $j$ is a non-zero quadratic
residue modulo $q$.
The size parameter $q$ of the QRC is a prime number and $q+1$ is required to be
a multiple of eight~\cite{shepherd_temporally_2009}.
For $q = 7$, the cyclic generator reads ${(1, 1, 0, 1, 0, 0, 0)}^T$, because
$j = 1, 2, 4$ are quadratic residues modulo 7.
The basis for the codespace of QRC is generated by rotating the cyclic
generator, which is the last $4$ columns of the following matrix,
\begin{align}\label{eq:QRC_main_part}
  \vb{H}_{\vb{s}}^{\rm QRC} =
  \begin{pmatrix}
    1 & 1 & 0 & 0 & 0 \\
    1 & 1 & 1 & 0 & 0 \\
    1 & 0 & 1 & 1 & 0 \\
    1 & 1 & 0 & 1 & 1 \\
    1 & 0 & 1 & 0 & 1 \\
    1 & 0 & 0 & 1 & 0 \\
    1 & 0 & 0 & 0 & 1
  \end{pmatrix} \; .
\end{align}
The first column is added so that the secret is easy to find, i.e.,
$\vb{s} = {(1,0,0,0,0)}^T$.

After obtaining the initial $\vb{H}_{\vb{s}}^{\rm QRC}$, the verifier needs to
hide the secret and make the IQP circuit look random, while leaving the value of
the correlation function unchanged.
In the Shepherd-Bremner construction, the verifier will first add redundant rows
$\vb{R}_{\vb{s}}$, which are rows that are orthogonal to $\vb{s}$, to obtain the
full IQP matrix
\begin{align}
  \vb{H} = \begin{pmatrix}
             \vb{H}_{\vb{s}}^{\QRC} \\
             \vb{R}_{\vb{s}}
           \end{pmatrix} \; .
\end{align}
Its corresponding Hamiltonian $R_{\vb{s}}$ commutes with $\calZ_{\vb{s}}$ and
hence will not affect the correlation function.
After initializing $\vb{H}$ and $\vb{s}$, the verifier needs to apply
obfuscation to hide the secret.
The obfuscation is achieved by randomly permuting rows in $\vb{H}$ and
performing column operations to $\vb{H}$ and changing $\vb{s}$ accordingly.
\begin{definition}[Obfuscation]\label{def:obfuscation}
  Given an instance $(\vb{H}, \vb{s})$, the obfuscation is defined as the
  transformation
  \begin{align}
    \label{eq:obfuscation}
    \vb{H} &\gets \vb{PHQ} & \vb{s} \gets \vb{Q}^{-1} \vb{s} \; ,
 \end{align}
 where $\vb{P}$ is a random row-permutation matrix and $\vb{Q}$ is a random
 invertible matrix.
\end{definition}
Note that row permutations will not change the value of the correlation
function, since the gates in IQP circuits commute with each other.
As for the column operations, it can be shown that if the secret $\vb{s}$ is
transformed accordingly, to maintain the inner-product relation with the rows in
$\vb{H}$, then the value of the correlation function remains
unchanged~\cite{shepherd_temporally_2009,yung_anti-forging_2020}.

In the Shepherd-Bremner scheme~\cite{shepherd_temporally_2009}, the measure of
success is given by the probability bias
$\calP_{\s \perp} := \sum_{\x \cdot \s = 0 } p(\x)$, the probability of
receiving bit strings that are orthogonal to the secret $\s$, where $p(\x)$ is
the output probability of the IQP circuit.
This measure is equivalent to the correlation function, since
$\calP_{\s \perp} = \frac{1}{2} (\ev{\calZ_{\s}} + 1)$~\cite{shepherd_binary_2010,
  chen_experimental_2021}.
Due to the properties of QRC, $\ev{\calZ_{\s}}$ always equals $1/\sqrt{2}$ (in
terms of probability bias, 0.854).

\section{Stabilizer characterization of IQP circuits}\label{sec:stabilizer_characterization_IQP}

In this section, we establish the connection between IQP circuits, stabilizer formalism and
coding theory, which turns out to be useful in constructing the IQP circuits for
the verification protocol.
For $\theta = \pi/8$, we show that the stabilizer tableau of the Clifford
operation $e^{i2\theta \Ham(\vb{H}_{\vb{s}})}$ has a nice structure that allows us to
determine the value of
$\ev{\calZ_{\vb{s}}} = \mel{0^n}{U_{\vb{H}_{\vb{s}}, 2\theta}}{0^n}$ efficiently.
As an application, we analyze the Shepherd-Bremner construction with this
framework.

We first give the form of the stabilizer tableau of $U_{\vb{H}, \pi/4} \ket{0^n}$.

\begin{theorem}\label{thm:IQP_tableau}
  Given a binary matrix $\vb{H} = (\vb{c}_1, \ldots, \vb{c}_n)$, the stabilizer tableau of the state
  $U_{\vb{H}, \pi/4} \ket{0^n}$ can be expressed as,
  \begin{align}
    \label{eq:IQP_tableau}
    \left(
    \begin{array}{c | c | c}
      \begin{matrix}
        \vb{c}_1 \cdot \vb{c}_1 & \cdots & \vb{c}_1 \cdot \vb{c}_n \\
        \vdots & \ddots & \vdots \\
        \vb{c}_n \cdot \vb{c}_1 & \cdots & \vb{c}_n \cdot \vb{c}_n
      \end{matrix}
      &
      \begin{matrix}
        1 & \cdots & 0 \\
        \vdots & \ddots & \vdots \\
        0 & \cdots & 1
      \end{matrix}
      &
      \begin{matrix}
        r_1 \\
        \vdots \\
        r_n
      \end{matrix}
    \end{array}
    \right) \; .
 \end{align}
 Here, if one uses $00, 01, 10, 11$ to represent
 $|\vb{c}_j| = 0, 1, 2, 3 \pmod{4}$, then $r_j$ is equal to the first bit.
\end{theorem}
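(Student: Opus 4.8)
The plan is to compute the stabilizer group of $U_{\vb{H},\pi/4}\ket{0^n}$ directly by conjugating the generators of $\Stab(\ket{0^n})=\langle Z_1,\dots,Z_n\rangle$ through the diagonal Clifford $U_{\vb{H},\pi/4}$. Writing $U := U_{\vb{H},\pi/4} = \prod_{\vb{p}^T\in\row(\vb{H})} e^{i\frac{\pi}{4}\calX_{\vb{p}}}$, the stabilizer tableau of the output state has rows $U Z_j U^\dagger$ for $j=1,\dots,n$. So the core computation is: conjugate $Z_j$ by each factor $e^{i\frac{\pi}{4}\calX_{\vb{p}}}$. For a single factor, $e^{i\frac{\pi}{4}\calX_{\vb{p}}} Z_j e^{-i\frac{\pi}{4}\calX_{\vb{p}}}$ equals $Z_j$ if $[\calX_{\vb{p}},Z_j]=0$ (i.e. $p_j=0$) and equals $i\calX_{\vb{p}} Z_j = \calX_{\vb{p}}(iZ_j)$ up to the appropriate Pauli bookkeeping if they anticommute (i.e. $p_j=1$), using the standard identity $e^{i\frac{\pi}{4}A} B e^{-i\frac{\pi}{4}A} = iAB$ when $A,B$ are anticommuting Paulis with $A^2=B^2=I$. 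Iterating over all rows $\vb{p}^T$ of $\vb{H}$, the factors that act nontrivially on $Z_j$ are exactly those rows with $p_j=1$; multiplying them together, the $X$-support accumulated on qubit $k$ is $\sum_{\vb{p}^T\in\row(\vb{H}):\,p_j=1} p_k \pmod 2 = \sum_{\ell} H_{\ell j} H_{\ell k} = (\vb{H}^T\vb{H})_{jk} = \vb{c}_j\cdot\vb{c}_k$. This already gives the $X$ part of row $j$ as the $j$-th row of the Gram matrix $\vb{G}=\vb{H}^T\vb{H}$, and the $Z$ part stays $\vb{e}_j$ since each conjugation only appends $X$-type operators to the $Z_j$ we started with. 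Hence the $X$ and $Z$ blocks are $\vb{G}$ and $\vb{I}_n$ as claimed.

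The remaining work is the phase column $\vb{r}$. Here I would carefully track the accumulated phase factors. Each anticommuting factor contributes a factor of $i$, and there are $|\vb{c}_j|$ such factors (the number of rows of $\vb{H}$ with a $1$ in column $j$ equals the Hamming weight of $\vb{c}_j$), so naively we pick up $i^{|\vb{c}_j|}$. But in addition, when we multiply the resulting single-qubit $X$'s across the different rows on qubits where two rows both have support, some pairs of $X_k$'s multiply to $X_k^2=I$, which is phase-neutral, whereas reordering anticommuting Paulis during the collapse can flip signs; more importantly, the operator $(i)^{|\vb{c}_j|}\calX_{\vb{p}^{(1)}}\cdots\calX_{\vb{p}^{(w)}} Z_j$ (product over the $w=|\vb{c}_j|$ relevant rows) must be re-expressed as a $\pm$ signed Pauli string. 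The net phase is $i^{|\vb{c}_j|}$ times a real sign coming from the combinatorics of overlapping supports; the key point to verify is that this overall phase is $\pm 1$ (it must be, since $UZ_jU^\dagger$ is Hermitian), and that its sign equals $+1$ when $|\vb{c}_j|\equiv 0,1\pmod 4$ and $-1$ when $|\vb{c}_j|\equiv 2,3\pmod 4$ — i.e. $r_j$ is the high-order bit of $|\vb{c}_j|\bmod 4$, matching the statement that with the encoding $00,01,10,11$ for $|\vb{c}_j|\equiv 0,1,2,3$, the value $r_j$ is the first (most significant) bit. Concretely I expect $i^{|\vb{c}_j|}$ to be the dominant contribution: $i^0=1$, $i^1=i$, $i^2=-1$, $i^3=-i$, and after the Hermiticity-forced cancellation of the imaginary units against the overlap-sign combinatorics, what survives is exactly $\mathrm{sign} = (-1)^{\lfloor (|\vb{c}_j|\bmod 4)/2\rfloor}$.

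The main obstacle is precisely this phase bookkeeping: carefully ordering the $w=|\vb{c}_j|$ anticommuting factors $e^{i\frac{\pi}{4}\calX_{\vb{p}}}$, tracking the $i$ from each anticommutation, and showing that re-expanding the product of overlapping single-qubit $X$'s into a canonical signed Pauli yields exactly the sign $(-1)^{\lfloor(|\vb{c}_j|\bmod 4)/2\rfloor}$ independent of all the pairwise overlaps $\vb{c}_j\cdot\vb{c}_k$. A clean way to organize this is to prove it by induction on $w$: conjugating $Z_j$ by one more anticommuting factor $e^{i\frac{\pi}{4}\calX_{\vb{p}}}$ sends the current signed string $(\pm)\calX_{\vb{q}} Z_j$ (with $\vb{q}$ the accumulated $X$-support so far) to $(\pm)\,i\,\calX_{\vb{p}}\calX_{\vb{q}} Z_j = (\pm)\,i\,(-1)^{\vb{p}\cdot\vb{q}}\calX_{\vb{p}\oplus\vb{q}}Z_j$ or similar, and one tracks how the running phase $i^{(\text{count})}\cdot(\text{sign})$ evolves; the sign increments depend on overlaps but the claim is that after $w$ steps the product telescopes so that the surviving phase depends only on $w\bmod 4$. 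I would also double-check the base cases $w=0,1,2,3$ by hand against explicit small circuits (e.g. a single $e^{i\frac\pi4 X}$ on one qubit gives $XZX^{-1}$-type phases) to pin down conventions, and then invoke the induction. Row-permutations of $\vb{H}$ (reordering the factors) leave everything invariant since the $\calX_{\vb{p}}$ all commute, so the construction is well-defined.
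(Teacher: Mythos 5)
Your overall approach matches the paper's: conjugate each $Z_j$ through the commuting factors $e^{i\frac{\pi}{4}\calX_{\vb{p}}}$, track which rows act nontrivially ($p_j=1$), and read off the Gram matrix as the $X$ part. The derivation of the $X$ and $Z$ blocks is correct and is exactly the paper's argument.

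The gap is in the phase bookkeeping. You posit a ``real sign coming from the combinatorics of overlapping supports'' and, in your induction step, insert a factor $(-1)^{\vb{p}\cdot\vb{q}}$ when collapsing $\calX_{\vb{p}}\calX_{\vb{q}}$. No such sign exists: all the $\calX_{\vb{p}}$ are products of $X$'s only, so they pairwise commute, and $\calX_{\vb{p}}\calX_{\vb{q}} = \calX_{\vb{p}\oplus\vb{q}}$ with phase exactly $+1$. Consequently the conjugation telescopes cleanly with no overlap-dependent sign: writing the rows with $p_j=1$ as $\vb{p}^{(1)},\dots,\vb{p}^{(w)}$ with $w=|\vb{c}_j|$, one gets
\begin{align}
  U Z_j U^{\dagger} \;=\; i^{w}\,\calX_{\vb{p}^{(1)}}\cdots\calX_{\vb{p}^{(w)}}\,Z_j \;=\; i^{|\vb{c}_j|}\prod_{k=1}^n X_k^{\vb{c}_j\cdot\vb{c}_k}\,Z_j\,,
\end{align}
exactly, because each new factor $e^{i\frac{\pi}{4}\calX_{\vb{p}}}$ commutes past the accumulated $X$-string and acts only on $Z_j$. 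There is nothing to ``cancel by Hermiticity.'' The only remaining step is the tableau convention: the operator at position $j$ is $X_j^{\vb{c}_j\cdot\vb{c}_j}Z_j$, and when $\vb{c}_j\cdot\vb{c}_j=1$ the single-qubit factor is rewritten via $Y_j=iX_jZ_j$, absorbing one power of $i$. Thus $i^{|\vb{c}_j|}=(-1)^{r_j}\,i^{\vb{c}_j\cdot\vb{c}_j}$, so $2r_j+(\vb{c}_j\cdot\vb{c}_j)=|\vb{c}_j|\pmod 4$, which is exactly the statement that $r_j$ is the high-order bit of $|\vb{c}_j|\bmod 4$. Your final formula $(-1)^{\lfloor(|\vb{c}_j|\bmod 4)/2\rfloor}$ is correct, but you arrived at it by conjecture rather than by noting the commutativity that makes the phase trivially $i^{|\vb{c}_j|}$; the induction as you wrote it, with the spurious $(-1)^{\vb{p}\cdot\vb{q}}$, would not close.
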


This theorem can be proved by starting from the standard tableau of $\ket{0^n}$,
and keeping track of the stabilizer tableau after applying each terms of
$U_{\vb{H}, \pi/4}$ (i.e., each row of $\vb{H}$).
The complete proof is delayed to Appendix~\ref{app:IQP_stabilizer}.
We will call Eq.~\eqref{eq:IQP_tableau} the IQP (stabilizer) tableau and it is
of the form $(\vb{G}, \vb{I}_n, \vb{r})$.
We apply the above theorem to $\vb{H}_{\vb{s}}$, in which case the $X$ part is
$\vb{G}_{\vb{s}} = \vb{H}_{\vb{s}}^T \vb{H}_{\vb{s}}$.

Next, we relate the correlation function to the code generated by
$\vb{H}_{\vb{s}}$, denoted as $\calC_{\vb{s}}$.
Note that $\vb{H}_{\vb{s}} \vb{s} = \vb{1}$ means that the all-ones vector is a
codeword of $\calC_{\vb{s}}$.
From Proposition~\ref{prop:all_one_vector}, this means that the dual code
$\calC_{\vb{s}}^{\perp}$ is an even code and the intersection
$\calD_{\vb{s}} := \calC_{\vb{s}} \bigcap \calC_{\vb{s}}^{\perp}$ is a weakly
self-dual even code.
Then, $\calD_{\vb{s}}$ will be either a doubly-even code or an unbiased even
code, according to Lemma~\ref{lemma:self_dual_even_code}.

\begin{theorem}\label{thm:cor_func_rank} 
  Given an IQP matrix $\vb{H}_{\vb{s}}$ and a vector $\vb{s}$, so that
  $\vb{H}_{\vb{s}}\ \vb{s} = \vb{1}$.
  Denote the code generated by columns of $\vb{H}_{\vb{s}}$ by $\calC_{\vb{s}}$
  and its dual code by $\calC_{\vb{s}}^{\perp}$.
  Let $\calD_{\vb{s}} := \calC_{\vb{s}} \bigcap \calC_{\vb{s}}^{\perp}$.
  Then, the magnitude of the correlation function
  $\ev{\calZ_{\vb{s}}} = \mel{0^n}{U_{\vb{H}_{\vb{s}}, \pi/4}}{0^n}$ is $2^{-g/2}$ if
  $\calD_{\vb{s}}$ is a doubly-even code and zero if $\calD_{\vb{s}}$ is an
  unbiased even code.
  Here, $g := \dim(\calC_{\vb{s}}) - \dim(\calD_{\vb{s}})$ is also the rank of
  the Gram matrix $\vb{G}_{\vb{s}} = \vb{H}^T_{\vb{s}} \vb{H}_{\vb{s}}$.
\end{theorem}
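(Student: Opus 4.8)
\emph{Overall plan.} I would compute $\ev{\calZ_{\vb{s}}}=\ip{0^n}{\phi}$, where $\ket{\phi}:=U_{\vb{H}_{\vb{s}},\pi/4}\ket{0^n}$, by feeding the explicit stabilizer tableau of \cref{thm:IQP_tableau} into the overlap formula of \cref{prop:overlap}, and then re-reading the resulting dichotomy as a property of $\calD_{\vb{s}}$. By \cref{thm:IQP_tableau} the tableau of $\ket{\phi}$ is $(\vb{G}_{\vb{s}},\vb{I}_n,\vb{r})$ with $\vb{G}_{\vb{s}}=\vb{H}_{\vb{s}}^T\vb{H}_{\vb{s}}$; write $S_j$ for its $j$-th generator, with $X$-part the $j$-th row $\vb{g}_j$ of the symmetric matrix $\vb{G}_{\vb{s}}$, $Z$-part $\vb{e}_j$, and — absorbing $(-1)^{r_j}$ and the possible $Y=iXZ$ phase on qubit $j$, using that $r_j$ is the first bit of the $2$-bit encoding of $|\vb{c}_j|\bmod 4$ — the compact form $S_j=i^{|\vb{c}_j|}\,\calX_{\vb{g}_j}\calZ_{\vb{e}_j}$ (all $X$'s to the left of the single $Z_j$). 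Since $\Stab(\ket{0^n})=\langle Z_1,\dots,Z_n\rangle$ consists only of $+1$-signed $Z$-strings, the remark following \cref{prop:overlap} tells us $\abs{\ev{\calZ_{\vb{s}}}}$ is $0$ exactly when some $Z$-string lies in $\Stab(\ket{\phi})$ with sign $-1$, and otherwise equals $2^{-g/2}$, where $n-g$ is the dimension of the subgroup of $Z$-strings of $\Stab(\ket{\phi})$. So I must (a) compute that dimension and (b) pin down the signs.

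\emph{Step (a): counting.} For $T\subseteq[n]$ the commuting product $\prod_{j\in T}S_j$ has $X$-part $\sum_{j\in T}\vb{g}_j=\vb{G}_{\vb{s}}\vb{1}_T$, hence is a $Z$-string iff $\vb{1}_T\in\ker\vb{G}_{\vb{s}}$; so the $Z$-strings of $\Stab(\ket{\phi})$ form a subgroup of dimension $n-\rank\vb{G}_{\vb{s}}$. Applying \cref{prop:Gram_matrix_and_code} to the generator matrix $\vb{H}_{\vb{s}}$ of $\calC_{\vb{s}}$ gives $\rank\vb{G}_{\vb{s}}=\dim\calC_{\vb{s}}-\dim\calD_{\vb{s}}$, so $g=\rank\vb{G}_{\vb{s}}=\dim\calC_{\vb{s}}-\dim\calD_{\vb{s}}$ — this is the final identity in the statement — and there are $n-g$ independent $Z$-strings.

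\emph{Step (b): signs, the crux.} Fix $\vb{1}_T\in\ker\vb{G}_{\vb{s}}$; I want the sign of the $Z$-string $\prod_{j\in T}S_j$. Multiplying out $\prod_{j\in T}i^{|\vb{c}_j|}\calX_{\vb{g}_j}\calZ_{\vb{e}_j}$ and moving each $\calZ_{\vb{e}_j}$ to the right past the later factors' $X$-parts (a crossing of $Z_j$ over $\calX_{\vb{g}_k}$, $k>j$ in $T$, contributes $(-1)^{(\vb{g}_k)_j}=(-1)^{\vb{c}_j\cdot\vb{c}_k}$) produces $\prod_{j\in T}S_j=i^{\sum_{j\in T}|\vb{c}_j|}(-1)^{\sum_{j<k,\ j,k\in T}\vb{c}_j\cdot\vb{c}_k}\,\calX_{\vb{G}_{\vb{s}}\vb{1}_T}\calZ_{\vb{1}_T}$. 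The elementary mod-$4$ identity $\bigl|\sum_{j\in T}\vb{c}_j\bigr|\equiv\sum_{j\in T}|\vb{c}_j|-2\sum_{j<k,\ j,k\in T}\vb{c}_j\cdot\vb{c}_k\pmod 4$ then turns the scalar into $i^{|\vb{w}|}$, where $\vb{w}:=\sum_{j\in T}\vb{c}_j=\vb{H}_{\vb{s}}\vb{1}_T$. But $\vb{1}_T\in\ker\vb{G}_{\vb{s}}$ forces $\vb{w}\in\range(\vb{H}_{\vb{s}})\cap\ker(\vb{H}_{\vb{s}}^T)=\calC_{\vb{s}}\cap\calC_{\vb{s}}^{\perp}=\calD_{\vb{s}}$, so $\calX_{\vb{G}_{\vb{s}}\vb{1}_T}=I$, $|\vb{w}|$ is even (as $\calD_{\vb{s}}$ is even), and $\prod_{j\in T}S_j=(-1)^{|\vb{w}|/2}\calZ_{\vb{1}_T}$. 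Finally $\vb{1}_T\mapsto\vb{H}_{\vb{s}}\vb{1}_T$ maps $\ker\vb{G}_{\vb{s}}$ \emph{onto} $\calD_{\vb{s}}$, so all $Z$-strings of $\Stab(\ket{\phi})$ are $+1$-signed iff every $\vb{w}\in\calD_{\vb{s}}$ has $|\vb{w}|\equiv0\pmod4$, i.e.\ iff $\calD_{\vb{s}}$ is doubly-even; and some is $-1$-signed iff $\calD_{\vb{s}}$ has a word of weight $\equiv2\pmod4$, which — since $\calD_{\vb{s}}$ is weakly self-dual and even — is by \cref{lemma:self_dual_even_code} the same as $\calD_{\vb{s}}$ being an unbiased even code. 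Combined with step (a), this gives $\abs{\ev{\calZ_{\vb{s}}}}=2^{-g/2}$ when $\calD_{\vb{s}}$ is doubly-even and $0$ when it is unbiased even.

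\emph{Expected obstacle, plus a cross-check.} The one delicate point is the phase bookkeeping in step (b): one has to carry the tableau signs $r_j$, the $Y=iXZ$ diagonal phases, and the $Z$-past-$X$ reordering signs simultaneously and observe that the mod-$4$ weight identity fuses them into the single factor $i^{|\vb{H}_{\vb{s}}\vb{1}_T|}$, which lands in $\calD_{\vb{s}}$ and is therefore a genuine sign $(-1)^{|\vb{w}|/2}$. If one prefers to skip the tableau, the same answer falls out directly: expanding $e^{i\pi\calX_{\vb{p}}/4}=\tfrac{1}{\sqrt2}(I+i\calX_{\vb{p}})$ over the $r(\vb{H}_{\vb{s}})$ rows of $\vb{H}_{\vb{s}}$ and using $\mel{0^n}{\calX_{\vb{q}}}{0^n}=\delta_{\vb{q},\vb{0}}$ with $\ker(\vb{H}_{\vb{s}}^T)=\calC_{\vb{s}}^{\perp}$ gives $\ev{\calZ_{\vb{s}}}=2^{-r(\vb{H}_{\vb{s}})/2}\sum_{\vb{v}\in\calC_{\vb{s}}^{\perp}}i^{|\vb{v}|}$, a Gauss sum over the even code $\calC_{\vb{s}}^{\perp}$ of the $\FF_2$-quadratic form $q(\vb{v})=|\vb{v}|/2\bmod2$, whose polar form is $\vb{u}\cdot\vb{v}$ and hence has radical $\calC_{\vb{s}}^{\perp}\cap\calC_{\vb{s}}=\calD_{\vb{s}}$; the standard evaluation of such a sum ($0$ unless $q$ vanishes on the radical, else $2^{\dim\calD_{\vb{s}}}$ times a $\pm2^{(\dim\calC_{\vb{s}}^{\perp}-\dim\calD_{\vb{s}})/2}$ factor from the nondegenerate quotient) reproduces the dichotomy and, on substituting $\dim\calC_{\vb{s}}^{\perp}=r(\vb{H}_{\vb{s}})-\dim\calC_{\vb{s}}$, the magnitude $2^{-g/2}$.
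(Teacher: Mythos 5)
Your proof is correct, and while it rests on the same two pillars as the paper's argument (the tableau of \cref{thm:IQP_tableau} and the overlap formula of \cref{prop:overlap}, plus \cref{prop:Gram_matrix_and_code} and \cref{lemma:self_dual_even_code}), it takes a noticeably different middle route. The paper first re-expresses $\ev{\calZ_{\vb{s}}}$ as the lattice sum $2^{-m/2}\sum_{\vb{a}\in\calC_{\vb{s}}^{\perp}} i^{|\vb{a}|}$ (mainly in order to justify that a column basis change leaves the correlation function invariant), then applies an invertible $\vb{Q}$ to bring $\vb{H}_{\vb{s}}$ into the block form $(\vb{F},\vb{D},\vb{0})$, after which the tableau $(\vb{G}',\vb{I}_n,\vb{r}')$ visibly splits: the top $g$ rows carry the non-$Z$ part and the bottom rows are pure $Z$-strings whose signs are read off from $\vb{r}'$ and the column weights of $\vb{D}$. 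You skip the basis change entirely and instead multiply out products $\prod_{j\in T}S_j$ of the original generators, proving the closed form $\prod_{j\in T}S_j=i^{|\vb{H}_{\vb{s}}\vb{1}_T|}\calX_{\vb{G}_{\vb{s}}\vb{1}_T}\calZ_{\vb{1}_T}$ via the mod-$4$ identity $|\sum_j\vb{c}_j|\equiv\sum_j|\vb{c}_j|-2\sum_{j<k}\vb{c}_j\cdot\vb{c}_k\pmod 4$, and then observing that $\vb{1}_T\mapsto\vb{H}_{\vb{s}}\vb{1}_T$ carries $\ker\vb{G}_{\vb{s}}$ onto $\calD_{\vb{s}}$. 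This makes the sign structure of the $Z$-subgroup manifestly a statement about Hamming weights mod $4$ in $\calD_{\vb{s}}$, which is arguably more conceptual and sidesteps the need to exhibit a canonical form; the paper's basis-change route is shorter once \cref{thm:standard_form_H}/\cref{prop:conditions_D_F} are in hand, and dovetails with the construction algorithm. Your Gauss-sum cross-check is a genuinely third path: the paper derives the same $\sum_{\vb{v}\in\calC_{\vb{s}}^{\perp}}i^{|\vb{v}|}$ expression but never evaluates it directly, whereas you close the loop by recognizing it as a Gauss sum of the quadratic form $\vb{v}\mapsto|\vb{v}|/2\bmod 2$ on $\calC_{\vb{s}}^{\perp}$ with radical $\calD_{\vb{s}}$, which independently reproduces the dichotomy and the $2^{-g/2}$ magnitude.
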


We leave the proof in Appendix~\ref{app:IQP_stabilizer}.
Interestingly, from a group-theoretic perspective, the rank of the Gram matrix
$g$ is also the minimum number of different generators over all possible choices
of the stabilizer groups between $\ket{0^n}$ and
$U_{\vb{H}_{\vb{s}}, \pi/4} \ket{0^n}$ (Proposition~\ref{prop:overlap}).
Furthermore, we note that this result integrates several results in Ref.~\cite{shepherd_binary_2010} concisely, with a particular focus on coding theory, so that it aligns better with our objective of constructing IQP circuits for the verification protocol.
Ref.~\cite{shepherd_binary_2010} studies the IQP circuits with $\theta = \pi/4$ with a reworking of
Vertigan's algorithm for evaluating the magnitude of the Tutte polynomial of a
binary matroid at the point $(-i, i)$~\cite{vertigan_bicycle_1998}.
There, the amplitude $\mel{\vb{x}}{U_{\vb{H}, \pi/4}}{0^n}$ is considered any IQP matrix $\vb{H}$, where the all-ones vector may not
be a codeword of the code generated by the binary matrix $\vb{H}$.
Such an amplitude has been further studied in Ref.~\cite{mann_simulating_2021},
which gives the expression of the phase of the amplitude by applying results of
Ref.~\cite{pendavingh_evaluation_2014}.
In the language of binary matroids, the dual intersection $\calD_{\vb{s}}$ is the bicycle space of the matroid represented by $\vb{H}_{\vb{s}}$ and its dimension $\dim(\calD_{\vb{s}})$ is also known as the bicycle dimension~\cite{vertigan_bicycle_1998,mann_simulating_2021}.
Finally, we note that although computing the magnitude suffices for our later
construction, the sign of the correlation function can also be computed
efficiently, as shown in Ref.~\cite{mann_simulating_2021}.
In addition, when $g = O(\log{n})$, the correlation function has an inverse
polynomial scaling.
In this case, one can use the random sampling algorithm in
Ref.~\cite{yung_anti-forging_2020} to determine the sign efficiently.

To show the usefulness of the stabilizer characterization, we apply these two
theorems to analyze the Shepherd-Bremner construction.
Combined with the properties of QRC, we have the following corollary (with proof
presented in Appendix~\ref{app:IQP_stabilizer}).

\begin{corollary}\label{cry:QRC_generator}
  Let $q$ be a prime such that 8 divides $q+1$.
  Let $\vb{H}_{\vb{s}}^{\QRC}$ be a matrix whose first column is $\vb{1}$ (of
  length $q$), and whose remaining columns are the basis of the
  quadratic-residue code of length $q$, formed by the cyclic generator (i.e., in
  the form of Eq.~\eqref{eq:QRC_main_part}).
  Then, letting $H_{\vb{s}} = \Ham(\vb{H}_{\vb{s}}^{\QRC})$, the stabilizer tableau of
  $e^{i\pi H_{\vb{s}}/4} \ket{0^n}$ can be expressed as
  the following form,
  \begin{align}
    \left(
    \begin{array}{c|c|c}
      \begin{matrix}
        1 & \cdots & 1 \\
        \vdots & \ddots & \vdots \\
        1 & \cdots & 1
      \end{matrix}
      &
      \begin{matrix}
        1 & \cdots & 0 \\
        \vdots & \ddots & \vdots \\
        1 & \cdots & 1 \\
      \end{matrix}
      &
      \begin{matrix}
        1 \\
        \vdots \\
        1
      \end{matrix}
    \end{array}
    \right) \; .
 \end{align}
 As a result, the corresponding stabilizer group is generated by $-Y_1 X_2 \cdots X_n$, $-X_1 Y_2 X_3 \cdots X_n$, $\ldots$, and $-X_1 X_2 \cdots X_{n-1} Y_n$, where $n = (q+3)/2$.
 Moreover, the correlation function
 $\ev{\calZ_{\vb{s}}} = \mel{0^n}{e^{i\pi H_{\vb{s}}/4}}{0^n}$ has a magnitude $1/\sqrt{2}$.
\end{corollary}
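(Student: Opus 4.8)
The plan is to combine Theorem~\ref{thm:IQP_tableau} (the structure of the IQP stabilizer tableau) with two well-known number-theoretic facts about quadratic residues, and then read off the correlation function magnitude from Theorem~\ref{thm:cor_func_rank} or Proposition~\ref{prop:overlap}. The two facts I would invoke are: (i) when $q \equiv -1 \pmod 8$, the number of nonzero quadratic residues mod $q$ is $(q-1)/2$, which is odd, so the cyclic generator of the QRC has weight $(q-1)/2 \equiv 3 \pmod 4$; and (ii) any two distinct cyclic shifts of the QRC generator overlap in exactly $(q-3)/4$ positions — this is the standard ``two-weight'' intersection property of quadratic residues (it follows from the fact that $\{j : j \text{ and } j+k \text{ both QR}\}$ has size $(q-3)/4$ when $q\equiv 3\pmod 4$ and $k \ne 0$, via the Jacobi-sum / character-sum count). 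I would state these as a lemma (or cite the relevant place in \cite{macwilliams1977book}) and then do the bookkeeping.

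First I would apply Theorem~\ref{thm:IQP_tableau} to $\vb{H}_{\vb{s}}^{\QRC} = (\vb{c}_1, \ldots, \vb{c}_n)$ with $n = (q+3)/2$, where $\vb{c}_1 = \vb{1}$ and $\vb{c}_2, \ldots, \vb{c}_n$ are $n-1 = (q+1)/2$ cyclic shifts of the generator. The tableau has the form $(\vb{G}, \vb{I}_n, \vb{r})$ with $\vb{G} = (\vb{H}_{\vb{s}}^{\QRC})^T \vb{H}_{\vb{s}}^{\QRC}$, so I need to compute every entry of $\vb{G}$ mod $2$ and every phase bit $r_j$. For the diagonal: $\vb{c}_1\cdot\vb{c}_1 = q \equiv 1 \pmod 2$, and $\vb{c}_j \cdot \vb{c}_j = |\vb{c}_j| = (q-1)/2$ for $j \ge 2$, which by fact (i) is odd, so all diagonal entries are $1$. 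For the off-diagonal entries $\vb{c}_i \cdot \vb{c}_j$ with $i < j$: if $i = 1$ this is $|\vb{c}_j| = (q-1)/2$, odd, hence $1$; if $i, j \ge 2$ it is the intersection size $(q-3)/4$ of two distinct shifts by fact (ii), and since $q \equiv -1 \pmod 8$ gives $q - 3 \equiv 4 \pmod 8$, i.e.\ $(q-3)/4$ is odd, this is also $1$. Hence $\vb{G}$ is the all-ones matrix, as claimed. For the phase bits: by fact (i), $|\vb{c}_j| \equiv 3 \pmod 4$ for $j \ge 2$, and $|\vb{c}_1| = q \equiv 3 \pmod 4$; in the $00,01,10,11$ encoding of $|\vb{c}_j| \bmod 4$, weight $3$ is encoded as $11$, whose first bit is $1$, so $r_j = 1$ for all $j$. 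This gives the displayed tableau with the all-ones $X$ part, identity $Z$ part, and all-ones phase column.

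Reading off the generators: row $j$ of the tableau encodes the Pauli with $X$ on every qubit (all-ones $X$ part), $Z$ on qubit $j$ only (identity $Z$ part), and overall sign $-1$ (phase bit $1$); since $XZ \propto Y$ up to the phase convention already absorbed into $r_j$, row $j$ is $-X_1\cdots X_{j-1} Y_j X_{j+1} \cdots X_n$, matching the statement. Finally, for the magnitude of $\ev{\calZ_{\vb{s}}}$: by Theorem~\ref{thm:cor_func_rank} it equals $2^{-g/2}$ with $g = \rank(\vb{G}_{\vb{s}})$ (the QRC is doubly-even, so we are in the first case), and the rank over $\FF_2$ of the $n \times n$ all-ones matrix is $1$; hence $g = 1$ and $|\ev{\calZ_{\vb{s}}}| = 2^{-1/2} = 1/\sqrt{2}$. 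Equivalently one can argue group-theoretically via Proposition~\ref{prop:overlap}: the stabilizer group of $U_{\vb{H}_{\vb{s}},\pi/4}\ket{0^n}$ shares $n-1$ independent generators (e.g.\ the products of consecutive rows, which are Pauli-$Z$ type operators $Z_iZ_j$) with $\Stab(\ket{0^n})$, leaving a single differing generator, so $g = 1$.

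The main obstacle is purely the number theory of step two: establishing facts (i) and (ii) cleanly and confirming the residues mod $4$ and mod $8$ line up so that \emph{all} entries of $\vb{G}$ come out odd and all weights come out $\equiv 3 \pmod 4$. Once the two character-sum counts are in hand, everything else is mechanical substitution into Theorem~\ref{thm:IQP_tableau} and an invocation of Theorem~\ref{thm:cor_func_rank}. I would relegate the proofs of (i) and (ii) to a short appendix lemma or cite \cite{macwilliams1977book}, since they are classical, and keep the corollary's proof focused on the tableau bookkeeping and the $\rank = 1$ computation.
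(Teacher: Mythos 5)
Your proposal is correct and closely parallels the paper's proof in outline, but you use a genuinely different argument for the central number-theoretic step. Where you establish $\vb{c}_j \cdot \vb{c}_k \equiv 1 \pmod{2}$ for $j \ne k \ge 2$ by directly counting the intersection of two shifted quadratic-residue sets as $(q-3)/4$ via a Jacobi/character-sum identity (and then reducing mod $2$), the paper instead invokes the classical fact that the \emph{extended} QRC obtained by appending a parity bit is self-dual: each extended column has appended bit $1$ (since the columns are odd-weight), so orthogonality of extended codewords forces $\vb{c}_j \cdot \vb{c}_k + 1\cdot 1 = 0$, hence $\vb{c}_j \cdot \vb{c}_k = 1$. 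Both routes land on the same Gram matrix, but the paper's argument sidesteps the character-sum computation entirely and leans on a single textbook fact, which keeps the proof shorter and avoids the need for an auxiliary lemma on the two-weight intersection count. Your version is more self-contained in a number-theoretic sense (no appeal to self-duality), at the cost of a somewhat heavier computation that you would have to cite or prove in an appendix, so the tradeoff is clarity versus dependence on coding-theory background. For the final step, your rank argument (rank of the all-ones $n\times n$ matrix over $\FF_2$ is $1$, so $g=1$ via Theorem~\ref{thm:cor_func_rank}) is a bit slicker than the paper's explicit generator manipulation (multiplying the first row into the rest to expose $n-1$ positively-signed $Z$-type stabilizers), though you also mention the paper's group-theoretic route as an alternative. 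One small imprecision to fix: you write ``the QRC is doubly-even'' when applying Theorem~\ref{thm:cor_func_rank} — the QRC itself is not doubly-even (its weights are $0$ or $3 \bmod 4$); what needs to be doubly-even is the self-dual intersection $\calD_{\vb{s}} = \calC_{\vb{s}} \cap \calC_{\vb{s}}^{\perp}$, which for $q \equiv -1 \pmod 8$ is the even-weight (expurgated) subcode of the QRC and is indeed doubly-even. As stated this does not break the proof, but it should be phrased correctly if you keep the Theorem~\ref{thm:cor_func_rank} route.
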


\section{Stabilizer construction}\label{sec:stab_construction}

In this section, we present the stabilizer construction, which is a systematic
way to construct IQP circuits with $\theta = \pi/8$ for verification.
In fact, the goal is to generate a pair $(\vb{H}, \vb{s})$, such that they
satisfy certain conditions, which stem from \cref{thm:cor_func_rank}.
We first define the family of pairs that we would like to sample
from.
\begin{definition}\label{def:IQP_family}
  Let $\calH_{n, m, g} = \{(\vb{H}, \vb{s})\}$ be a family of pairs of an IQP
  matrix $\vb{H} \in \FF_2^{m\times n}$ and a secret $\vb{s} \in \FF_2^n$
  satisfying the following conditions.
  (1) $\calD_{\vb{s}} = \calC_{\vb{s}} \bigcap \calC_{\vb{s}}^{\perp}$ is a
  doubly-even code, where $\calC_{\vb{s}}$ is the code generated by columns of
  $\vb{H}_{\vb{s}}$ and $\calC_{\vb{s}}^{\perp}$ is its dual code; (2)
  $\rank(\vb{H}_{\vb{s}}^T \vb{H}_{\vb{s}}) = g$; (3) $\rank(\vb{H}) = n$.
\end{definition}
In this definition, the size of the IQP circuits are determined by $n$ and $m$,
which correspond to the number of qubits and gates, respectively.
Additionally, condition (1) is to guarantee that the correlation function
$\ev{\calZ_{\vb{s}}}$ corresponding to instances of $\calH_{n, m, g}$ is
nonzero, and condition (2) states that its magnitude is given by $2^{-g/2}$.
Therefore, the family $\calH_{n, m, g}$ includes all instances of IQP circuits
of a certain size that have correlation function $\pm 2^{-g/2}$ with respect to
some secret $\vb{s}$.
Note that the rank of the Gram matrix $\vb{H}_{\vb{s}}^T \vb{H}_{\vb{s}}$ should
be $g = O(\log{n})$ for the protocol to be practical.
The reason for considering IQP matrices $\vb{H}$ with full column rank will be
made clear when we discuss the classical security of the IQP-based verification
protocol (\cref{subsubsec:classical_sampling}).

\begin{metaalgorithm}[H]
  \begin{algspec}
    \textbf{Parameters: n, m, g} \\
    \textbf{Output: $(\vb{H}, \vb{s}) \in \calH_{n, m, g}$}
    \begin{algorithmic}[1]
      \State{Randomly sample $m_1$ and $d$ with certain constraints}
      \Comment{\cref{app:details_in_stabilizer_construction}} \State{Sample
        $\vb{D} \in \FF_2^{m_1 \times d}$ and $\vb{F} \in \FF_2^{m_1 \times g}$
        satisfying certain conditions}
      \Comment{\cref{app:details_in_stabilizer_construction}} \State{Initialize
        $\vb{H}_{\vb{s}} \gets (\vb{F}, \vb{D}, \vb{0}_{m_1\times (n-r)})$,
        where $r = g + d$} \State{Sample a secret $\vb{s}$ from the solutions of
        $\vb{H}_{\vb{s}}\ \vb{s} = \vb{1}$}
      \State{$\vb{H} \gets \begin{pmatrix} \vb{H}_{\vb{s}} \\ \vb{R}_{\vb{s}} \end{pmatrix}$,
        where $\vb{R}_{\vb{s}}$ is a random matrix with $m - m_1$ rows
        satisfying $\vb{R}_{\vb{s}} \vb{s} = \vb{0}$ and $\rank(\vb{H}) = n$}
      \State{Perform obfuscation as in \cref{def:obfuscation}}
    \end{algorithmic}
    \end{algspec}
    \caption{Stabilizer construction}\label{alg:stabilizer_construction}
\end{metaalgorithm}

Moreover, we give an efficient classical sampling algorithm to sample instances
from $\calH_{n, m, g}$, which is the stabilizer construction
(\cref{alg:stabilizer_construction}).
\begin{theorem}\label{thm:stabilizer_construction}
  There exists an efficient classical sampling algorithm that sample from
  $\calH_{n, m, g}$, given the parameters $n, m$ and $g$.
\end{theorem}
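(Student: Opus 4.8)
The plan is to verify line by line that \cref{alg:stabilizer_construction} runs in $\poly(n,m)$ time and that its output $(\vb{H},\vb{s})$ meets the three conditions of \cref{def:IQP_family}. The core of the argument is the block structure of $\vb{H}_{\vb{s}} = (\vb{F},\vb{D},\vb{0}_{m_1\times(n-r)})$ with $r=g+d$. Since the columns of $\vb{D}$ generate a doubly-even code, \cref{lemma:self_dual_even_code} tells us they have weight divisible by $4$ and are mutually orthogonal, so $\vb{D}^T\vb{D}=\vb{0}$; combining this with the imposed constraint $\vb{D}^T\vb{F}=\vb{0}$ and the zero block, the Gram matrix reads
\begin{align}
\vb{G}_{\vb{s}} = \vb{H}_{\vb{s}}^T\vb{H}_{\vb{s}} = \begin{pmatrix}\vb{F}^T\vb{F} & \vb{0} & \vb{0}\\ \vb{0} & \vb{0} & \vb{0}\\ \vb{0} & \vb{0} & \vb{0}\end{pmatrix}\;,
\end{align}
so that $\rank(\vb{G}_{\vb{s}})=\rank(\vb{F}^T\vb{F})=g$ once $\vb{F}$ is sampled with $\rank(\vb{F}^T\vb{F})=g$, which is condition~(2). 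For condition~(1), observe that the columns of $\vb{D}$ lie in $\calC_{\vb{s}}$ and, being orthogonal to every column of $\vb{D}$ and $\vb{F}$, also in $\calC_{\vb{s}}^{\perp}$, so $\range(\vb{D})\subseteq\calD_{\vb{s}}$; if $(\vb{F},\vb{D})$ has full column rank $g+d$ then $\dim\calC_{\vb{s}}=g+d$, and \cref{prop:Gram_matrix_and_code} forces $\dim\calD_{\vb{s}}=\dim\calC_{\vb{s}}-\rank(\vb{G}_{\vb{s}})=d$, hence $\calD_{\vb{s}}=\range(\vb{D})$, a doubly-even code. Thus conditions~(1)--(2) reduce to sampling $\vb{D}$ generating a doubly-even code and $\vb{F}$ with $\vb{D}^T\vb{F}=\vb{0}$, $\rank(\vb{F}^T\vb{F})=g$, and $(\vb{F},\vb{D})$ of full column rank; by \cref{thm:cor_func_rank} the correlation function of the resulting instance then has magnitude $2^{-g/2}$.

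Next I would show these matrices are efficiently samplable, deferring the precise constraints relating $m_1,d$ to $n,m,g$ to \cref{app:details_in_stabilizer_construction}. I would build $\vb{D}$ greedily: maintain a doubly-even self-orthogonal code $V$ containing the all-ones vector (which forces $4\mid m_1$) and extend it by a uniformly random $\vb{v}\in V^{\perp}$ with $|\vb{v}|\equiv 0\pmod 4$; the code stays doubly-even because $\vb{v}\perp\vb{w}$ implies $|\vb{v}+\vb{w}|=|\vb{v}|+|\vb{w}|-2|\vb{v}\wedge\vb{w}|\equiv 0\pmod 4$ for all $\vb{w}\in V$. Sampling from $\{\vb{v}\in V^{\perp}:|\vb{v}|\equiv 0\pmod 4\}$ amounts to sampling a level set of an $\FF_2$-quadratic form, which can be done efficiently. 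For $\vb{F}$, one repeatedly draws random vectors of $V^{\perp}$, reduces them modulo $\range(\vb{D})$, and keeps a new column whenever it stays independent of the previous ones and increases $\rank(\vb{F}^T\vb{F})$, halting once the rank reaches $g$; a dimension count shows a non-negligible fraction of candidates succeed, so $\poly(m)$ trials suffice. Every intermediate operation -- Gaussian elimination over $\FF_2$, forming $V^{\perp}$, computing Hamming weights -- is polynomial.

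I would then dispatch the remaining lines. Solving $\vb{H}_{\vb{s}}\,\vb{s}=\vb{1}$ is Gaussian elimination, solvable precisely because the all-ones vector lies in $\range((\vb{F},\vb{D}))$ by the construction of $\vb{D}$; the $n-r$ entries of $\vb{s}$ paired with the zero block are free and chosen uniformly. Sampling $\vb{R}_{\vb{s}}$ with $m-m_1$ rows, $\vb{R}_{\vb{s}}\vb{s}=\vb{0}$, and $\rank(\vb{H})=n$ is again linear algebra: its rows lie in the $(n-1)$-dimensional space $\vb{s}^{\perp}$, which for $m-m_1$ large enough (another constraint fixed in line~1 of \cref{alg:stabilizer_construction}) contains enough random rows to reach full column rank with overwhelming probability, establishing condition~(3). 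Finally, the obfuscation $\vb{H}\gets\vb{PHQ}$, $\vb{s}\gets\vb{Q}^{-1}\vb{s}$ of \cref{def:obfuscation} preserves $\calH_{n,m,g}$: the row permutation $\vb{P}$ acts as a coordinate permutation on codewords, leaving $\calC_{\vb{s}}$, $\calD_{\vb{s}}$, doubly-evenness and all ranks invariant; and once the secret is updated to $\vb{Q}^{-1}\vb{s}$ the transformed core spans the same code, its Gram matrix transforms by the congruence $\vb{G}_{\vb{s}}\mapsto\vb{Q}^T\vb{G}_{\vb{s}}\vb{Q}$ (preserving rank), and $\rank(\vb{H})=n$ is untouched. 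Composing all steps yields an output in $\calH_{n,m,g}$ produced in $\poly(n,m)$ time.

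The step I expect to be the main obstacle is the efficient sampling above: producing a uniformly random doubly-even code of prescribed length and dimension (the level-set sampling against the $\pmod 4$ constraint) and, in tandem, the block $\vb{F}$ whose Gram matrix $\vb{F}^T\vb{F}$ has rank exactly $g$. Controlling this rank is delicate -- over $\FF_2$ a symmetric matrix with zero diagonal has even rank, so for odd $g$ the construction must deliberately place odd-weight columns in $\vb{F}$, and one must check that the constraints tying $m_1$ and $d$ to $n,m,g$ still admit valid choices. This parity and rank bookkeeping, rather than any new conceptual ingredient, is where the detailed construction in \cref{app:details_in_stabilizer_construction} does the real work.
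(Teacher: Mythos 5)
Your proposal correctly carries out the ``soundness'' half of the argument: once $\vb{H}_{\vb{s}} = (\vb{F},\vb{D},\vb{0})$ has been produced with $\vb{D}$ generating a doubly-even code, $\vb{D}^T\vb{F}=\vb{0}$, $\rank(\vb{F}^T\vb{F})=g$, and $(\vb{F},\vb{D})$ full column rank, your computation of the block Gram matrix and the use of \cref{prop:Gram_matrix_and_code} to conclude $\calD_{\vb{s}}=\range(\vb{D})$ is exactly what the paper does, as is your observation that the obfuscation map preserves membership in $\calH_{n,m,g}$. Where your proposal diverges --- and where the gap lies --- is in the sampling of $\vb{D}$ and $\vb{F}$, which you yourself flag as ``the main obstacle.'' Your sketch builds a doubly-even self-orthogonal code ``containing the all-ones vector,'' i.e.\ places $\vb{1}\in\range(\vb{D})=\calD_{\vb{s}}$. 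This is not just a restriction to $4\mid m_1$: since $\calD_{\vb{s}}\subseteq\calC_{\vb{s}}^{\perp}$, having $\vb{1}\in\calD_{\vb{s}}$ forces every column of $\vb{H}_{\vb{s}}$, in particular every column of $\vb{F}$, to be orthogonal to $\vb{1}$, hence of even weight. That makes $\vb{F}^T\vb{F}$ an alternating form, so $\rank(\vb{F}^T\vb{F})$ is even and odd $g$ is unreachable. Your closing remark that ``for odd $g$ the construction must deliberately place odd-weight columns in $\vb{F}$'' is right, but it is \emph{incompatible} with the $\vb{D}$-construction you propose, so the two halves of your argument contradict each other. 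The paper (Propositions~\ref{prop:parameter_constraints}, \ref{prop:conditions_D_F} and \cref{thm:standard_form_H}) resolves this by a case split on $m_1\bmod 4$: $\vb{1}$ sits inside $\calD_{\vb{s}}$ only when $4\mid m_1$ and $g$ is even; when $m_1$ is odd ($g$ odd), $\vb{1}$ is taken to be the first column of $\vb{F}$ and contributes the lone $1$ on the diagonal of the canonical form $\diag(1,\vb{J},\ldots,\vb{J})$; and when $m_1\equiv 2\pmod 4$, $\vb{1}$ is realized as the sum of the first two odd-weight columns of $\vb{F}$, giving the $\vb{I}_2$ block. This case analysis, together with the parity constraint $m_1\equiv g\pmod 2$, is precisely what makes \cref{alg:sample_D} and \cref{alg:sample_F} terminate for all admissible $(n,m,g)$, whereas your version only handles $g$ even.

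A secondary, less serious difference: you sample $\vb{F}$ by rejection (draw a random vector, keep it if the Gram rank increases), whereas the paper's \cref{alg:sample_F} is constructive, adding columns in pairs $(\vb{a},\vb{b})$ with $\vb{a}\cdot\vb{b}=1$ so each pair deterministically contributes a $\vb{J}$ block and raises the rank by two. Your rejection variant would likely work with a bit more care (e.g.\ a single even-weight column never raises the rank by itself, so one really has to accept in pairs), but it would need an explicit lower bound on acceptance probability that the paper's construction sidesteps. The same comment applies to $\vb{R}_{\vb{s}}$: you sample random rows in $\vb{s}^{\perp}$ and hope for full rank, while the paper first explicitly completes a basis with $n-r$ independent rows and only then pads with arbitrary rows, which avoids any failure probability when $m-m_1$ is at its minimum $n-g-d$.
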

For the algorithmic purpose, we set two additional parameters, $m_1$ and $d$,
which are the number of rows in $\vb{H}_{\vb{s}}$ and the dimension of
$\calD_{\vb{s}}$, respectively.
These are random integers satisfying certain natural constraints (see
Appendix~\ref{app:details_in_stabilizer_construction}).
The rank of $\vb{H}_{\vb{s}}$ is then equal to $r = g + d$.
The stabilizer construction works by sampling $\vb{H}_{\vb{s}}$ and
$\vb{R}_{\vb{s}}$ in certain `standard forms', up to row permutations and column operations ({see \cref{fig:iqp_unobfuscated}}).
Note that the `standard forms' of $\vb{H}_{\vb{s}}$ and $\vb{R}_{\vb{s}}$ are
not necessarily unique.

We first discuss $\vb{R}_{\vb{s}}$.
To ensure that $\rank(\vb{H}) = n$, observe that in any $\vb{H}$ of full column
rank, the redundant rows $\vb{R}_{\vb{s}}$ can always be transformed by row
permutations into a form, where the first $n - r$ rows form a basis of $\FF_2^n$
together with the rows in $\vb{H}_{\vb{s}}$.
Therefore, up to row permutations, the first $n - r$ rows of $\vb{R}_{\vb{s}}$
are sampled to be random independent rows that are orthogonal to $\vb{s}$ and
lie outside the row space of $\vb{H}_{\vb{s}}$.
The remaining rows in $\vb{R}_{\vb{s}}$ are random rows orthogonal to $\vb{s}$.

Next, we discuss sampling $(\vb{H}_{\vb{s}}, \vb{s})$, which is the core of the
stabilizer construction.
Essentially, we want to randomly generate a (possibly redundant) generator
matrix $\vb{H}_{\vb{s}}$ of a code $\calC_{\vb{s}}$, so that its dimension is
$r$, its intersection $\calD_{\vb{s}}$ with the dual code is a doubly-even code
with dimension $d = r - g$ and the all-ones vector is a codeword.
The last condition guarantees that a secret $\vb{s}$ can always be found.
Note that, we allow $\rank(\vb{H}_{\vb{s}}) < n$.
That is, we allow $\vb{H}_{\vb{s}}$ to be a ``redundant'' generator matrix of
$\calC_{\vb{s}}$, instead of a full-rank one.
This is called adding column redundancy to the full-rank generator matrix of
$\calC_{\vb{s}}$, because after the obfuscation process, there will be redundant
linear combinations in the columns of $\vb{H}_{\vb{s}}$.
We give a more formal discussion of column redundancy in
\cref{app:column_redundancy}.

For such a generator matrix $\vb{H}_{\vb{s}}$, there is an invertible matrix
$\vb{Q}$ to perform a basis change so that
\begin{align}
  \label{eq:canonical_H_s}
  \vb{H}_{\vb{s}} \vb{Q} = (\vb{F}, \vb{D}, \vb{0}_{m_1 \times (n-r)}) \; ,
\end{align}
where $\vb{D} \in \FF_2^{m_1 \times d}$ is a generator matrix of the doubly-even
code $\calD_{\vb{s}}$, and columns in $\vb{F} \in \FF_2^{m_1 \times g}$ span
$\calC_{\vb{s}} \slash \calD_{\vb{s}}$.
In addition, it can be shown that
$\rank(\vb{F}^T \vb{F}) = \rank(\vb{Q}^T \vb{H}_{\vb{s}}^T \vb{H}_{\vb{s}} \vb{Q}) = \rank(\vb{H}_{\vb{s}}^T \vb{H}_{\vb{s}}) = g$.
Moreover, although there might be no unique standard form of $\vb{H}_{\vb{s}}$,
the Gram matrix has a unique standard form.
First note that row permutations have no effect on the Gram matrix, since
$\vb{P}^T \vb{P} = \vb{I}$ for a permutation matrix $\vb{P}$.
So we focus on column operations.
As shown in Ref.~\cite{kim_two_2008}, there exists an invertible matrix
$\vb{Q}$, so that
$\vb{Q}^T \vb{H}_{\vb{s}}^T \vb{H}_{\vb{s}} \vb{Q} = \mathrm{diag} \left( \vb{I}_g, \vb{0} \right)$
or $\mathrm{diag}\left( \bigoplus\limits_{i = 1}^{g/2} \vb{J}, \vb{0} \right)$,
depending on whether at least one diagonal element of $\vb{H}_{\vb{s}}^T \vb{H}_{\vb{s}}$ is 1 or not,
where $\vb{J} := \begin{pmatrix} 0 & 1 \\ 1 & 0 \end{pmatrix}$.
However, for the construction purpose, we need to ensure that the all-ones
vector is a codeword of $\calC_{\vb{s}}$.
Therefore, in \cref{app:details_in_stabilizer_construction}, we give a slightly
different standard form of $\vb{H}_{\vb{s}}^T \vb{H}_{\vb{s}}$, which can be
achieved by $\vb{H}_{\vb{s}}$ in the form of $(\vb{F}, \vb{D}, \vb{0})$.

\begin{figure}[t]
  \centering
  \includegraphics[width = 0.4\textwidth]{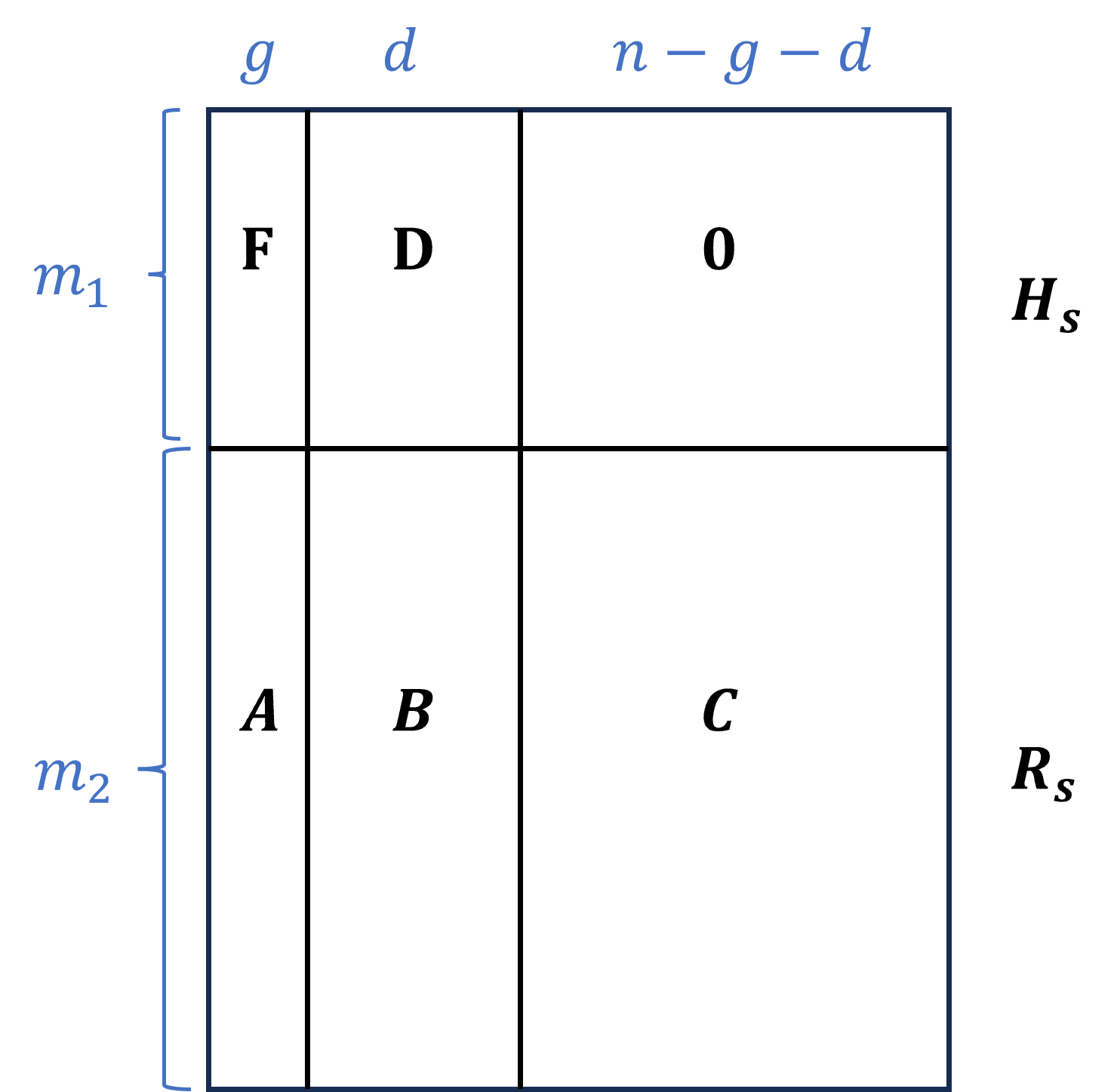}
  \caption{The IQP matrix $\vb{H}$ in the unobfuscated form. 
  Here, $m_2 =m - m_1$ is the number of rows in $\vb{R}_{\vb{s}}$.}
  \label{fig:iqp_unobfuscated}
\end{figure}

In summary, sampling $(\vb{H}_{\vb{s}}, \vb{s})$ is reduced to generating an
$\vb{H}_{\vb{s}} = (\vb{F}, \vb{D}, \vb{0})$ so that the Gram matrix
$\vb{H}_{\vb{s}}^T \vb{H}_{\vb{s}}$ is in the standard form presented in
\cref{app:details_in_stabilizer_construction}.
Then, a secret $\vb{s}$ is sampled from the solutions of
$\vb{H}_{\vb{s}} \ \vb{s} = \vb{1}$.
Sampling such an $\vb{H}_{\vb{s}}$ is further reduced to sampling $\vb{D}$ and
$\vb{F}$, so that $\vb{D}$ is a generator matrix for a random doubly-even code
and $\vb{F}$ is a random matrix satisfying $\vb{D}^T \vb{F} = \vb{0}$,
$\rank(\vb{F}^T \vb{F}) = g$ and that $\vb{1}$ is in the column space of
$(\vb{F}, \vb{D})$.
We claim that sampling such $\vb{D}$ and $\vb{F}$ can be done efficiently, with
details deferred to Appendix~\ref{app:details_in_stabilizer_construction}.
To conclude, before the obfuscation process as defined in \cref{def:obfuscation}, the IQP matrix $\vb{H}$ will be in the form of Fig.~\ref{fig:iqp_unobfuscated}.
Since $\vb{H}$ is full rank, we have that $\vb{C}$ is also full rank, i.e., $\rank(\vb{C}) = n-g-d$.

\section{Implementation}
\label{sec:implementation}

Before proceeding to classical security, we discuss the implementation of $U_{\vb{H}, \theta}$ with $\theta = \pi/8$ on a real quantum device.
First, after receiving the matrix $\vb{H}$, the prover can always transform it into 
\begin{align}
  \vb{PHQ} = \begin{pmatrix}
    \vb{I}_n \\ \overline{\vb{H}}
  \end{pmatrix} \ ,
\end{align}
with certain row permutation matrix $\vb{P}$ and invertible matrix $\vb{Q}$, where $\overline{\vb{H}}$ is a $(m-n) \times n$ matrix.
Then, he runs the IQP circuit $U_{\vb{PHQ}, \theta}$ instead, obtains samples $\vb{x}$ and returns the postprocessed samples $\vb{Q}^{-T} \vb{x}$ to the verifier.
Without loss of generality, one can also assume that the IQP matrix sent by the verifier is of the above form.

Next, we show how to compile $U_{\vb{H}, \theta}$ into an equivalent circuit consisting of CNOT circuits interleaving with a round of $e^{i\theta X}$ gates only.
Since $U_{\vb{PHQ}, \theta} = \prod_{j=1}^n e^{i\theta X_j} U_{\overline{\vb{H}}, \theta}$, we only need to apply the compilation to $U_{\overline{\vb{H}}, \theta}$.
For rows in $\overline{\vb{H}}$, we first find a maximal set of independent rows, say $\{ \vb{p}_1, \cdots, \vb{p}_k\}$.
Since these rows are independent, there exists a Clifford circuit $C_1$ such that $\calX_{\vb{p}_j} = C_1 X_j C_1^{\dagger}$ for $j \in [k]$, which implies that
\begin{align}
  \prod_{j = 1}^k e^{i\theta \calX_{\vb{p}_j}} = C_1 \prod_{j=1}^k e^{i\theta X_j} C_1^{\dagger} \ .
\end{align}
One can realize $C_1$ with CNOT gates only, since it only transforms Pauli-$X$ operators into Pauli-$X$ products by conjugation.
Iterating this process for the remaining rows in $\overline{\vb{H}}$ and supposing that the maximum iteration is $t$, then
\begin{align}
  U_{\vb{PHQ}, \theta} &= \prod_{j=1}^n e^{i\theta X_j} \prod_{\ell=1}^t \left( C_{\ell} \prod_{j=1}^{k_{\ell}} e^{i\theta X_j} C_{\ell}^{\dagger} \right) \\
  &= \prod_{j=1}^n e^{i\theta X_j} \prod_{\ell=1}^t \left( C'_{\ell} \prod_{j=1}^{k_{\ell}} e^{i\theta X_j} \right) \ ,
\end{align}
where we combine $C_{\ell-1}^{\dagger}$ with $C_{\ell}$.
For security reason, $n < m < 2n$ (discussed in the next section) and one can expect $t$ to be a small constant.
According to Ref.~\cite{aaronson_improved_2004}, $C'_{\ell}$ can be constructed with $O(n^2/\log{n})$ CNOT gates, which means that the gate complexity of $U_{\vb{PHQ}, \theta}$ is also $O(n^2/\log{n})$.
Using results from Ref.~\cite{jiang_optimal_2022}, the circuit depth can be optimized to $O(n/\log{n})$.

\section{Classical attacks and security}\label{sec:classical_attack_security}

In this section, we examine the classical security of our protocol, i.e., the
possibility that an efficient classical prover can pass the test.
A straightforward classical attack is to simulate the IQP circuit sent by the
verifier.
We do not expect this to be efficient, since there is generally no structure to
be exploited by a classical simulation algorithm.
For example, due to the obfuscation as in Eq.~\eqref{eq:obfuscation}, the
geometry of the IQP circuit can be arbitrary, which implies that the treewidth
in a tensor network algorithm cannot be easily
reduced~\cite{markov_simulating_2008}.

Here, we focus on another class of classical attacks based on extracting
secrets.
Given an IQP matrix $\vb{H}$, once the hidden secret $\vb{s}$ is found, a
classical prover can first calculate the correlation function
$\ev{\calZ_{\vb{s}}}$ efficiently.
Then, he generates a sample $\vb{x}$ which is orthogonal to $\vb{s}$ with
probability $(1 + \ev{\calZ_{\vb{s}}})/2$ and not orthogonal to $\vb{s}$ with
probability $(1 - \ev{\calZ_{\vb{s}}})/2$.
The generated samples will have the correct correlation with the secret $\vb{s}$
and hence pass the test.
Kahanamoku-Meyer's attack algorithm for the Shepherd-Bremner construction is an
instance of this class~\cite{kahanamoku-meyer_forging_2023}.

But generally, this attack may not be efficient.
From a code perspective, the stabilizer construction is to sample a random code
satisfying certain constraints, and hide it by adding redundancy and performing
obfuscation.
Finding the secret allows one to find the hidden subcode, which should be a hard
problem in general.
In particular, we formulate the following conjecture.
\begin{conjecture}[Hidden Structured Code (HSC) Problem, Restatement of \cref{conj:hsc-intro}]\label{conj:HSC}
  For certain appropriate choices of $n, m, g$, there exists an efficiently
  samplable distribution over instances $(\vb{H}, \vb{s})$ from the family
  $\calH_{n, m, g}$, so that no polynomial-time classical algorithm can find the
  secret $\vb{s}$ given $n, m$ and $\vb{H}$ as input, with high probability over
  the distribution on $\calH_{n, m, g}$.
\end{conjecture}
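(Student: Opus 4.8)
The plan is necessarily different from proving a theorem: Conjecture~\ref{conj:HSC} is a computational hardness statement, so an unconditional proof would imply separations against \emph{all} polynomial-time classical algorithms, far beyond current techniques. Instead, the strategy I would pursue is to assemble \emph{evidence} by systematically ruling out the natural families of classical attacks and isolating the parameter regime in which those attacks provably fail while the protocol stays useful, i.e.\ $g = O(\log n)$ so that $\abs{\ev{\calZ_{\vb{s}}}} = 2^{-g/2}$ is inverse-polynomial and verification remains efficient, yet $U_{\vb{H}, \pi/8}$ stays hard to simulate.

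First I would dispose of the brute-force attack — directly simulating $U_{\vb{H},\pi/8}$ — by arguing that after the obfuscation of \cref{def:obfuscation} the matrix $\vb{H}$ carries no exploitable structure: the treewidth of the associated tensor network is generically linear in $n$, so contraction costs $2^{\Omega(n)}$, and column redundancy together with the random padding $\vb{R}_{\vb{s}}$ destroys any low-rank or sparsity structure a simulator could latch onto. The core of the argument concerns secret-extraction attacks. Here I would (i) formalize the \emph{Linearity Attack} as the natural generalization of Kahanamoku-Meyer's attack: rather than exploiting the specific weight enumerator of the QRC, it searches for a secret $\vb{s}$ for which a large, low-codimension subset of the rows of $\vb{H}$ lies inside $\vb{s}^{\perp}$, building a candidate set via linear algebra and then sampling biased data for each candidate; (ii) show recovering $\vb{s}$ this way is equivalent to recovering the hidden anti-commuting part $\vb{H}_{\vb{s}}$ (equivalently the subcode $\calC_{\vb{s}}$) from the obfuscated, redundant, padded code generated by $\vb{H}$; and (iii) bound the size of the candidate set the attack must enumerate. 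The crucial quantitative point is that when $n < m < 2n$ and $g = \Theta(\log n)$, the number of decoy vectors consistent with the linear constraints imposed by the rows of $\vb{H}$ grows exponentially with the amount of redundancy, so the candidate search is exponential — and the QRC instances of Ref.~\cite{kahanamoku-meyer_forging_2023} are exactly the degenerate corner ($n = (q+3)/2$, too few redundant rows) where the search collapses to $O(n^3)$, a corner that column redundancy removes. This both supports the conjecture and refutes Claim~3.1 of Ref.~\cite{kahanamoku-meyer_forging_2023} in the generic regime.

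The main obstacle, and the reason this stays a conjecture rather than a theorem, is twofold. First, the Linearity Attack — even generalized — is only \emph{one} attack family; ruling it out does not preclude a cleverer algebraic or lattice-flavoured algorithm, so the honest conclusion is conditional/heuristic security. Second, the newer attacks of Gross and Hangleiter~\cite{gross_secret_2023} impose further structural constraints on $\calH_{n,m,g}$, effectively forbidding certain overly-structured or low-weight choices of $\vb{D}$ and $\vb{F}$; consequently the right object for the conjecture is not all of $\calH_{n,m,g}$ but an efficiently samplable sub-distribution obtained by intersecting every constraint the known attacks dictate. Exhibiting such a sub-distribution explicitly, and showing it is still non-empty, efficiently samplable, and consistent with $g = O(\log n)$, is the final and most delicate step of the argument.
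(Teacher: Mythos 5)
Since this is a conjecture rather than a theorem, the paper itself provides only supporting evidence, not a proof, and your proposal correctly recognizes this and follows essentially the same evidence-gathering program the paper pursues: generalize the Kahanamoku-Meyer attack to the Linearity Attack, show via the kernel-size bound (the paper's Theorem~\ref{thm:kernel_size}, giving $\abs{\ker(\vb{G}_{\vb{d}})} \geq 2^{n-m/2}$ in expectation) that the candidate search blows up exponentially when $n - m/2$ is large, identify $n = (q+3)/2$ as the degenerate corner that made the original QRC scheme vulnerable, and acknowledge that the Gross--Hangleiter attacks force a more constrained sub-distribution of $\calH_{n,m,g}$. Your quantitative criterion ($n < m < 2n$, $g = O(\log n)$) and your framing of the Linearity Attack as recovering the hidden subcode $\calC_{\vb{s}}$ both match the paper's analysis, so this is the same route.
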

Naturally, sampling instances with uniform distribution from $\calH_{n, m, g}$
is more favorable, since it does not put any bias on specific instances.
For the underlying distribution induced by the stabilizer construction
(\cref{alg:stabilizer_construction}), it seems that it is uniform or close to
uniform, as the output instances are random instances satisfying certain natural
constraints imposed by the structure of the family $\calH_{n, m, g}$.
Though, we do not have a rigorous proof for this claim.
Moreover, a similar conjecture was given in Ref.~\cite{shepherd_temporally_2009}
for the family $\calH_{n, m, q}^{\QRC}$, where the problem is to decide whether
a given $\vb{H}$ is from the family $\calH_{n, m, q}^{\QRC}$ or not.
They conjectured that such a problem is $\mathsf{NP}$-complete.
Here, to better align with the classical attack, we consider the problem of
finding the secret $\vb{s}$ instead.

To support \cref{conj:HSC}, we first generalize Kahanamoku-Meyer's attack
algorithm to target any IQP-based verification protocols with $\theta = \pi/8$.
We show that this generalized attack, named the Linearity Attack, fails to break
our construction.
Furthermore, our analysis reveals that the loophole of the original
Shepherd-Bremner construction stems from an improper choice of parameters.
The Shepherd-Bremner construction can be improved by the column redundancy
technique, which enables random sampling from the family
$\calH_{n, m, q}^{\QRC}$ with any possible parameters and thereby fixes the
loophole.

\subsection{Linearity Attack}\label{subsec:linearity_attack}

Classical attacks based on secret extraction aim to mimic the quantum behavior
on certain candidate set $S$.
Observe that given an IQP circuit represented by the binary matrix $\vb{H}$, a
quantum prover can output a sample $\vb{x}$, which has the correlation function
$\ev{\calZ_{\vb{s}}}$ in the direction of $\vb{s}$ for every $\vb{s}$, even if
it is not the secret of the verifier.
If a classical prover can also generate samples that have the correct
correlation with every $\vb{s}$, then he has the power to classically sample
from an IQP circuit, which is implausible~\cite{bremner_classical_2011,
  bremner_average-case_2016}.
However, he has the knowledge that the verifier will only check one secret.
Therefore, a general attack strategy for him is to first reduce the set of
candidate secrets from ${\{0, 1\}}^n$ to a (polynomial-sized) subset $S$, and then
generate samples that have the correct correlation with every vector in the
candidate set.

Here, we discuss Linearity Attack, which is an instance of classical attacks
based on secret extraction and generalizes the attack algorithm in
Ref.~\cite{kahanamoku-meyer_forging_2023}.
It consists of two steps.
First, it uses linear algebraic techniques to construct a candidate set $S$.
Then, the prover calculates the correlation function for every vector in $S$,
and outputs samples that have the correct correlation with those vectors.

\subsubsection{Secret extraction}\label{subsubsec:secret_extraction}

\begin{metaalgorithm}[H]
  \centering
    \begin{algspec}
      \begin{algorithmic}[1]
        \Procedure{ExtractSecret}{$\vb{H}$} %
        \State{Initialize $S \gets \emptyset$.} \Comment{candidate set} %
        \Repeat%
        \State{Uniformly randomly pick $\vb{d} \in \FF_2^n$.}
        \State{Construct $\vb{H}_{\vb{d}}$ and
          $\vb{G}_{\vb{d}} = \vb{H}_{\vb{d}}^T \vb{H}_{\vb{d}}$} %
        \For{each vector $\vb{s}_i \in \ker(\vb{G}_{\vb{d}})$} %
        \If{$\vb{s}_i$ passes certain property check} \Comment{To be specified}
        \State{Add $\vb{s}_i$ to $S$.}
        \EndIf%
        \EndFor%
        \Until{some stopping criterion is met.}
        \State{\textbf{return} $S$} %
        \EndProcedure
      \end{algorithmic}
    \end{algspec}
    \caption{The \textsc{ExtractSecret}($\vb{H}$) procedure of Linearity
      Attack.}\label{alg:extract_secret}
\end{metaalgorithm}

\paragraph{Overview.}
The secret extraction procedure in the Linearity Attack is presented in
Meta-Algorithm~\ref{alg:extract_secret}, which is a generalized version of the
procedure described in Ref.~\cite{kahanamoku-meyer_forging_2023}.
The algorithm begins by randomly selecting a vector $\vb{d}$ and eliminating
rows in $\vb{H}$ that are orthogonal to $\vb{d}$, resulting in
$\vb{H}_{\vb{d}}$.
Subsequently, the algorithm searches for vectors that satisfy certain property
check in $\ker(\vb{G}_{\vb{d}})$, where
$\vb{G}_{\vb{d}} = \vb{H}_{\vb{d}}^T \vb{H}_{\vb{d}}$ represents the Gram matrix
associated with $\vb{d}$.
In what follows, we discuss some technical details and defer the analysis to
Section~\ref{subsec:analysis}.

\paragraph{Secret extraction in Kahanamoku-Meyer's attack.}
Meta-Algorithm~\ref{alg:extract_secret} differs slightly from the approach
described in Ref.~\cite{kahanamoku-meyer_forging_2023}.
In the original algorithm, the classical prover begins by constructing a matrix
$\vb{M} \in \FF_2^{l\times n}$ through linear combinations of rows in $\vb{H}$.
Specifically, after sampling the vector $\vb{d}$, the classical prover proceeds
to sample $l$ random vectors $\vb{e}_1, \ldots, \vb{e}_l$.
Then, the $j$-th row of $\vb{M}$ is defined by,
\begin{align}
  \vb{m}^T_j := \sum_{\substack{\vb{p}^T \in \row(\vb{H}) \\
  \vb{p} \cdot \vb{d} = \vb{p} \cdot \vb{e}_j = 1}} \vb{p}^T \; .
\end{align}
After that, the original algorithm searches for the vectors that can pass
certain property check in $\ker(\vb{M})$ instead.

Our secret extraction algorithm is a generalization and simplification to the
original approach.
In Appendix~\ref{app:secret_extraction_KM}, we show that rows in $\vb{M}$ belong
to the row space of $\vb{G}_{\vb{d}}$.
Therefore, to minimize the size of $\ker(\vb{M})$, one can simply set
$\vb{M} = \vb{G}_{\vb{d}}$, eliminating the need to sample the vectors
$\vb{e}_1, \ldots, \vb{e}_l$.

\paragraph{Property check.}
Next, we discuss the property checks designed to determine whether a vector in
$\ker(\vb{G}_{\vb{d}})$ can serve as a potential secret or not.
In the context of the Shepherd-Bremner construction targeted in
Ref.~\cite{kahanamoku-meyer_forging_2023}, the property check is to check
whether $\vb{s}_i$ in $\ker(\vb{M})$ corresponds to a quadratic-residue code or
not.
To accomplish this, the prover constructs $\vb{H}_{\vb{s}_i}$ for the vector
$\vb{s}_i$ and performs what we refer to as the QRC check, examining whether
$\vb{H}_{\vb{s}_i}$ generates a quadratic-residue code (with possible row
reordering).
However, determining whether a generator matrix generates a quadratic-residue
code is a nontrivial task.
Consequently, the algorithm in Ref.~\cite{kahanamoku-meyer_forging_2023}
attempts to achieve this by assessing the weight of the codewords in the code
generated by $\vb{H}_{\vb{s}_i}$.
In a quadratic-residue code, the weight of the codewords will be either 0 or 3
(mod 4).
But still, there will be exponentially many codewords, and checking the weights
of the basis vectors is not sufficient to ensure that all codewords have weight
either 0 or 3 (mod 4).
So in practice, the prover can only check a small number of the codewords.

For instances derived from the stabilizer construction, the prover will have
less information about the code $\calC_{\vb{s}}$; he only has the knowledge that
this code has a large doubly-even subcode, as quantified by the rank of
$\vb{G}_{\vb{s}}$.
Therefore, the property check for Meta-Algorithm~\ref{alg:extract_secret}
involves checking whether the rank of $\vb{H}_{\vb{s}_i}^T \vb{H}_{\vb{s}_i}$
falls below certain threshold and whether self-dual intersection
$\calD_{\vb{s}_i}$ is doubly-even.
However, determining an appropriate threshold presents a challenge for the
classical prover, who can generally only make guesses.
If the chosen threshold is smaller than the rank of $\vb{G}_{\vb{s}}$, then the
secret extraction algorithm will miss the real secret, even if it lies within
$\ker(\vb{G}_{\vb{d}})$.

\paragraph{Stopping criteria.}
Lastly, various stopping criteria can be employed in the secret extraction
procedure.
One approach is to halt the procedure once a vector successfully passes the
property check, as adopted in Ref.~\cite{kahanamoku-meyer_forging_2023}.
Alternatively, the procedure can be stopped after a specific number of
repetitions or checks.
In our implementation, we utilize a combination of these two criteria.
If no vectors are able to pass the property check before the stopping criterion
is reached, an empty candidate set $S$ is returned, indicating a failed attack.
Conversely, if the candidate set $S$ is non-empty, the attack proceeds to the
classical sampling step to generate classical samples.

\subsubsection{Classical sampling}\label{subsubsec:classical_sampling}

Classical sampling based on multiple candidate secrets is nontrivial.
Mathematically, the problem is formulated as follows.
\begin{problem}\label{prob:classical_sampling}
  Given an IQP circuit $C$ and a candidate set
  $S = \{ \vb{s}_1, \ldots, \vb{s}_t \}$, outputs a sample $\vb{x}$ so that
  \begin{align}
    \EE[{(-1)}^{\vb{x} \cdot \vb{s}_i}] = \ev{\calZ_{\vb{s}_i}} \; ,
 \end{align}
 for $i = 1, \ldots, t$, where $\EE[\cdot]$ is over the randomness of the
 algorithm.
\end{problem}
Note that $\EE[{(-1)}^{\vb{x} \cdot \vb{s}_i}]$ is the expectation value of
Eq.~\eqref{eq:Z_s_estimation}.
We may allow a polynomially-bounded additive error in the problem formulation,
considering the inevitable shot noise due to finite samples.
The complexity of this problem depends on various situations.
To the best of our knowledge, we are not aware of an efficient classical
algorithm that solves this problem in general.
In Appendix~\ref{app:more_attack_classical_sampling}, we present two sampling
algorithms that will work in some special cases.
A sufficient condition for these two sampling algorithms to work is that the
candidate set is an independent subset of ${\{0, 1\}}^n$.

\paragraph{Naive sampling algorithm.}
In this work, we mainly focus on the case $|S| = 1$, in which case the problem
is easy to solve, yet remains worth discussing.
A naive sampling algorithm is as follows.
To generate samples with the correct correlation on $\vb{s}$, one just needs to
output samples that are orthogonal to the candidate vector $\vb{s}'$ with
probability $\beta_{\vb{s}'} = (\ev{\calZ_{\vb{s}'}} + 1)/2$ and otherwise with
probability $1 - \beta_{\vb{s}'}$.
One can prove that if the candidate secret from the \textsc{ExtractSecret}
procedure is the real secret $\vb{s}$, then the generated samples using this
strategy will have the correlation function approximately $\ev{\calZ_{\vb{s}}}$
with the real secret.
Otherwise, the correlation function with the real secret will be zero.
We have the following lemma (see Appendix~\ref{app:equivalent_secrets} for the
proof).

\begin{lemma}\label{lemma:zero_correlation}
  Given a matrix $\vb{H}$ and two vectors $\vb{s} \neq \vb{s}'$, let
  $\ev{\calZ_{\vb{s}}}$ and $\ev{\calZ_{\vb{s}'}}$ be their corresponding
  correlation functions, as defined in Eq.~\eqref{eq:correlation_function}.
  If a sample $\vb{x}$ is generated to be a vector orthogonal to $\vb{s}'$ with
  probability $\beta_{\vb{s}'} = (\ev{\calZ_{\vb{s}'}} + 1)/2$ and otherwise
  with probability $1 - \beta_{\vb{s}'}$, then
  $\EE [{(-1)}^{\vb{x} \cdot \vb{s}}] = 0$.
\end{lemma}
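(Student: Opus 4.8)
The plan is to reduce the claim to a one-line character-sum computation over $\FF_2$. First I would make explicit the sampling model that the statement leaves implicit: conditioned on the internal coin flip, $\vb{x}$ is drawn uniformly from the hyperplane $H_0 := \{\vb{x}\in\FF_2^n : \vb{x}\cdot\vb{s}'=0\}$ with probability $\beta_{\vb{s}'}$, and uniformly from the coset $H_1 := \{\vb{x}\in\FF_2^n : \vb{x}\cdot\vb{s}'=1\}$ with probability $1-\beta_{\vb{s}'}$. By the law of total expectation,
\begin{align}
  \EE\big[(-1)^{\vb{x}\cdot\vb{s}}\big]
  = \beta_{\vb{s}'}\,\EE\big[(-1)^{\vb{x}\cdot\vb{s}}\,\big|\,\vb{x}\in H_0\big]
  + (1-\beta_{\vb{s}'})\,\EE\big[(-1)^{\vb{x}\cdot\vb{s}}\,\big|\,\vb{x}\in H_1\big] \; ,
\end{align}
so it is enough to show that both conditional expectations vanish.

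The key point is that over $\FF_2$ a non-trivial linear functional restricted to an affine hyperplane is balanced. Since $\vb{s}\neq\vb{0}$ (the real secret satisfies $\vb{H}_{\vb{s}}\vb{s}=\vb{1}$) and $\vb{s}\neq\vb{s}'$, we have $\vb{s}\notin\langle\vb{s}'\rangle$, hence $H_0=\langle\vb{s}'\rangle^{\perp}$ is not contained in $\vb{s}^{\perp}$; equivalently the functional $\vb{x}\mapsto\vb{x}\cdot\vb{s}$ is not identically $0$ on the $(n-1)$-dimensional space $H_0$. Such a functional takes each value $0,1$ on exactly $2^{n-2}$ points of $H_0$, so $\EE[(-1)^{\vb{x}\cdot\vb{s}}\mid\vb{x}\in H_0]=0$. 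For the coset, fix any $\vb{x}_0\in H_1$ and write $\vb{x}=\vb{x}_0+\vb{y}$ with $\vb{y}$ uniform on $H_0$; then $(-1)^{\vb{x}\cdot\vb{s}}=(-1)^{\vb{x}_0\cdot\vb{s}}(-1)^{\vb{y}\cdot\vb{s}}$, and averaging over $\vb{y}$ reuses the previous step to give $\EE[(-1)^{\vb{x}\cdot\vb{s}}\mid\vb{x}\in H_1]=(-1)^{\vb{x}_0\cdot\vb{s}}\cdot 0=0$. Plugging both into the display gives $\EE[(-1)^{\vb{x}\cdot\vb{s}}]=0$.

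I do not expect a genuine obstacle; the whole argument is elementary. The only point that needs care is fixing the sampling rule — the statement relies on the natural ``uniform within each coset of $\langle\vb{s}'\rangle^{\perp}$'' reading, and would be false for a degenerate rule (e.g.\ always returning $\vb{0}$ whenever a vector orthogonal to $\vb{s}'$ is requested) — so I would state this assumption at the outset. A minor bookkeeping remark is that the degenerate case $\vb{s}=\vb{0}$ is outside the scope, being excluded by $\vb{H}_{\vb{s}}\vb{s}=\vb{1}$; the candidate $\vb{s}'$ may, however, be arbitrary (including $\vb{0}$), since all that is used is $\vb{s}\notin\langle\vb{s}'\rangle$.
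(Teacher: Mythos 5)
Your proof is correct, and it reaches the same conclusion as the paper — namely that the character $\vb{x}\mapsto(-1)^{\vb{x}\cdot\vb{s}}$ averages to zero on both $H_0$ and $H_1$ — but the execution is cleaner than what the paper does. The paper proceeds through a chain of two small auxiliary results (Proposition~\ref{prop:even_prob} and Lemma~\ref{lemma:one_redundant_row}), established by putting $\vb{s}'$ into a coordinate form with ones in the first $k$ entries, splitting $\vb{p}$ and $\vb{s}$ into two blocks, and running a three-way case analysis on whether $\vb{s}_1=\vb{1}$ and/or $\vb{s}_2=\vb{0}$; the final lemma then just cites ``similar derivations'' for the three remaining conditional probabilities. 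Your argument replaces this coordinate-based case-splitting with a single structural fact — a non-trivial linear functional on $\FF_2^n$ is balanced on any subspace it does not annihilate, and hence on each coset via the shift $\vb{x}=\vb{x}_0+\vb{y}$ — and extracts the required non-triviality, $\vb{s}\notin\langle\vb{s}'\rangle$, directly from $\vb{s}\neq\vb{0}$ and $\vb{s}\neq\vb{s}'$. This buys two things: it avoids the tacit sub-case $\vb{s}_1=\vb{0}$ that the paper's case~(c) silently elides, and it surfaces explicitly the two hypotheses the paper leaves implicit, namely uniform sampling within each coset of $\langle\vb{s}'\rangle^{\perp}$ and the exclusion of $\vb{s}=\vb{0}$. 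Stating those up front, as you do, is a genuine improvement in rigor; the rest of the argument is the paper's argument in coordinate-free form.
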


The above lemma holds even if $\vb{H} \vb{s} = \vb{H} \vb{s}'$, in which case
$\vb{s}$ and $\vb{s}'$ are said to be \emph{equivalent secrets}.
Equivalent secrets have the same non-orthogonal and redundant part, and the
correlation functions $\ev{\calZ_{\vb{s}}}$ and $\ev{\calZ_{\vb{s}'}}$ are the
same.
It is clear that the number of equivalent secrets is given by
$2^{n - \rank(\vb{H})}$, which will be 1 if $\vb{H}$ is of full column rank.
When there are multiple equivalent secrets, it could be the case that the vector
$\vb{s}'$ is returned by the secret extraction procedure, because it can also
pass the property check, even if it is not the real secret itself.
In this case, our previous classical sampling algorithm can only give samples
with zero correlation function on the real secret $\vb{s}$, according to
Lemma~\ref{lemma:zero_correlation}.

\paragraph{Sampling according to $\vb{H}$.}
To address this issue, we propose a second classical sampling algorithm.
Observe that linear combination of rows in $\vb{R}_{\vb{s}}$ gives vectors that
are orthogonal to $\vb{s}$ and summation of an odd number of rows in
$\vb{H}_{\vb{s}}$ gives vectors that are not orthogonal to $\vb{s}$.
We denote the former set of vectors $\SS_0(\vb{s})$ and the latter
$\SS_1(\vb{s})$.
The identification of these sets relies on determining the submatrices
$\vb{H}_{\vb{s}}$ and $\vb{R}_{\vb{s}}$.
To achieve this, it suffices to find a vector $\vb{s}'$ that is equivalent to
the real secret $\vb{s}$.
Therefore, upon receiving the candidate secret $\vb{s}'$ from the secret
extraction procedure, the classical prover proceeds by computing
$\ev{\calZ_{\vb{s}'}}$ and $\beta_{\vb{s}'}$, followed by identifying
$\SS_0(\vb{s}')$ and $\SS_1(\vb{s}')$.
A sample $\vb{x}$ is drawn from $\SS_0(\vb{s}')$ with probability
$\beta_{\vb{s}'}$ and from $\SS_1(\vb{s}')$ with probability
$1 - \beta_{\vb{s}'}$.
If the vector $\vb{s}'$ is equivalent to $\vb{s}$, then this sampling algorithm
will generate samples with the correct correlation function with respect to the
real secret $\vb{s}$, as opposed to the naive sampling algorithm.

This also explains why we consider IQP matrices of full column rank in the
stabilizer construction.
If the classical prover is given an IQP matrix $\vb{H}$ that is not full-rank,
he can always apply an invertible matrix $\vb{Q}$ so that
$\vb{HQ} = (\vb{H}', \vb{0})$, where $\vb{H}'$ is of full column rank.
Then, he runs the secret extraction algorithm on $\vb{H}'$.
Once a candidate secret is found, he can use it to identify the corresponding
$\SS_0$ and $\SS_1$ from the original matrix $\vb{H}$, as well as computing the
correlation function.
Finally, if the identification matches that of the real secret, then using the
second classical sampling algorithm will allow him to pass the test.

\subsection{Analysis}\label{subsec:analysis}

Here, we present analysis on the secret extraction of Linearity Attack.

\paragraph{Probability of sampling a good $\vb{d}$.}
First, we have the following proposition.
\begin{proposition}\label{prop:correct_d}
  Given an IQP matrix $\vb{H}$ and two vectors $\vb{d}$ and $\vb{s}$, we have
  $\vb{G}_{\vb{s}} \vb{d} = \vb{G}_{\vb{d}}\, \vb{s}$, where
  $\vb{G}_{\vb{s}} = \vb{H}_{\vb{s}}^T \vb{H}_{\vb{s}}$ and
  $\vb{G}_{\vb{d}} = \vb{H}_{\vb{d}}^T \vb{H}_{\vb{d}}$.
  Therefore, $\vb{s}$ lies in $\ker(\vb{G}_{\vb{d}})$ if and only if
  $\vb{G}_{\vb{s}} \vb{d} = \vb{0}$, which happens with probability $2^{-g}$
  over all choices of $\vb{d}$, where $g = \rank(\vb{G}_{\vb{s}})$ is the rank
  of $\vb{G}_{\vb{s}}$.
\end{proposition}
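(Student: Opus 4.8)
The plan is to prove the identity $\vb{G}_{\vb{s}}\vb{d}=\vb{G}_{\vb{d}}\vb{s}$ by a one-line computation with the outer-product expansion of the Gram matrix, and then read off the ``if and only if'' statement and the probability count as immediate corollaries.

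First I would record the description of the objects involved: for any vector $\vb{v}\in\FF_2^n$, the submatrix $\vb{H}_{\vb{v}}$ consists exactly of those rows $\vb{p}^T$ of $\vb{H}$ with $\vb{p}\cdot\vb{v}=1$ (the rows that survive after deleting the rows orthogonal to $\vb{v}$), so the associated Gram matrix has the outer-product form $\vb{G}_{\vb{v}}=\vb{H}_{\vb{v}}^T\vb{H}_{\vb{v}}=\sum_{\vb{p}^T\in\row(\vb{H}),\,\vb{p}\cdot\vb{v}=1}\vb{p}\,\vb{p}^T$, all arithmetic being over $\FF_2$. Applying this with $\vb{v}=\vb{s}$ and multiplying by $\vb{d}$ gives
\[
  \vb{G}_{\vb{s}}\,\vb{d}=\sum_{\substack{\vb{p}^T\in\row(\vb{H})\\ \vb{p}\cdot\vb{s}=1}}\vb{p}\,(\vb{p}^T\vb{d})=\sum_{\substack{\vb{p}^T\in\row(\vb{H})\\ \vb{p}\cdot\vb{s}=1,\ \vb{p}\cdot\vb{d}=1}}\vb{p}\,,
\]
where I used that over $\FF_2$ the scalar $\vb{p}^T\vb{d}=\vb{p}\cdot\vb{d}$ acts as a $0/1$ selector. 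Interchanging the roles of $\vb{s}$ and $\vb{d}$ gives $\vb{G}_{\vb{d}}\,\vb{s}=\sum_{\vb{p}\cdot\vb{d}=1,\ \vb{p}\cdot\vb{s}=1}\vb{p}$, and since the summation condition ``$\vb{p}\cdot\vb{s}=\vb{p}\cdot\vb{d}=1$'' is symmetric in $\vb{s}$ and $\vb{d}$, the two right-hand sides agree. This proves $\vb{G}_{\vb{s}}\,\vb{d}=\vb{G}_{\vb{d}}\,\vb{s}$.

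The consequences are then immediate. From the identity, $\vb{s}\in\ker(\vb{G}_{\vb{d}})$ means $\vb{G}_{\vb{d}}\,\vb{s}=\vb{0}$, which is the same as $\vb{G}_{\vb{s}}\,\vb{d}=\vb{0}$, i.e.\ $\vb{d}\in\ker(\vb{G}_{\vb{s}})$; this gives the ``if and only if''. For the probability, $\vb{G}_{\vb{s}}$ is a fixed $n\times n$ matrix over $\FF_2$ of rank $g=\rank(\vb{G}_{\vb{s}})$, so $\ker(\vb{G}_{\vb{s}})$ is a linear subspace of $\FF_2^n$ of dimension $n-g$, hence has exactly $2^{\,n-g}$ elements; drawing $\vb{d}$ uniformly at random from $\FF_2^n$, the event $\vb{G}_{\vb{s}}\,\vb{d}=\vb{0}$ therefore occurs with probability $2^{\,n-g}/2^{\,n}=2^{-g}$.

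I do not anticipate a genuine obstacle here: the statement reduces to elementary linear algebra over $\FF_2$. The only points that need a little care are keeping straight which rows of $\vb{H}$ belong to $\vb{H}_{\vb{v}}$ (precisely those \emph{not} orthogonal to $\vb{v}$), and remembering that because all arithmetic is mod $2$ the inner products $\vb{p}^T\vb{d}$ really behave as indicators rather than integer coefficients, which is what makes the collapse of the double sum valid.
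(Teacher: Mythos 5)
Your proof is correct and takes essentially the same route as the paper: both compute $\vb{G}_{\vb{s}}\vb{d}$ as the sum $\sum_{\vb{p}\cdot\vb{s}=\vb{p}\cdot\vb{d}=1}\vb{p}$ over rows of $\vb{H}$, observe the symmetry in $\vb{s}$ and $\vb{d}$, and then count the kernel of $\vb{G}_{\vb{s}}$ for the probability. The outer-product expansion you use to reach that sum is a minor presentational variant of the paper's description of $\vb{H}_{\vb{s}}\vb{d}$ as a selector vector; the substance is identical.
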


The proof is given in Appendix~\ref{app:prob_good_d}.
This proposition tells us that if the random $\vb{d}$ does not satisfy
$\vb{G}_{\vb{s}} \vb{d} = \vb{0}$, then the verifier's secret $\vb{s}$ will not
lie in $\ker(\vb{G}_{\vb{d}})$.
In this case, Meta-Algorithm~\ref{alg:extract_secret} will not be able to find
the correct secret from the kernel of $\vb{\vb{G}_{\vb{d}}}$, and it has to be
started over with a new $\vb{d}$.

If the correlation function with respect to the real secret has inverse
polynomial scaling, i.e., $2^{-g/2} = \Omega(1/\poly(n))$, then the probability
of sampling a good $\vb{d}$ is also large, which is
$2^{-g} = \Omega(1/\poly(n))$.
This might appear advantageous for the attacker.
But note that a classical attack cannot determine whether the sampled $\vb{d}$
is good or not before he can find the real secret.
In fact, he even cannot \emph{definitively} determine whether a vector
$\vb{s}_i$ in $\ker(\vb{G}_{\vb{d}})$ that passes the property check is the real
secret or not.

\begin{figure*}[t]
  \centering %
  \includegraphics[width = 0.85\textwidth]{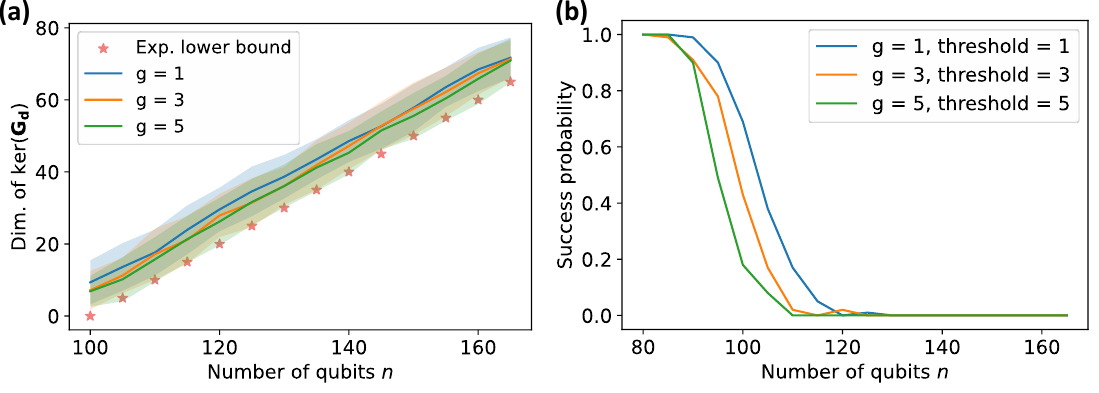}
  \caption{\textbf{(a)} The dimension of $\ker(\vb{G}_{\vb{d}})$ for
    $g = 1, 3, 5$ and the number of rows $m = 200$.
    The asterisks indicate the expected lower bound $n - m/2$.
    \textbf{(b)} The success probability of the attack.
    Here, we set the threshold for the rank in the property check to be the same
    as $g$.}\label{fig:stab_fig}
\end{figure*}

\paragraph{Size of $\ker(\vb{G}_{\vb{d}})$.}
The next question is, how large is the size of $\ker(\vb{G}_{\vb{d}})$.
This is important because the steps before the property check takes $O(n^3)$
time, which comes from the Gaussian elimination used to solve the linear system
to find the kernel of $\vb{G}_{\vb{d}}$.
However, for the property check, the prover will potentially need to check every
vectors in $\ker(\vb{G}_{\vb{d}})$, which takes time proportional to its size.
It is important to note that checking the basis vectors of
$\ker(\vb{G}_{\vb{d}})$ is not sufficient to find the real secret $\vb{s}$,
because the linearity structure is not preserved under taking the Gram matrix.
Even if $\vb{s} \in \ker(\vb{G}_{\vb{d}})$, the basis vectors of the kernel
space can all have high ranks for their associated Gram matrices.
Below, we give an expected lower bound for the size of $\ker(\vb{G}_{\vb{d}})$,
with the proof presented in Appendix~\ref{app:kernel_size}.
\begin{theorem}\label{thm:kernel_size}
  Given $(\vb{H}, \vb{s}) \in \calH_{n, m, g}$, randomly sample a vector
  $\vb{d}$.
  Then, the size of $\ker(\vb{G}_{\vb{d}})$ is greater than $2^{n - m/2}$ in
  expectation over the choice of $\vb{d}$.
\end{theorem}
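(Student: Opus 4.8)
The plan is to control $\rank(\vb{G}_{\vb{d}})$ from above by the number of rows of $\vb{H}$ that survive into $\vb{H}_{\vb{d}}$, take the expectation over $\vb{d}$, and then translate back to a bound on $\EE_{\vb{d}}\big[|\ker(\vb{G}_{\vb{d}})|\big]$ using convexity. Since $\vb{G}_{\vb{d}} = \vb{H}_{\vb{d}}^T \vb{H}_{\vb{d}}$ is an $n \times n$ matrix over $\FF_2$, we have $|\ker(\vb{G}_{\vb{d}})| = 2^{\,n - \rank(\vb{G}_{\vb{d}})}$, so it suffices to show $\EE_{\vb{d}}\big[2^{-\rank(\vb{G}_{\vb{d}})}\big] > 2^{-m/2}$.

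First I would observe the deterministic bound $\rank(\vb{G}_{\vb{d}}) \le \rank(\vb{H}_{\vb{d}}) \le k(\vb{d})$, where
\[
k(\vb{d}) := \big|\{\, \vb{p}^T \in \row(\vb{H}) : \vb{p} \cdot \vb{d} = 1 \,\}\big|
\]
is the number of rows of $\vb{H}$ that are not orthogonal to $\vb{d}$; here I use that the column space of $\vb{H}_{\vb{d}}^T \vb{H}_{\vb{d}}$ lies inside the column space of $\vb{H}_{\vb{d}}^T$, which holds over any field (over $\FF_2$ the inequality can be strict, which only helps). Next, for $\vb{d}$ uniform in $\FF_2^n$, every nonzero row $\vb{p}^T$ satisfies $\vb{p}\cdot\vb{d} = 1$ with probability exactly $1/2$, so by linearity of expectation $\EE_{\vb{d}}[k(\vb{d})] \le m/2$, and therefore $\EE_{\vb{d}}[\rank(\vb{G}_{\vb{d}})] \le m/2$.

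Finally I would combine these: since $t \mapsto 2^{n-t}$ is convex, Jensen's inequality gives $\EE_{\vb{d}}\big[|\ker(\vb{G}_{\vb{d}})|\big] = \EE_{\vb{d}}\big[2^{\,n-\rank(\vb{G}_{\vb{d}})}\big] \ge 2^{\,n - \EE_{\vb{d}}[\rank(\vb{G}_{\vb{d}})]} \ge 2^{\,n - m/2}$. For the strict inequality I would note that $\rank(\vb{G}_{\vb{d}})$ is not almost-surely constant: at $\vb{d} = \vb{0}$ it equals $0$, while at $\vb{d} = \vb{s}$ it equals $g \ge 1$ by \cref{thm:cor_func_rank} (and in the degenerate case $g=0$ one instead uses that a zero row, or the non-degeneracy of $k(\vb{d})$, makes $\EE_{\vb{d}}[k(\vb{d})]<m/2$ strict); hence Jensen is strict. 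The argument is short, so the only real points requiring care are keeping the chain of rank inequalities valid over $\FF_2$ and pinning down where strictness comes from — which is precisely where the structural hypothesis $(\vb{H},\vb{s}) \in \calH_{n,m,g}$ (with $g\ge 1$) is used.
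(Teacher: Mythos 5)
Your proof uses the same core argument as the paper: bound $\rank(\vb{G}_{\vb{d}}) \le \rank(\vb{H}_{\vb{d}}) \le r(\vb{H}_{\vb{d}})$, then compute $\EE_{\vb{d}}[r(\vb{H}_{\vb{d}})] = m/2$ by linearity of expectation over the rows. However, you go a useful step further. The paper's proof stops at $\EE_{\vb{d}}[\dim\ker(\vb{G}_{\vb{d}})] \ge n - m/2$, i.e., a bound on the \emph{expected dimension}, and declares victory; it never explicitly converts this to a bound on the \emph{expected cardinality} $\EE_{\vb{d}}[|\ker(\vb{G}_{\vb{d}})|]$, which is what the theorem literally asserts. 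You supply exactly that conversion via the convexity of $t \mapsto 2^{n-t}$ and Jensen's inequality, which is the right way to close the gap. You also address strictness (the paper only obtains "$\ge$" even at the dimension level), and your observation that $\rank(\vb{G}_{\vb{d}})$ takes value $0$ at $\vb{d}=\vb{0}$ and value $g$ at $\vb{d}=\vb{s}$ is a clean way to see that Jensen is strict when $g \ge 1$. (A small nit: $\rank(\vb{G}_{\vb{s}}) = g$ is most directly condition (2) of \cref{def:IQP_family} rather than a consequence of \cref{thm:cor_func_rank}, and your fallback for the degenerate $g=0$ case is gestured at rather than pinned down, but for the parameter regime of interest $g\ge 1$ always holds.) In short: same skeleton, but your version actually proves the stated inequality about cardinality while the paper's proof only establishes the weaker dimension-in-expectation statement.
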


Therefore, the size of $\ker(\vb{G}_{\vb{d}})$ is increased exponentially by
increasing $n$.
The increase of $n$ can be achieved by adding column redundancy, i.e., adding
more all-zeros columns in \cref{eq:canonical_H_s}.
But in the stabilizer construction, the column redundancy cannot be arbitrarily
large.
Recall that to make the IQP matrix $\vb{H}$ full rank, one needs to add at least
$n - r$ redundant rows, where $r = \rank(\vb{H}_{\vb{s}})$.
If $\vb{H}$ is not full rank, then as we discussed in
Section~\ref{subsubsec:classical_sampling}, the classical prover can always
perform column operations to effectively reduce the number of columns $n$, and
hence reduce the dimension of $\ker(\vb{G}_{\vb{d}})$.

\paragraph{Suggested parameter regime.}
Based on the above analysis, it is important to choose a good parameter regime
to invalidate the Linearity Attack.
Suppose the expected security parameter is $\lambda$, meaning that the expected
time complexity of a classical prover is $\Omega(2^{\lambda})$.
Then, generally we require $n - m/2 \geq \lambda$ for $\ker(\vb{G}_{\vb{d}})$ to
be sufficiently large, and the number of redundant rows $m - m_1 \geq n - r$ for
$\vb{H}$ to be full-rank, where $m_1$ is the number of rows in
$\vb{H}_{\vb{s}}$.
Specifically, for the stabilizer construction, given $n$ and $g$, we randomly
choose the parameter $r \geq g$.
Then, we require that the number of rows in $\vb{H}_{\vb{s}}$ and $\vb{H}$
satisfies
\begin{align}\label{eq:parameter_regime}
  m_1 &\leq n - 2 \lambda + r & m_1 + n - r \leq m &\leq 2(n - \lambda) \; ,
\end{align}
respectively.
In addition, since $m$ is the number of gates in the IQP circuit, we will
require sufficiently large $n$ and $m = \Omega(n)$ to invalidate classical
simulation.

\paragraph{Numerical simulation.}
In \cref{fig:stab_fig}~(a), we plot the dimension of $\ker(\vb{G}_{\vb{d}})$ for
$g = 1, 3, 5$ and $m = 200$.
For each number of columns $n$, we sample $100$ instances from $\calH_{n, m, g}$
with the stabilizer construction (\cref{alg:stabilizer_construction}).
Then, a random $\vb{d}$ is sampled and we calculate the dimension of
$\ker(\vb{G}_{\vb{d}})$.
The asterisks are the expected lower bound $n - m/2$, as shown in
\cref{thm:kernel_size}.
The numerical experiment demonstrates good agreement with the theoretical
prediction.
In \cref{fig:stab_fig}~(b), we present the numerical results for the success
probability of the attack.
Although to invalidate the attack, the maximum number of property checks should
be $2^{50}$ or larger, we set it to be $2^{15}$ for a proof of principle in the
numerical experiment.
For each number of columns $n$, we sample $100$ random instances from
$\calH_{n, m, g}$, where $m = 200$.
Then, the Linearity Attack is applied to each instance and the success
probability is defined as the fraction of successfully attacked instances, which
is the instance that the attacker can classically generate samples to spoof the
test.
As one can see, the success probability decreases to zero as $n$ exceeds
$m/2+15 = 115$, as expected.

\paragraph{Challenge.}
In addition, we have posted a challenge problem as well as the source code for
generation and verification on
GitHub~\cite{challenge},
to motivate further study.
The challenge problem is given by the $\vb{H}$ matrix of a random instance from
$\calH_{n, m, g}$ with $n = 300$ and $m = 360$; the $g$ parameter is hidden
because in practice, the prover can only guess a value.
One needs to generate samples with the correct correlation function in the
direction of the hidden secret to win the challenge.

\subsection{A fix of the Shepherd-Bremner construction}\label{subsec:fix_shepherd_bremner}

Finally, we would like to remark why the attack in
Ref.~\cite{kahanamoku-meyer_forging_2023} can break the Shepherd-Bremner
construction and how we can fix it by adding column redundancy.
Let $\calH^{\QRC}_{n, m, q} = \{(\vb{H}, \vb{s})\}$ be a family of pairs of an
IQP matrix $\vb{H} \in \FF_2^{m\times n}$ and a secret $\vb{s}$ so that
$\vb{H}_{\vb{s}}$ generates a QRC of length $q$ (up to row permutations) and
$\vb{H}$ is of full column rank.
What the construction recipe of Ref.~\cite{shepherd_temporally_2009} does is to
randomly sample instances from $\calH^{\QRC}_{n, m, q}$, where $n = (q+3)/2$ and
$m \geq q$, leaving a loophole for the recent classical
attack~\cite{kahanamoku-meyer_forging_2023}.
To see why the parameter regime is as above, we first note that the length of
QRC is $q$, implying that the number of rows in $\vb{H}_{\vb{s}}$ is $q$ and
hence $m \geq q$.
Moreover, the dimension of a length-$q$ QRC is $(q+1)/2$, which implies that the
rank of $\vb{H}_{\vb{s}}$ is $(q+1)/2$.
But an all-ones column was added in the construction (see
Eq.~\eqref{eq:QRC_main_part}), which is a codeword of QRC, leading to
$n = (q+3)/2$.

In the Shepherd-Bremner construction, the rank of Gram matrix $\vb{G}_{\vb{s}}$
associated with the real secret $\vb{s}$ is 1 according to
Corollary~\ref{cry:QRC_generator}.
Therefore, the probability of choosing a good $\vb{d}$ is $1/2$ (as also shown
in Theorem~3.1 of Ref.~\cite{kahanamoku-meyer_forging_2023}).
However, since the number of columns and the number of rows in $\vb{H}$ is
$n = (q+3)/2$ and $m \geq q$, respectively, the size of $\ker(\vb{G}_{\vb{d}})$
is generally small.
As a result, the prover can efficiently explore the entire
$\ker(\vb{G}_{\vb{d}})$, and if no vector passes the property check, the prover
can simply regenerate $\vb{d}$ and repeat the secret extraction procedure.
The numerical results in Ref.~\cite{kahanamoku-meyer_forging_2023} indicated
that the size of $\ker(\vb{G}_{\vb{d}})$ is indeed constant when applied to the
Shepherd-Bremner construction, which suggests that an efficient classical prover
can pass the test and hence break the original construction.
Specifically, for the challenge instance posted in
Ref.~\cite{shepherd_temporally_2009}, $m$ is taken to be $2q$.
Then, according to \cref{thm:kernel_size}, the dimension of
$\ker(\vb{G}_{\vb{d}})$ is expected to be constant, making it susceptible to the
attack.

To address this issue, the original Shepherd-Bremner construction can be
enhanced by introducing additional column redundancy to extend the number of
columns $n$, which can achieve random sampling from
families $\calH^{\QRC}_{n, m, q}$ with any $n \geq (q+1)/2$ (\cref{app:column_redundancy}).
This hides the dimension information of the hidden QRC.
Combined with other obfuscation techniques in the Shepherd-Bremner construction,
this achieves random sampling from $\calH^{\QRC}_{n, m, q}$ with any possible
parameters.

Below, we propose a parameter regime that can invalidate the attack in
Ref.~\cite{kahanamoku-meyer_forging_2023}.
Given the length $q$ of the QRC, we have $r = (q + 1)/2$ and
$m_1 = q$~\cite{macwilliams1977book}.
So, the first formula in Eq.~\eqref{eq:parameter_regime} gives
$n \geq (q - 1)/2 + 2\lambda$ and the second formula gives the range of the
number of redundant rows $n - (q+1)/2 \leq m_2 \leq 2n - 2\lambda - q$.
In this way, the size of $\ker(\vb{G}_{\vb{d}})$ will be larger than
$2^{\lambda}$ in general, offering a viable solution to fortify the
Shepherd-Bremner construction against the attack.
Note that the column redundancy technique was used in Ref.~\cite{yung_anti-forging_2020} to
scramble a small random IQP circuit into a large one, to maintain the value of
the correlation function, although its connection to the classical security was
not explored.
Moreover, a multi-secret version was explored in Ref.~\cite{snoyman_proof_2020},
which was shown to be more vulnerable to the classical attack instead.

\begin{figure*}[t]
  \centering %
  \includegraphics[width = 0.85\textwidth]{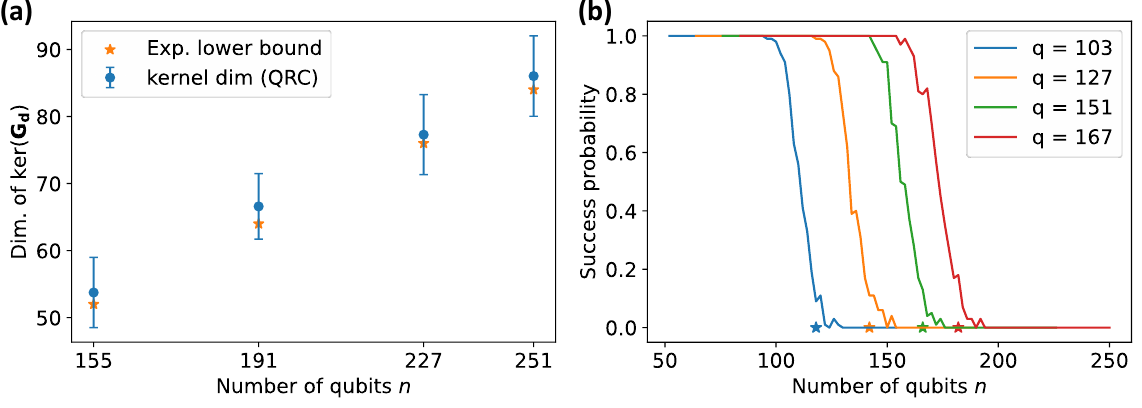}
  \caption{\textbf{(a)} The dimension of $\ker(\vb{G}_{\vb{d}})$ for
    $q = 103, 127, 151, 167$.
    Here, the number of rows and columns are $m = 2q$ and $n = r + q$, where
    $r = (q+1)/2$ is the dimension of QRC.
    \textbf{(b)} The success probability of the attack.
    The asterisks denote the points $(q+15, 0)$.}\label{fig:qrc_fig}
\end{figure*}

We perform numerical experiment to support our previous analysis.
When $m = 2q$, $n$ can be as large as $r+q$ and the expected kernel dimension of
$\vb{G}_{\vb{d}}$ is $r$.
In \cref{fig:qrc_fig}~(a), we plot the kernel dimensions under the setting
$n = r+q$ and $m = 2q$, with $q = 103, 127, 151$ and $167$.
For each parameter set, 100 instances are sampled from $\calH_{n, m, q}^{\QRC}$,
and then a random $\vb{d}$ is sampled for each instance and we evaluate the
dimension of $\ker(\vb{G}_{\vb{d}})$.
We also plot the expected lower bound $n - m/2$ for a comparison. 
In \cref{fig:qrc_fig}~(b), we plot the success probability versus the number of
columns (qubits) $n$.
Here, $m$ is set to be $2q$ and $n$ is increased from $r = (q+1)/2$ to $r+q$. 
For each value of $n$, 100 random instances from $\calH_{n, m, q}^{\QRC}$ are
sampled, and the success probability is the fraction of successful attacks among
them.
We set the security parameter to be $15$ for a proof of principle, meaning that
the maximum number of QRC checks is set to be $2^{15}$.
The success probabilities drop down to zero when $n > q+15$, as expected.
Our analysis and numerical results demonstrate that Claim~3.1 in Ref.~\cite{kahanamoku-meyer_forging_2023}, which originally states that the QRC-based construction can be broken efficiently by the KM attack, turns out to be false under appropriate choices of parameters.

\subsection{More classical attacks}
\label{subsec:more_classical_attacks}

After the first version of this work was posted on arXiv, Gross and Hangleiter proposed several new classical algorithms for secret extraction.
In the following, we will discuss the Radical Attack, the Hamming's Razor Attack and the Double Meyer Attack introduced in~\cite{gross_secret_2023}.
The first two algorithms successfully recovered the secret on our challenge instance, while the Double Meyer Attack is claimed to operate in quasi-polynomial time, though it did not succeed on our challenge instance.
We can choose appropriate parameter regimes to invalidate all these attacks, except for the Hamming's razor, which we have to modify our original construction to circumvent it (see \cref{app:more_classical_attacks}).

For simplicity, we consider the IQP matrix $\vb{H}$ in the unobfuscated form, as shown in Fig.~\ref{fig:iqp_unobfuscated}.
The discussion below can be generalized to the obfuscated form of $\vb{H}$.

\paragraph{Radical Attack.}
The main motivation behind the Radical Attack is that $\rank(\vb{H}^T \vb{H})$ for our challenge instance is large, significantly deviating from the behavior of a random matrix.
Taking the IQP matrix $\vb{H}$ as input, the procedure of the Radical Attack is as follows~\cite{gross_secret_2023}:
\begin{enumerate}
  \item Compute the kernel of $\vb{G} = \vb{H}^T \vb{H}$. Let $\vb{K}$ be the column-generating matrix of $\ker(\vb{G})$.
  \item Compute the joint support $S$ of $\vb{H} \vb{v}$ for $\vb{v}$ in the columns of $\vb{K}$.
  \item Solve $\vb{H} \vb{s} = \vb{1}_{S}$ for $\vb{s}$.
\end{enumerate}

Essentially, the Radical Attack tries to recover the row indices of $\vb{H}_{\vb{s}}$, which is the joint support $S$ in the above procedure.
Specifically, it tries to find vectors $\vb{v} \in \FF_2^n$, satisfying 
\begin{align}\label{eq:radical_attack_condition}
    \vb{H}_{\vb{s}}^T \vb{H}_{\vb{s}} \vb{v} &= \vb{0} \ , & \vb{R}_{\vb{s}} \vb{v} = \vb{0} \ .
\end{align}
In the unobfuscated picture, we have $\vb{H} \vb{v} = \begin{pmatrix} \vb{c}_1 \\ \vb{0} \end{pmatrix}$, where $\vb{c}_1 \in \range(\vb{D}) = \calD_{\vb{s}}$.
Finding sufficiently many such vectors $\vb{v}$ allows the attacker to recover the row indices of $\vb{H}_{\vb{s}}$, and hence the secret $\vb{s}$.
Moreover, such vectors indeed lie in $\ker(\vb{G})$, since $\vb{H}^T \vb{H} \vb{v} = \vb{H}_{\vb{s}}^T \vb{H}_{\vb{s}} \vb{v} + \vb{R}_{\vb{s}}^T \vb{R}_{\vb{s}} \vb{v} = \vb{0}$.
However, not all vectors in $\ker(\vb{G})$ satisfies \cref{eq:radical_attack_condition}.
So, in the actual implementation, a filtering process is applied to only select $\vb{v} \in \ker(\vb{G})$ that satisfies $\vb{H} \vb{v}$ being a doubly-even vector~\cite{gross_secret_2023}.
In this way, the filtered vectors $\vb{v}$ satisfy the conditions in \cref{eq:radical_attack_condition} with high probability.

For the challenge instance, $m_1$ is randomly chosen from values that satisfy the parameter constraints indicated in \cref{app:details_in_stabilizer_construction}.
It turns out that such random choice leads to a large $\ker(\vb{H}^T \vb{H})$, making the Radical Attack a success to recover the secret.

However, the Radical Attack highly depends on the structure of $\vb{H}$, which makes it easy to invalidate.
The desired vectors $\vb{v}$ are from $\vb{0}_g \oplus \FF_2^{n-g}$ in the unobfuscated picture, since it satisfies \cref{eq:radical_attack_condition}.
So, a straightforward way to invalidate the Radical Attack is to make the matrix $(\vb{B}, \vb{C})$ in the unobfuscated form (\cref{fig:iqp_unobfuscated}) full rank, in which case the two conditions in \cref{eq:radical_attack_condition} cannot be satisfied simultaneously.
Indeed, if $(\vb{B}, \vb{C})$ is of full rank, then vectors satisfying $\vb{R}_{\vb{s}} \vb{v} = \vb{0}$ must have nonzero entries in the first $g$ positions, which implies that $\vb{H}_{\vb{s}} \vb{v} \not\in \calD_{\vb{s}}$ and hence $\vb{H}_{\vb{s}}^T \vb{H}_{\vb{s}} \vb{v} \neq \vb{0}$.
Making $(\vb{B}, \vb{C})$ full rank imposes another parameter constraint: $m_2 \geq n - g$.
This is a stronger constraint than our previous one in \cref{app:details_in_stabilizer_construction}, i.e., $m_2 \geq n - g - d$, whose action is to make the IQP matrix $\vb{H}$ full rank.

\paragraph{Variants of Linearity Attack.}
There are two variants of Linearity Attack proposed in Ref.~\cite{gross_secret_2023}, namely, the Lazy Linearity Attack and the Double Meyer Attack.
Neither algorithm succeeds in recovering the secret in the challenge instance.
Here, we discuss the second one and leave the discussion on the first one to \cref{app:more_classical_attacks}.

Recall that in the Linearity Attack, the attacker samples a random vector $\vb{d}$ and the secret vector lies in $\ker(\vb{H}_{\vb{d}}^T \vb{H}_{\vb{d}})$ with probability $2^{-g}$ (\cref{prop:correct_d}), which is generally high since $g = O(\log{n})$ is required for the verification protocol to be sample efficient.
Therefore, to invalidate the Linearity Attack, we need to make the size of the search space $\ker(\vb{H}_{\vb{d}}^T \vb{H}_{\vb{d}})$ large enough.
According to \cref{thm:kernel_size}, the expected dimension of this kernel space is $\lambda_1 := n - m/2$.

Now, the idea of the Double Meyer Attack is that, instead of choosing only one random vector, the attacker can choose $k$ random vectors $\vb{d}_1, \cdots, \vb{d}_k$ independently and search in the intersection of the corresponding $k$ kernel spaces.
The probability that the real secret lies in the intersection space decays to $2^{-gk}$.
But Ref.~\cite{gross_secret_2023} argues that the dimension of the intersection space may shrink exponentially as $k$ increases.
Specifically, they expect the dimension to be $2^{-k} (n - m/2)$.
In that case, $k = O(\log{n})$ random vectors would suffice to make the intersection space of constant size.
Therefore, the running time of the Double Meyer Attack may be of order $O(2^{gk}) = 2^{O(\log^2{n})}$, which suggests that the Double Meyer Attack is a quasi-polynomial time algorithm.

However, we would like to point out that more in-depth analysis of the Double Meyer Attack is needed to support the claim of the quasi-polynomial time complexity.
Moreover, in practice, it is possible to choose appropriate parameters to fail the Double Meyer Attack (see \cref{app:more_classical_attacks}).

\paragraph{Hamming's razor.}
The Hamming's razor has a similar flavor to the Radical Attack, except that it tries to recover the row indices of $\vb{R}_{\vb{s}}$~\cite{gross_secret_2023}.
The procedure of the Hamming's razor is as follows:
\begin{enumerate}
  \item Initialize $S \gets \emptyset$.
  \item Randomly remove $p$ fraction of rows from $\vb{H}$. Denote the remaining submatrix as $\vb{H}'$.
  \item Solve for the kernel of $\vb{H}'$ and let $\vb{K}$ be the column-generating matrix of $\ker(\vb{H}')$.
  \item For each column $\vb{v}$ in $\vb{K}$, add the elements of $\supp(\vb{Hv})$ to $S$.
  \item Repeat steps 2-4 many times.
  \item Solve $\vb{Hs} = \vb{1}_{S^C}$ for $\vb{s}$, where $S^C := [m]\backslash S$.
\end{enumerate}

In its essence, the Hamming's razor tries to find vectors $\vb{v} \in \vb{0}_{g+d} \oplus \FF_2^{n-g-d}$ in the unobfuscated picture, so that $\vb{H} \vb{v} = \begin{pmatrix} \vb{0} \\ \vb{c}_2 \end{pmatrix}$ with $\vb{c}_2 \in \range(\vb{C})$.
Finding sufficiently many such vectors allows the attacker to recover the row indices of $\vb{R}_{\vb{s}}$, and hence the secret $\vb{s}$.

The question is, whether and when the procedure of the Hamming's razor finds the desired vectors $\vb{v} \in \vb{0}_{g+d} \oplus \FF_2^{n-g-d}$.
It turns out that, in the parameter regime that fails the Linearity Attack and its variants, Hamming's razor succeeds in recovering the secret almost surely.
We provide a detailed analysis in \cref{app:more_classical_attacks}.
Intuitively, in our original construction, the first $g+d$ columns of $\vb{H}$ remains of full rank even after removing a large fraction of rows, but the last $n-g-d$ columns could become linearly dependent due to the large block of all-zeros matrix.
Therefore, vectors in the kernel of $\vb{H}'$ could have nonzero entries only in the last $n-g-d$ positions by choosing a suitable fraction $p$.

Based on this analysis, a possible workaround to invalidate the Hamming's razor is to impose the sparsity constraint on the first $g+d$ columns of $\vb{H}$.
We leave the detailed construction to \cref{app:more_classical_attacks}.
Under the modified construction and imposing all previously identified parameter constraints, we generate instances that could fail the Hamming's razor as well as all other known attacks with the parameter set $n = 700, m = 1200$ and $g = 10$~\cite{challenge}.

\section{Discussion}\label{sec:discussion}

In this work, we give the stabilizer scheme for the IQP-based protocols
for verifiable quantum advantage, which focuses on the case $\theta = \pi/8$ in
the IQP circuits.
With the connection between IQP circuits, stabilizer formalism and coding
theory, we study the properties of correlation functions and IQP circuits.
Based on these properties, we give an efficient procedure to sample generator
matrices of random codes satisfying certain conditions, which lies at the core
of our stabilizer scheme.
Then, one needs to hide and obfuscate this generator matrix into a larger
matrix.
We propose a new obfuscation method called column redundancy, which uses the
redundant generator matrix to hide the information of the dimension of the
hidden code.

To explore the classical security of our protocol, we consider a family of
attacks based on extracting secrets.
We conjecture that such attacks cannot be efficient classically for random
instances generated by our stabilizer scheme.
To support this conjecture, we extend the recent attack algorithm on the
QRC-based construction to the general case for $\theta = \pi/8$, which we call
the Linearity Attack.
Our analysis shows that this attack fails to find the secret in polynomial time
by choosing instances from a good parameter regime.
Notably, our column redundancy technique also fixes the loophole in the original Shepherd-Bremner construction {due to Ref.~\cite{kahanamoku-meyer_forging_2023}}.
{In response to the recent new classical attacks by Gross and Hangleiter~\cite{gross_secret_2023}, we propose modification to our original construction and the possible parameter regimes that could resist all currently known classical attacks.}
Our work paves the way for cryptographic verification of quantum computation
advantage in the NISQ era.

There are several open problems for future research.
The most important one is to rigorously prove the security of the IQP-based
verification protocols.
In Conjecture~\ref{conj:HSC}, we state that classical attacks based
on secret extraction is on average hard.
It would be favorable to prove the random self-reducibility of the problem, so
that the hardness conjecture can be relaxed to the worst-case scenario.
For example, recently a worst-to-average-case reduction was found for computing the probabilities of IQP circuits and it would be interesting to see if the techniques of Ref.~\cite{movassagh_hardness_2023} could be leveraged to gain insight into the validity of \cref{conj:HSC}.
Before one can rigorously prove the hardness of classical attacks, one might
gain intuition by considering other possible classical attacks.
{In particular, more in-depth analysis of the recent new classical attacks~\cite{gross_secret_2023} would be helpful to better identify the strengths and weaknesses of our protocol, especially when it comes to the Double Meyer Attack and the Hamming's razor.
On the construction side, more structure is included in our workaround to invalidate the known attacks, which is heuristic and might open opportunities for new attacks.
It would be favorable to consider fundamentally different and more natural distributions on the family $\calH_{n, m, g}$ of instances $(\vb{H}, \vb{s})$; we leave it for future study.}
In terms of implementing the protocol in practice, generating instances
according to a given architecture and noise analysis are also important open problems.

Finally, we would like to point out that IQP circuits with $\theta = \pi/8$ lie in the third level of Clifford hierarchy~\cite{gottesman_demonstrating_1999}.
All quantum circuits in this level feature classical efficient simulation of Pauli observables by definition, which is the reason why $\ev{\calZ_{\vb{s}}}$ can be efficiently computed for IQP circuits.
{A large class of circuits in this level is in the form $C = L_1 D L_2$, where $L_1$ and $L_2$ are Clifford gates and $D$ consists of $T$, $CS$ and $CCZ$ gates~\cite{beigi_c3_2010,cui_diagonal_2017}.
The IQP circuits considered in this work are a special case with $L_1 = L_2 = H^{\otimes n}$.}
It is interesting to explore how to use other circuits in the third level of Clifford hierarchy for demonstrating verifiable quantum advantage.
{The code-based techniques that we establish in this work seem to easily generalize to the third level of Clifford hierarchy.}
We believe that the mathematical structure of the stabilizer scheme provides a
promising avenue for the use of certain cryptographic techniques to improve
the security of {verification protocols based on obfuscated IQP circuits and beyond}, and to construct instances that can be
readily implemented with current technology.

\begin{acknowledgments}
We are grateful to David Gross and Dominik Hangleiter for explaining and discussing their new classical attacks.
We thank Ryan Snoyman for sharing his honors thesis, where he also considered
the same problem and made some insightful observations.
We also thank Scott Aaronson, Earl Campbell, Ryan Mann, Mauro Morales and Man-Hong Yung for
helpful discussions.
BC acknowledges the support by the Sydney Quantum Academy and 
the support by the National Research Foundation, Singapore, and A*STAR under its CQT Bridging Grant and its Quantum Engineering Programme under grant NRF2021-QEP2-02-P05.
MJB acknowledges the support of Google.
MJB acknowledges support by the ARC Centre of Excellence for Quantum Computation
and Communication Technology (CQC2T), project number CE170100012.
ZJ acknowledges the support by National Natural Science Foundation of China
(Grant No. 12347104), National Key Research and Development Program of China
(Grant No. 2023YFA1009403), Beijing Natural Science Foundation (Grant No.
Z220002), and a startup funding from Tsinghua University.
\end{acknowledgments}

\bibliography{refs}

\onecolumngrid

\appendix

\section{Derivation of the overlap of two stabilizer states}
\label{app:overlap}

Here, we would like to prove Proposition~\ref{prop:overlap}, which was discussed in Ref.~\cite{aaronson_improved_2004}.
Recall that given two stabilizer states $\ket{\psi}$ and $\ket{\phi}$, we have:
\begin{enumerate}
    \item $\ip{\psi}{\phi} = 0$ if their stabilizer groups contain the same Pauli operator of the opposite sign. That is, if there exists a Pauli operator $P$, such that $P \in \Stab(\ket{\psi})$ and $-P \in \Stab(\ket{\phi})$, then $\ip{\psi}{\phi} = 0$.
    \item Suppose the condition of the first case does not hold. Let $g$ be the minimum number of different generators (i.e., the number of $i$ s.t. $P_i \neq Q_i$). Then, $\abs{\ip{\psi}{\phi}} = 2^{-g/2}$. 
\end{enumerate}

\begin{proof}[Proof of Proposition~\ref{prop:overlap}]
First, suppose $\Stab(\ket{\psi}) = \langle P_1, \cdots, P_n \rangle$ and $\Stab(\ket{\phi}) = \langle Q_1, \cdots, Q_n \rangle$.
Then, we can write the states as,
\begin{align}
    \dyad{\psi} &= \left( \frac{I + P_1}{2} \right) \cdots \left( \frac{I + P_n}{2} \right) \\
    \dyad{\phi} &= \left( \frac{I + Q_1}{2} \right) \cdots \left( \frac{I + Q_n}{2} \right) \ .
\end{align}
The square of the overlap is then given by, 
\begin{align}
    |\ip{\psi}{\phi}|^2 = \Tr(\dyad{\psi} \dyad{\phi}) = \Tr(\frac{I+P_1}{2} \cdots \frac{I+P_n}{2} \frac{I+Q_1}{2} \cdots \frac{I+Q_n}{2}) \ .
\end{align}
\begin{enumerate}
    \item Suppose $Q_1 = -P_n$. Then, we have,
    \begin{align}
        \frac{I+P_n}{2} \frac{I-P_n}{2} = 0 \ .
    \end{align}
    Thus, $\ip{\psi}{\phi} = 0$ in this case. 

    \item Suppose that $P_i = Q_i$ for $i > g$ and that the group $\langle P_1, \cdots, P_g \rangle$ is not equal to $\langle Q_1, \cdots, Q_g \rangle$. 
    By commutation, we can group the same generators, which gives,
    \begin{align}
        \frac{I+P_i}{2} \frac{I+Q_i}{2} = \frac{I+P_i}{2} \frac{I+P_i}{2} = \frac{I+P_i}{2} \ ,
    \end{align}
    for $i > g$.
    This will eliminate the terms related to $Q_{g+1}, \cdots, Q_n$.
    Then, 
    \begin{align}
        |\ip{\psi}{\phi}|^2 &= \Tr(\frac{I+P_1}{2} \cdots \frac{I+P_n}{2} \frac{I+Q_1}{2} \cdots \frac{I+Q_g}{2}) \\
        &= \frac{1}{2^g} \mel{\psi}{(I+Q_1) \cdots (I+Q_g)}{\psi} \ .
    \end{align}
    For every term $Q_i Q_j \cdots Q_k \neq I$ in the expansion, there exists a Pauli operator $P \in \Stab(\ket{\psi})$ that anticommutes with it; otherwise, the term will be in the stabilizer group of $\ket{\psi}$. For such an operator $Q$, we have $\mel{\psi}{Q}{\psi} = 0$. Indeed, notice that
    \begin{align}
        \mel{\psi}{Q}{\psi} = \mel{\psi}{Q P}{\psi} = -\mel{\psi}{P Q}{\psi} = -\mel{\psi}{Q}{\psi} \ ,
    \end{align}
    which implies $\mel{\psi}{Q}{\psi} = 0$. Finally, we have,
    \begin{align}
        |\ip{\psi}{\phi}|^2 = \frac{1}{2^g} \mel{\psi}{I}{\psi} = \frac{1}{2^g} \ ,
    \end{align}
    and $\abs{\ip{\psi}{\phi}} = 2^{-g/2}$.
\end{enumerate}
\end{proof}

\section{Proofs for coding theory}
\label{app:coding_theory}

We first prove the following proposition.

\begin{proposition}
    The all-ones vector is a codeword of $\calC$ if and only if its dual code $\calC^{\perp}$ is an even code.
\end{proposition}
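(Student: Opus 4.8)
The plan is to exploit the standard $\FF_2$ identity $\vb{v} \cdot \vb{1} = \sum_i v_i \equiv |\vb{v}| \pmod 2$, which says that a vector is orthogonal to the all-ones vector precisely when it has even Hamming weight. Combined with the duality relation $(\calC^{\perp})^{\perp} = \calC$ for linear codes over $\FF_2$ (already recorded in \cref{subsec:coding_theory}, since $\calC^{\perp} = \ker(\vb{H}^T)$ and $\dim \calC + \dim \calC^{\perp} = m$), both directions become one-line deductions.

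First I would prove the forward direction. Assume $\vb{1} \in \calC$. Let $\vb{v}$ be any codeword of $\calC^{\perp}$. By definition of the dual code, $\vb{v} \cdot \vb{w} = 0$ for every $\vb{w} \in \calC$, and in particular $\vb{v} \cdot \vb{1} = 0$. But $\vb{v} \cdot \vb{1} = |\vb{v}| \bmod 2$, so $|\vb{v}|$ is even. Since $\vb{v}$ was arbitrary, $\calC^{\perp}$ is an even code.

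Next I would prove the converse. Assume $\calC^{\perp}$ is an even code. Then for every $\vb{v} \in \calC^{\perp}$ we have $|\vb{v}|$ even, hence $\vb{1} \cdot \vb{v} = |\vb{v}| \bmod 2 = 0$. This says exactly that $\vb{1}$ is orthogonal to every codeword of $\calC^{\perp}$, i.e.\ $\vb{1} \in (\calC^{\perp})^{\perp} = \calC$. This closes the argument.

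There is no real obstacle here; the only point requiring a (standard) citation or a one-line justification is the biduality $(\calC^{\perp})^{\perp} = \calC$, which follows from the two dimension counts $\dim \calC + \dim \calC^{\perp} = m$ and $\dim \calC^{\perp} + \dim (\calC^{\perp})^{\perp} = m$ together with the evident inclusion $\calC \subseteq (\calC^{\perp})^{\perp}$. I would state this explicitly at the start so the backward direction is self-contained.
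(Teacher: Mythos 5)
Your proof is correct and follows essentially the same two-line argument as the paper's: orthogonality to $\vb{1}$ over $\FF_2$ is the same as even Hamming weight, applied in both directions. The only difference is that you make the biduality $(\calC^{\perp})^{\perp} = \calC$ explicit in the converse, whereas the paper leaves it implicit; this is a reasonable point of care but not a different approach.
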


\begin{proof}
    Suppose $\vb{1} \in \calC$. Then for every $\vb{c} \in \calC^{\perp}$, we have $\vb{c} \cdot \vb{1} = 0$, which means that $|\vb{c}|$ is even and hence $\calC^{\perp}$ is an even code.
    Conversely, suppose $\calC^{\perp}$ is an even code. Then, all codewords will be orthogonal to the all-ones vector, and thus it is in $\calC$.
\end{proof}

To prove that a weakly self-dual even code is either a doubly-even code or an unbiased even code, we will need the following lemma.
\begin{lemma}\label{lemma:sum_of_codewords}
    Given two vectors $\vb{c}_1, \vb{c}_2 \in \FF_2^m$ with even parity and $\vb{c}_1 \cdot \vb{c}_2 = 0$, let $\vb{c}_3 = \vb{c}_1 + \vb{c}_2$.
    Then, $|\vb{c}_3| = 0 \pmod{4}$ if $|\vb{c}_1| = |\vb{c}_2| \pmod{4}$ and $|\vb{c}_3| = 2 \pmod{4}$ if $|\vb{c}_1| \neq |\vb{c}_2| \pmod{4}$.
\end{lemma}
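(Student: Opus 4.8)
The plan is to reduce everything to a single counting identity relating the Hamming weight of a sum to the weights of the summands and the size of their overlap. For $\vb{c}_1, \vb{c}_2 \in \FF_2^m$, write $a := \abs{\supp(\vb{c}_1) \cap \supp(\vb{c}_2)}$ for the number of coordinates where both vectors are $1$. A coordinate contributes to $\vb{c}_3 = \vb{c}_1 + \vb{c}_2$ exactly when it lies in the symmetric difference of the two supports, so over the integers
\begin{align}
  \abs{\vb{c}_3} = \abs{\vb{c}_1} + \abs{\vb{c}_2} - 2a \; .
\end{align}
This is the only structural fact needed; the rest is modular arithmetic.

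Next I would observe that the hypothesis $\vb{c}_1 \cdot \vb{c}_2 = 0$ (inner product over $\FF_2$) says precisely that $a$ is even, say $a = 2k$ for some integer $k \geq 0$, since the $\FF_2$ inner product equals $a \bmod 2$. Substituting gives $\abs{\vb{c}_3} = \abs{\vb{c}_1} + \abs{\vb{c}_2} - 4k$, hence
\begin{align}
  \abs{\vb{c}_3} \equiv \abs{\vb{c}_1} + \abs{\vb{c}_2} \pmod 4 \; .
\end{align}

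Finally I would finish by case analysis on the residues of $\abs{\vb{c}_1}$ and $\abs{\vb{c}_2}$ modulo $4$. Since both have even parity, each is congruent to $0$ or $2$ modulo $4$. If $\abs{\vb{c}_1} \equiv \abs{\vb{c}_2} \pmod 4$, then their sum is $0+0 \equiv 0$ or $2+2 \equiv 0 \pmod 4$, so $\abs{\vb{c}_3} \equiv 0 \pmod 4$; if they differ modulo $4$, the sum is $0 + 2 \equiv 2 \pmod 4$, so $\abs{\vb{c}_3} \equiv 2 \pmod 4$. This exhausts both assertions. I do not anticipate a genuine obstacle here: the only point requiring a little care is translating the $\FF_2$ orthogonality condition into the integer statement "$a$ is even," and then keeping the bookkeeping modulo $4$ straight rather than modulo $2$.
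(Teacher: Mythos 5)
Your proof is correct and follows essentially the same route as the paper: both use the identity $\abs{\vb{c}_3} = \abs{\vb{c}_1} + \abs{\vb{c}_2} - 2\,\abs{\supp(\vb{c}_1)\cap\supp(\vb{c}_2)}$, observe that $\FF_2$-orthogonality forces the overlap to be even so that $\abs{\vb{c}_3} \equiv \abs{\vb{c}_1} + \abs{\vb{c}_2} \pmod 4$, and then finish by case analysis on the residues modulo $4$.
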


\begin{proof}
    Let $|\vb{c}_1| = a + 4 k_1$ and $|\vb{c}_2| = b + 4 k_2$, where $0 \leq a, b < 4$. 
    Let the size of joint support of $\vb{c}_1$ and $\vb{c}_2$ be $k_{12}$.
    Then, $\vb{c}_1 \cdot \vb{c}_2 = k_{12} = 0 \pmod{2}$, which means that $k_{12}$ is an even number. 
    So, 
    \begin{align}
        |\vb{c}_3| = a + b - 2 k_{12} + 4 (k_1 + k_2) = a + b \pmod{4} \ .
    \end{align}
    \begin{itemize}
        \item If $|\vb{c}_1| = |\vb{c}_2| \pmod{4}$, we have $a = b = 0$ or $2$. In either case, $|\vb{c}_3| = 0 \pmod{4}$.
        \item If $|\vb{c}_1| \neq |\vb{c}_2| \pmod{4}$, we have $a = 0$ and $b = 2$ or $a = 2$ and $b = 0$. In either case, $|\vb{c}_3| = 2 \pmod{4}$.
    \end{itemize}
\end{proof}

One can adapt the proof of this lemma to show that a doubly-even code is a weakly self-dual code.
To see this, suppose $\calC$ is a doubly-even code, and $\vb{c}_1, \vb{c}_2 \in \calC$.
Then, we have $|\vb{c}_1| = 4 k_1$ and $|\vb{c}_2| = 4 k_2$.
Suppose $\vb{c}_3 = \vb{c}_1 + \vb{c}_2$, which gives $|\vb{c}_3| = 4 (k_1 + k_2) - 2 k_{12}$.
Since $\vb{c}_3$ is also a codeword of a doubly-even code, we have $|\vb{c}_3| = 0 \pmod{4}$, which implies that $k_{12}$ is even and thus $\vb{c}_1 \cdot \vb{c}_2 = 0$.

Now, we are ready to prove Lemma~\ref{lemma:self_dual_even_code}.

\begin{proof}
    Let $\calD$ be a weakly self-dual even code spanned by $\{\vb{c}_1, \cdots, \vb{c}_d\}$.
    Then, $\vb{c}_i$'s are all even-parity and orthogonal to each other.
    Any codeword of $\calD$ can be written as $\vb{c} = a_1 \vb{c}_1 + \cdots + a_d \vb{c}_d$.
    According to Lemma~\ref{lemma:sum_of_codewords}, in the linear combination of $\vb{c}$, if there is an odd number of $\vb{c}_i$'s with weight 2 modulo 4, then $\vb{c}$ will have weight 2 mod 4, and otherwise, $\vb{c}$ will have weight 0 mod 4.
    Therefore, if all $\vb{c}_i$'s have weight 0 modulo 4, then $\calD$ is doubly-even.
    If there exist $\vb{c}_i$'s with weight 2 modulo 4, then $\calD$ is an unbiased even code.
\end{proof}

Given a generator matrix $\vb{H}$, the rank of its Gram matrix $\vb{G} = \vb{H}^T \vb{H}$ is an invariant under a basis change.
That is, $\rank(\vb{Q}^T \vb{G} \vb{Q}) = \rank(\vb{G})$ for $\vb{Q}$ invertible.
It may be tentative to consider this as a direct consequence of Sylvester's law of inertia, but this is not the case since we are working in $\FF_2$.
Nevertheless, this can be proven as follows.
First, the column space of $\vb{GQ}$ is a subspace of $\vb{G}$, which implies $\rank(\vb{GQ}) \leq \rank(\vb{G})$.
On the other hand, the column space of $\vb{G}$ is a subspace of $\vb{GQ}$, because $\vb{GQ} \vb{Q}^{-1} = \vb{G}$, which implies $\rank(\vb{G}) \leq \rank(\vb{GQ})$.
Therefore, we have $\rank(\vb{G}) = \rank(\vb{GQ})$.
Applying this reasoning again gives $\rank(\vb{G}) = \rank(\vb{GQ}) = \rank(\vb{Q}^T \vb{G} \vb{G})$.

The rank of the Gram matrix measures how close a code $\calC$ is to being a self-dual code.
In particular, $\rank(\vb{G}) = \dim(\calC) - \dim(\calD)$, where $\calD := \calC \bigcap \calC^{\perp}$, which is the Proposition~\ref{prop:Gram_matrix_and_code} in the main text.

\begin{proof}[Proof of Proposition~\ref{prop:Gram_matrix_and_code}]
    Suppose $\vb{H} \in \FF_2^{m\times n}$ and let $g = \rank(\vb{G})$, where $\vb{G} = \vb{H}^T \vb{H}$.
    Let $r = \dim(\calC)$ and $d = \dim(\calD)$, where $d \leq r \leq n$.
    We first prove for the case $r = n$.
    In this case, every codeword in $\vb{c} \in \calC$ can be expressed as $\vb{c} = \vb{Ha}$ for a unique $\vb{a} \in \FF_2^n$ and the correspondence is one-to-one.
    Then, we claim that $\vb{Ha} \in \calD$ is equivalent to $\vb{a} \in \ker(\vb{G})$ and thus $d$ is equal to the dimension of $\ker(\vb{G})$, which is $d = r - g$.
    Indeed, if $\vb{Ha} \in \calD$, we have $\vb{Ga} = \vb{H}^T \vb{H} \vb{a} = \vb{0}$, which means that $\vb{a} \in \ker(\vb{G})$.
    Conversely, if $\vb{a} \in \ker(\vb{G})$, we have $\vb{H}^T \vb{H} \vb{a} = \vb{0}$, which means that $\vb{Ha} \in \calC^{\perp}$.
    Since $\vb{Ha} \in \calC$, this implies $\vb{Ha} \in \calD$.

    Now, we consider the case $r < n$.
    In this case, there always exists an invertible matrix $\vb{Q}$ such that $\vb{HQ} = (\vb{H}', \vb{0}_{m\times (n-r)})$ and $\vb{H}' \in \FF_2^{m\times r}$ is a generator matrix of $\calC$ that is of full column rank.
    Moreover, 
    \begin{align}
        \rank(\vb{G}) = \rank(\vb{Q}^T \vb{G} \vb{Q}) = \rank(\vb{H}'^T \vb{H}') \ .
    \end{align}
    Then, applying the previous reasoning to $\vb{H}'$ yields that $d = r - \rank(\vb{H}'^T \vb{H}') = r - g$.
\end{proof}

\section{IQP circuits and stabilizer formalism}
\label{app:IQP_stabilizer}

\subsection{IQP stabilizer tableau}

Theorem~\ref{thm:IQP_tableau} can be proven in the following way.
First, we start with the standard tableau of $\ket{0^n}$, which is 
\begin{align}
    \begin{pmatrix}
        0 & \ldots & 0 & 1 & \ldots & 0 & 0 \\
        \vdots & \ddots & \vdots & \vdots & \ddots & \vdots & \vdots \\
        0 & \ldots & 0 & 0 & \ldots & 1 & 0 \\
    \end{pmatrix} \ ,
\end{align}
corresponding to the stabilizer generators $\{Z_1, \cdots, Z_n\}$.
Then, we apply the local terms in $e^{i\pi H /4}$ one by one, and keep track of the change of the stabilizer tableau.
We have the following lemma which gives the form of $Z_j$ conjugated by $e^{i\pi H/4}$.

\begin{lemma}[Evolution of $Z_j$]
    \label{lemma:evolution_of_stabilizer}
    Let $\mathbf{H} = (\vb{c}_1, \vb{c}_2, \cdots, \vb{c}_n)$ be a binary matrix and let $H = \Ham(\vb{H})$. Then, after the conjugation of $e^{i\pi H /4}$, we have,
    \begin{align}
        e^{i\pi H /4} Z_j e^{-i\pi H /4} = i^{|\vb{c}_j|} \prod_{k=1}^n X_k^{\vb{c}_j \cdot \vb{c}_k} Z_j \ ,
    \end{align}
    where $|\vb{c}_j|$ is the Hamming weight of $\vb{c}_j$.
\end{lemma}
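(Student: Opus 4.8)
The plan is to exploit the fact that $H=\Ham(\vb{H})$ is a sum of pairwise commuting Pauli-$X$ products, so that $e^{i\pi H/4}=\prod_{\vb{p}^T\in\row(\vb{H})}e^{i\pi\calX_{\vb{p}}/4}$, as already noted in the main text. Conjugating $Z_j$ by the full circuit then factors into a composition of conjugations by the single-row unitaries $e^{i\pi\calX_{\vb{p}}/4}$, which I would track one row at a time.

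First I would record the elementary conjugation identity: for a Hermitian Pauli $P$ (so $P^2=I$) and any Pauli $Q$, one has $e^{i\pi P/4}Q e^{-i\pi P/4}=Q$ if $[P,Q]=0$, and $e^{i\pi P/4}Q e^{-i\pi P/4}=e^{i\pi P/2}Q=iPQ$ if $\{P,Q\}=0$; both follow from $e^{i\pi P/4}=\tfrac1{\sqrt2}(I+iP)$ and $e^{i\pi P/2}=iP$. Alongside this I would note two combinatorial facts: $\calX_{\vb{p}}$ anticommutes with $Z_j$ exactly when $p_j=1$, and $\calX_{\vb{p}}$ commutes with every product of Pauli-$X$ operators.

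The main step is an induction over the rows. Enumerating the rows of $\vb{H}$ as $\vb{p}_1^T,\dots,\vb{p}_m^T$ and setting $U_\ell:=\prod_{i=1}^{\ell}e^{i\pi\calX_{\vb{p}_i}/4}$, I would prove that
\begin{align}
  U_\ell\, Z_j\, U_\ell^\dagger \;=\; i^{w_\ell}\Bigl(\prod_{k=1}^n X_k^{a_k^{(\ell)}}\Bigr) Z_j \ ,
\end{align}
where $w_\ell:=\#\{\,i\le\ell : (\vb{p}_i)_j=1\,\}$ and $\vb{a}^{(\ell)}:=\sum_{i\le\ell,\,(\vb{p}_i)_j=1}\vb{p}_i \bmod 2$. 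The base case $\ell=0$ is trivial. For the inductive step, the accumulated $X$-string commutes through $\calX_{\vb{p}_{\ell+1}}$, so only the relation of $\calX_{\vb{p}_{\ell+1}}$ with the trailing $Z_j$ matters: if $(\vb{p}_{\ell+1})_j=0$ the operator is unchanged and $w,\vb{a}$ are unchanged; if $(\vb{p}_{\ell+1})_j=1$ the elementary identity multiplies by $i\,\calX_{\vb{p}_{\ell+1}}$, which bumps $w_\ell$ to $w_\ell+1$ and adds $\vb{p}_{\ell+1}$ to $\vb{a}^{(\ell)}$ modulo $2$. Taking $\ell=m$ then gives $w_m=|\vb{c}_j|$ and $a_k^{(m)}=\sum_i (\vb{p}_i)_j (\vb{p}_i)_k=(\vb{H}^T\vb{H})_{jk}=\vb{c}_j\cdot\vb{c}_k \bmod 2$, which is exactly the claimed formula.

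I expect the only delicate point to be the phase bookkeeping in the inductive step — ensuring the factors of $i$ accumulate precisely as $i^{w_\ell}$ and that no spurious sign is introduced when commuting Paulis past one another — together with the minor convention issue that $\row(\vb{H})$ is a set, so the enumeration should range over the distinct rows (equivalently, one assumes $\vb{H}$ has no repeated rows, as is the case for the matrices used in the construction). Everything else is routine algebra in the Pauli group.
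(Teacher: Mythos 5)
Your proof is correct and follows essentially the same route as the paper's: decompose $e^{i\pi H/4}$ into single-row factors $e^{i\pi\calX_{\vb{p}}/4}$, observe that each either leaves $Z_j$ fixed (if $p_j=0$) or multiplies by $i\,\calX_{\vb{p}}$ (if $p_j=1$), and accumulate the phases and $X$-strings to obtain $i^{|\vb{c}_j|}$ and exponents $\vb{c}_j\cdot\vb{c}_k$. You merely formalize the paper's ``iterate over rows'' step as an explicit induction; your side remark about repeated rows is a fair observation (for the lemma to hold, the product and the weight $|\vb{c}_j|$ must both be taken with multiplicity over the rows of $\vb{H}$, which is how the paper uses $\row(\vb{H})$ throughout).
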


For example, let 
\begin{align}
    \mathbf{H} = 
    \begin{pmatrix}
        1 & 1 & 0 & 0 \\
        0 & 1 & 0 & 1 \\
        1 & 0 & 0 & 1
    \end{pmatrix} \ .
\end{align}
Then, after the conjugation of $e^{i\pi H /4}$, we have
\begin{align}
    Z_1 \to (-1) (X_1 X_2) (X_1 X_4) Z_1 = - Z_1 X_2 X_4 \ .
\end{align}

\begin{proof}
    First, note that $e^{i\pi H /4} = \prod_{\vb{p}^T \in \row(\mathbf{H})} e^{i\pi \calX_{\vb{p}} /4}$, where $\calX_{\vb{p}} := X^{p_1} \otimes \cdots \otimes X^{p_n}$. 
    For each row $\vb{p}^T$, if $p_j = 1$, then
    \begin{align}
        e^{i\pi \calX_{\vb{p}} /4} Z_j e^{-i\pi \calX_{\vb{p}} /4} = e^{i\pi \calX_{\vb{p}} /2} Z_j = i \calX_{\vb{p}} Z_j \ ;
    \end{align}
    and if $p_j = 0$, $Z_j$ will remain unchanged. 
    We suppose $p_j = 1$ for later illustration.
    Then, we apply the operator corresponding to another row $\vb{p}'^T$, which gives,
    \begin{align}
        i e^{i\pi \calX_{\vb{p}'} /4} \calX_{\vb{p}} Z_j e^{-i\pi \calX_{\vb{p}'} /4} = i \calX_{\vb{p}} e^{i\pi \calX_{\vb{p}'} /4} Z_j e^{-i\pi \calX_{\vb{p}'} /4} \ .
    \end{align}
    If $p'_j = 1$, we have that the post-evolution stabilizer is given by $i^2 \calX_{\vb{p}} \calX_{\vb{p}'} Z_j$.
    In general, let $\mathbf{H}_{j}$ be the submatrix of $\mathbf{H}$ that consists of all rows whose $j$-th entry is 1.
    Then, after the conjugation of $e^{i\pi H /4}$, we have 
    \begin{align}
        e^{i\pi H /4} Z_j e^{-i\pi H /4} = i^{|\vb{c}_j|} \prod_{\vb{p}^T \in \row(\mathbf{H}_{j}) } \calX_{\vb{p}} Z_j \ .
    \end{align}
    For the Pauli $X$'s in the above, whether there is the $X_k$ component depends on the number of 1's in both the $j$-th and $k$-th column of $\vb{H}$. 
    Indeed, the exponent of $X_k$ is equal to $\vb{c}_j \cdot \vb{c}_k$. This completes the proof.
\end{proof}

Next, we are ready to prove Theorem~\ref{thm:IQP_tableau}.
\begin{proof}[Proof of Theorem~\ref{thm:IQP_tableau}]
    Since $Z_j$ is the $j$-th stabilizer generator fo $\ket{0^n}$, Lemma~\ref{lemma:evolution_of_stabilizer} actually gives the $j$-th stabilizer generator of $U_{\vb{H}, \pi/4} \ket{0^n}$.
    We can also write it in the following form,
    \begin{align}\label{eq:IQP_stabilizer_generator}
        (-1)^{r_j} \prod_{k=1}^n i^{\vb{c}_j \cdot \vb{c}_j} X_k^{\vb{c}_j \cdot \vb{c}_k} Z_j \ \ ,
    \end{align}
    where $2 r_j + \vb{c}_j \cdot \vb{c}_j = |\vb{c}_j| \pmod{4}$ (note that the inner product is taken over $\FF_2$).
    Therefore, if one uses $00, 01, 10, 11$ to represent $|\vb{c}_j| = 0, 1, 2, 3 \pmod{4}$, then $r_j$ is equal to the first bit.
    Finally, from this form of stabilizer generators, we can write down the stabilizer tableau of $\ket{\psi}$ as 
    \begin{align}
        \begin{pmatrix}
            \vb{c}_1 \cdot \vb{c}_1 & \ldots & \vb{c}_1 \cdot \vb{c}_n & 1 & \ldots & 0 & r_1 \\
            \vdots & \ddots & \vdots & \vdots & \ddots & \vdots & \vdots \\
            \vb{c}_n \cdot \vb{c}_1 & \ldots & \vb{c}_n \cdot \vb{c}_n & 0 & \ldots & 1 & r_n \\
        \end{pmatrix} \ .
    \end{align}
\end{proof}

\subsection{Correlation function}

Here, we prove Theorem~\ref{thm:cor_func_rank} which connects the correlation function $\ev{\calZ_{\vb{s}}}$, the code $\calC_{\vb{s}}$ generated by $\vb{H}_{\vb{s}}$ and the rank of the Gram matrix $\vb{G} = \vb{H}_{\vb{s}}^T \vb{H}_{\vb{s}}$.

\begin{proof}[Proof of Theorem~\ref{thm:cor_func_rank}]
    First, $\vb{H}_{\vb{s}}\, \vb{s} = \vb{1}$ implies that $\calC_{\vb{s}}^{\perp}$ is an even code and so is $\calD_{\vb{s}}$.
    When $\theta = \pi/8$, we have
    \begin{align}
        \ev{\calZ_{\vb{s}}} &= \mel{0^n}{U_{\vb{H}_{\vb{s}}, 2\theta}}{0^n} \\
        &= \mel{0^n}{\prod_{\vb{p}^T \in \row(\vb{H}_{\vb{s}})} e^{i2\theta \calX_{\vb{p}}}}{0^n} \\
        &= \mel{0^n}{\prod_{\vb{p}^T \in \row(\vb{H}_{\vb{s}})} \frac{1}{\sqrt{2}} (I + i\calX_{\vb{p}})}{0^n} \\
        &= \frac{1}{\sqrt{2^m}} \sum_{\vb{a} \in \{0, 1\}^m} i^{|\vb{a}|} \mel{0^n}{\calX_{\vb{a}^T \vb{H}_{\vb{s}}}}{0^n} \\
        &= \frac{1}{\sqrt{2^m}} \sum_{\vb{a}: \vb{a}^T \vb{H}_{\vb{s}} = \vb{0}} i^{|\vb{a}|} \\
        &= \frac{1}{\sqrt{2^m}} \sum_{\vb{a} \in \calC_{\vb{s}}^{\perp}} i^{|\vb{a}|} \ ,
    \end{align}
    where $m$ is the number of rows in $\vb{H}_{\vb{s}}$.
    Since $\calC_{\vb{s}}^{\perp}$ is an even code, we can write,
    \begin{align}
        \label{eq:correlation_dual_code}
        \ev{\calZ_{\vb{s}}} = \frac{1}{\sqrt{2^m}} \left( \sum_{\substack{ \vb{a} \in \calC_{\vb{s}}^{\perp} \\ |\vb{a}| = 0 \mod{4}}} 1 - \sum_{\substack{ \vb{a} \in \calC_{\vb{s}}^{\perp} \\ |\vb{a}| = 2 \mod{4}}} 1  \right) \ .
    \end{align}
    Let $d = \dim(\calD_{\vb{s}})$, $r = \dim(\calC_{\vb{s}})$ and $g = r - d$. 

    One can always find an invertible matrix $\vb{Q}$, such that in $\vb{H}_{\vb{s}} \vb{Q}$, the first $g$ columns are in $\calC_{\vb{s}} \backslash \calD_{\vb{s}}$, the $g$-th to the $r$-th columns form a basis of $\calD_{\vb{s}}$ and the remaining columns are all-zeros. 
    This transformation will not change the value of the correlation function according to Eq.~\eqref{eq:correlation_dual_code}, because it preserves the code $\calC_{\vb{s}}$ and hence the dual code $\calC_{\vb{s}}^{\perp}$.
    Under this transformation, the stabilizer tableau related to $\vb{H}_{\vb{s}} \vb{Q}$ is given by $(\vb{G}', \vb{I}_n, \vb{r}')$. 
    Here, only the top-left $g\times g$ submatrix of $\vb{G}'$ can be nonzero, and all other entries are zero. 
    According to Proposition~\ref{prop:Gram_matrix_and_code}, the rank of $\vb{G}$ is also $g$, which means that the $g\times g$ submatrix is full rank.

    As for the phase column $\vb{r}'$, if the basis of $\calD_{\vb{s}}$ have weight 0 modulo 4, then only the first $g$ entries of $\vb{r}'$ can be nonzero, and all other entries are zero, according to Theorem~\ref{thm:IQP_tableau}. 
    In this case, $\calD_{\vb{s}}$ is a doubly-even code.
    For this set of generators represented by the transformed tableau, the number of non-$Z$ generators is $g$, corresponding to the first $g$ rows of $(\vb{G}', \vb{I}_n, \vb{r}')$.
    This is the minimum number over all possible choices, since the top-left submatrix of $\vb{G}'$ is already full rank.
    Thus, the correlation function is nonzero and has a magnitude $2^{-g/2}$, according to Proposition~\ref{prop:overlap}.

    On the other hand, if in $\vb{H}_{\vb{s}} \vb{Q}$, some basis of $\calD_{\vb{s}}$ have weight 2 modulo 4, then the corresponding entries in $\vb{r}'$ are 1, which gives $Z$-products with minus sign in the stabilizer group of $\ket{\psi_{\vb{s}}} := U_{\vb{H}_{\vb{s}}, \pi/4} \ket{0^n}$.
    This means that $\ket{\psi_{\vb{s}}}$ has zero overlap with $\ket{0^n}$ and hence the correlation function is zero.
    In this case, it can be shown that $\calD_{\vb{s}}$ is an unbiased even code using Lemma~\ref{lemma:self_dual_even_code}.
\end{proof}

\subsection{Stabilizer characterization applied to the Shepherd-Bremner construction}

Here, we prove Corollary~\ref{cry:QRC_generator}, which follows from Theorem~\ref{thm:IQP_tableau} and the properties of QRC.

\begin{proof}
    The rank of QRC is $(q+1)/2$, which means that there are $n = (q+3)/2$ columns in $\vb{H}_{\vb{s}}^{\QRC}$. To prove this corollary, it suffices to prove the following,
    \begin{align}
        |\vb{c}_j| &= 3 \pmod{4} & \vb{c}_j \cdot \vb{c}_k &= 1 \pmod{2} \ ,
    \end{align}
    according to Theorem~\ref{thm:IQP_tableau}. 

    First, the number of non-zero quadratic residues modulo $q$ is $(q-1)/2$. Since $q+1$ is a multiple of 8, we have $|\vb{c}_j| = (q-1)/2 = 3 \pmod{4}$ for $j \neq 1$. For $j = 1$, $|\vb{c}_1| = q = 3 \pmod{4}$. 

    As for the second formula, the cases (a) $j = k$, (b) $j =1$ but $k \neq 1$ and (c) $j \neq 1$ but $k = 1$ follow the proof of the first formula. So, we focus on proving it for $j \neq k \neq 1$. 
    Define the extended QRC by appending an extra parity bit to the codeword of QRC, which equals the Hamming weight of the codeword modulo 2.
    From classical coding theory, the extended QRC is self-dual~\cite{macwilliams1977book}. That is, every two codewords of the extended QRC is orthogonal to each other.  
    For $\vb{c}_j$, the added parity bit is 1, since these columns are odd-parity. 
    Then, the fact that the extended codewords are orthogonal to each other implies that $\vb{c}_j \cdot \vb{c}_k = 1 \pmod{2}$. This proves the form of the stabilizer tableau, which represents the generators $\{ -Y_1 X_2 \cdots X_n, -X_1 Y_2 \cdots X_n, \cdots, -X_1 X_2 \cdots Y_n \}$.
    
    Multiplying the first generator to the remaining $n-1$ generators gives the same stabilizer group with a different set of generators $\langle -Y_1 X_2 \cdots X_n, Z_1 Z_2, Z_1 Z_3 \cdots, Z_1 Z_n \rangle$.
    In this representation, the $Z$-type stabilizer generators have a positive phase and the number of non-$Z$ generator is $g = 1$.
    According to Proposition~\ref{prop:overlap}, the correlation function has a magnitude $1/\sqrt{2}$ (a.k.a. 0.854 probability bias) with respect to the secret, regardless of the size parameter $q$.
\end{proof}

\section{Details in stabilizer construction}
\label{app:details_in_stabilizer_construction}

We first give the following parameter constraints of the stabilizer construction, which are necessary conditions for instances from the family $\calH_{n, m, g}$.
\begin{proposition}[Parameter constraints]
    \label{prop:parameter_constraints}
    Given $(\vb{H}, \vb{s}) \in \calH_{n, m, g}$, let $\calD_{\vb{s}} = \calC_{\vb{s}} \bigcap \calC_{\vb{s}}^{\perp}$, where $\calC_{\vb{s}}$ is the code generated by $\vb{H}_{\vb{s}}$ and $\calC_{\vb{s}}^{\perp}$ is the dual code. 
    Let $m_1$ be the number of rows in $\vb{H}_{\vb{s}}$ and $d = \dim(\calD_{\vb{s}})$, which means $\dim(\calC_{\vb{s}}) = g + d$. 
    Then, we have
    \begin{itemize}   
        \item $g + d \leq n$;
        \item $0 < m_1 \leq m$; 
        \item $n - g - d \leq m - m_1$;
        \item $g + 2d \leq m_1$;
        \item $m_1 = g \mod{2}$.
    \end{itemize}
\end{proposition}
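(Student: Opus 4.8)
The plan is to verify the five stated constraints one at a time; most of them are elementary dimension counts, once one records the basic identifications: $\vb{H}_{\vb{s}}$ is precisely the $m_1\times n$ submatrix of $\vb{H}$ built from those rows $\vb{p}^{T}$ with $\vb{p}\cdot\vb{s}=1$, so $\calC_{\vb{s}}=\range(\vb{H}_{\vb{s}})$ and therefore $\rank(\vb{H}_{\vb{s}})=\dim\calC_{\vb{s}}=g+d$, the last equality being Proposition~\ref{prop:Gram_matrix_and_code}; while $\vb{R}_{\vb{s}}$ carries the remaining $m-m_1$ rows. Only the parity statement $m_1\equiv g\pmod 2$ will require a genuinely analytic input, namely the closed form of the correlation function, so I will treat it last.

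For the four dimension counts I would argue as follows. Constraint $g+d\le n$: the code $\calC_{\vb{s}}$ is spanned by the $n$ columns of $\vb{H}_{\vb{s}}$, so $\dim\calC_{\vb{s}}\le n$. Constraint $0<m_1\le m$: $m_1\le m$ since $\vb{H}_{\vb{s}}$ is a row-submatrix of $\vb{H}$, and $m_1\ge\rank(\vb{H}_{\vb{s}})\ge\rank(\vb{H}_{\vb{s}}^{T}\vb{H}_{\vb{s}})=g\ge 1$ for any genuine instance (if $g=0$ then $\ev{\calZ_{\vb{s}}}=\pm 1$ and the protocol is vacuous, so that case is excluded), whence $m_1>0$. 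Constraint $n-g-d\le m-m_1$: since the row space of $\vb{H}$ is the sum of those of $\vb{H}_{\vb{s}}$ and $\vb{R}_{\vb{s}}$, $n=\rank(\vb{H})\le\rank(\vb{H}_{\vb{s}})+\rank(\vb{R}_{\vb{s}})\le(g+d)+(m-m_1)$, and rearranging gives the claim. Constraint $g+2d\le m_1$: the self-dual intersection $\calD_{\vb{s}}=\calC_{\vb{s}}\cap\calC_{\vb{s}}^{\perp}$ is a subspace of $\calC_{\vb{s}}^{\perp}\subseteq\FF_2^{m_1}$, so $d=\dim\calD_{\vb{s}}\le\dim\calC_{\vb{s}}^{\perp}=m_1-\dim\calC_{\vb{s}}=m_1-g-d$, i.e.\ $g+2d\le m_1$.

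The parity constraint is the step I expect to be the only real obstacle, since it cannot be read off from $\rank(\vb{G}_{\vb{s}})=g$ alone: the value of the $\FF_2$-quadratic form $\vb{v}\mapsto\vb{v}^{T}\vb{G}_{\vb{s}}\vb{v}$ depends on $\vb{v}$, not merely on the rank, so the argument must use that $\vb{s}$ is the distinguished vector with $\vb{H}_{\vb{s}}\vb{s}=\vb{1}$. The route I would take passes through the identity established in the proof of Theorem~\ref{thm:cor_func_rank}, namely $\ev{\calZ_{\vb{s}}}=2^{-m_1/2}\sum_{\vb{a}\in\calC_{\vb{s}}^{\perp}}i^{\lvert\vb{a}\rvert}$ (with $m_1$ the number of rows of $\vb{H}_{\vb{s}}$). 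Since $\vb{H}_{\vb{s}}\vb{s}=\vb{1}$ puts the all-ones vector in $\calC_{\vb{s}}$, Proposition~\ref{prop:all_one_vector} makes $\calC_{\vb{s}}^{\perp}$ an even code, so every weight $\lvert\vb{a}\rvert$ is $0$ or $2$ modulo $4$, each term $i^{\lvert\vb{a}\rvert}$ equals $\pm 1$, and the sum is the integer $N_{0}-N_{2}$, where $N_{0}$ and $N_{2}$ count the codewords of $\calC_{\vb{s}}^{\perp}$ of weight $0$ and $2$ modulo $4$. Hence $2^{m_1/2}\,\lvert\ev{\calZ_{\vb{s}}}\rvert=\lvert N_{0}-N_{2}\rvert$ is a nonnegative integer; and since membership in $\calH_{n,m,g}$ forces $\lvert\ev{\calZ_{\vb{s}}}\rvert=2^{-g/2}$, with this value nonzero because $\calD_{\vb{s}}$ is doubly-even, we obtain $2^{(m_1-g)/2}=\lvert N_{0}-N_{2}\rvert\ge 1$. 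A power $2^{(m_1-g)/2}$ can be a positive integer only when $(m_1-g)/2$ is a nonnegative integer, which yields $m_1\equiv g\pmod 2$ (and re-derives the weaker bound $m_1\ge g$, already implied by $g+2d\le m_1$). This finishes the plan; the one delicate point is this last integrality argument, and everything else is bookkeeping.
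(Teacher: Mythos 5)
Your proof of the first four constraints matches the paper's argument: the first three are the same rank/dimension counts (with your subadditivity of rank for $\vb{H}$ split into $\vb{H}_{\vb{s}}$ and $\vb{R}_{\vb{s}}$ being a rephrasing of the paper's ``$m-m_1 \geq n - \rank(\vb{H}_{\vb{s}})$'' observation), and the fourth ($g+2d\le m_1$) uses exactly the paper's argument via $\dim(\calC_{\vb{s}}) + \dim(\calC_{\vb{s}}^{\perp}) = m_1$ and $\calD_{\vb{s}}\subseteq\calC_{\vb{s}}^{\perp}$.

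The parity constraint $m_1 \equiv g \pmod{2}$ is where you diverge from the paper, and your route is genuinely different and arguably more elegant. The paper obtains it as a byproduct of Theorem~\ref{thm:standard_form_H}, a structural classification of the congruent normal form of $\vb{H}_{\vb{s}}^T\vb{H}_{\vb{s}}$ constrained by $\vb{1}\in\calC_{\vb{s}}$, established by a three-way case analysis on whether $m_1$ is odd or even and whether $\vb{1}\in\calD_{\vb{s}}$. Your argument instead is arithmetic: from the identity $\ev{\calZ_{\vb{s}}} = 2^{-m_1/2}(N_0 - N_2)$ with $N_0, N_2$ the doubly-even and singly-even codeword counts of $\calC_{\vb{s}}^{\perp}$ (so $N_0-N_2\in\mathbb{Z}$), combined with the membership condition forcing $|\ev{\calZ_{\vb{s}}}|=2^{-g/2}>0$, you conclude $2^{(m_1-g)/2}=|N_0-N_2|$ is a positive integer, and the only way a power of $2$ with rational exponent is a positive integer is for the exponent to be a nonnegative integer. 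This is correct, bypasses the normal-form machinery entirely, and even re-derives $m_1\ge g$ for free. The paper's structural route earns its keep because Theorem~\ref{thm:standard_form_H} is also the engine of the construction algorithm (\cref{alg:sample_D}, \cref{alg:sample_F}); for the parity claim in isolation your integrality argument is the cleaner path. One small caveat: your aside for $m_1>0$ leans on excluding $g=0$, but the paper does allow $g=0$ (the doubly-even self-dual case such as extended QRC); in that regime $m_1>0$ follows instead from $m_1\ge g+2d=2d$ unless $d=0$ too, which forces $\vb{H}_{\vb{s}}$ empty and hence $\vb{H}_{\vb{s}}\vb{s}=\vb{1}$ vacuous. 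The paper dismisses this constraint as ``trivial'' as well, so this is not a serious defect.
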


\begin{proof}
    The first constraint is from $\rank(\vb{H}_{\vb{s}}) \leq n$.
    The second one is trivial. 
    The third is due to the fact that $\vb{H}$ is of full column rank, which means that the number of redundant rows should be $m - m_1 \geq n - \rank(\vb{H}_{\vb{s}}) = n - g - d$.
    The fourth is because $\dim(\calC_{\vb{s}}) + \dim(\calC_{\vb{s}}^{\perp}) = m_1$ and $\dim(\calD_{\vb{s}}) \leq \dim(\calC_{\vb{s}}^{\perp})$.
    The fifth one is from \cref{thm:standard_form_H} proved later.
\end{proof}

As stated in the main text, given the parameters $m_1, n, d$ and $g$, the stabilizer construction is reduced to sampling a random $\vb{H}_{\vb{s}}$ and $\vb{s}$, so that $\vb{H}_{\vb{s}} = (\vb{F}, \vb{D}, \vb{0}_{m_1 \times (n-r)})$, and $\vb{H}_{\vb{s}} \vb{s} = \vb{1}$, where $r = d + g$.
Here, $\vb{D} \in \FF_2^{m_1 \times d}$ is a generator matrix of a random doubly-even code $\calD_{\vb{s}} = \calC_{\vb{s}} \bigcap \calC_{\vb{s}}^{\perp}$, and $\vb{F} \in \FF_2^{m_1 \times g}$ span $\calC_{\vb{s}} \slash \calD_{\vb{s}}$ satisfying $\vb{D}^T \vb{F} = \vb{0}$ and $\rank(\vb{F}^T \vb{F}) = g$.

In more details, $\vb{D}$ and $\vb{F}$ shall satisfy the following conditions.
\begin{proposition}[Conditions of $\vb{D}$ and $\vb{F}$]
\label{prop:conditions_D_F}
    Given $(\vb{H}, \vb{s}) \in \calH_{n, m, g}$, let $\vb{H}_{\vb{s}}$ be the rows of $\vb{H}$ that are not orthogonal to $\vb{s}$. Then, there exists an invertible $\vb{Q}$, so that $\vb{H}_{\vb{s}} \vb{Q} = (\vb{F}, \vb{D}, \vb{0})$ and
    \begin{itemize}
        \item $\vb{D}$ consists of $d = r - g$ independent vectors with weight 0 modulo 4, which are orthogonal to each other, with $r = \rank(\vb{H}_{\vb{s}})$.
        \item $\vb{F}$ consists of $g$ independent columns from $\ker(\vb{D}^T)$ which lie outside the column space of $\vb{D}$.
        \item $\vb{F}^T \vb{F}$ is a random full-rank $g$-by-$g$ symmetric matrix.
        \item The all-ones vector $\vb{1}$ either explicitly appears as the first column of $\vb{D}$ or $\vb{F}$, or it can be written as the sum of the first two columns of $\vb{F}$.
    \end{itemize}
\end{proposition}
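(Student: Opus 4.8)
The plan is to extract the decomposition $\vb{H}_{\vb{s}}\vb{Q}=(\vb{F},\vb{D},\vb{0})$ directly from \cref{def:IQP_family} and then verify the four bullets, treating the placement of the all-ones vector as the only point needing real work. First I would record that, for $(\vb{H},\vb{s})\in\calH_{n,m,g}$, the code $\calC_{\vb{s}}=\range(\vb{H}_{\vb{s}})$ has dimension $r:=\rank(\vb{H}_{\vb{s}})$, its self-dual intersection $\calD_{\vb{s}}:=\calC_{\vb{s}}\cap\calC_{\vb{s}}^{\perp}$ is doubly-even of dimension $d:=r-g$ (combining condition~(2) of \cref{def:IQP_family} with \cref{prop:Gram_matrix_and_code}), and $\vb{1}\in\calC_{\vb{s}}$ since $\vb{H}_{\vb{s}}\vb{s}=\vb{1}$. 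I would then fix a generator matrix $\vb{D}$ of $\calD_{\vb{s}}$ and extend $\col(\vb{D})$ by $g$ vectors $\col(\vb{F})\subseteq\calC_{\vb{s}}$ to a basis of $\calC_{\vb{s}}$; appending $n-r$ zero columns gives an $m_1\times n$ matrix whose column space is $\calC_{\vb{s}}$, hence equal to $\vb{H}_{\vb{s}}\vb{Q}$ for some invertible $\vb{Q}$, because any two $\FF_2$-matrices with the same column space and the same number of columns differ by an invertible column operation (both reduce to the same reduced column-echelon form).

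Next I would dispatch the three structural bullets. As $\calD_{\vb{s}}$ is a doubly-even (hence weakly self-dual) code, each column of $\vb{D}$ is a codeword of weight $0\bmod 4$ and distinct columns are orthogonal (\cref{lemma:self_dual_even_code}), giving the first bullet and also $\vb{D}^{T}\vb{D}=\vb{0}$ over $\FF_2$ (off-diagonals vanish by orthogonality, diagonals because weight $0\bmod 4$ is even). The columns of $\vb{F}$ lie in $\calC_{\vb{s}}\subseteq\calD_{\vb{s}}^{\perp}=\ker(\vb{D}^{T})$, are independent, and lie outside $\range(\vb{D})$ by construction, which is the second bullet; moreover $\vb{D}^{T}\vb{F}=\vb{0}$. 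Therefore
\begin{align}
  \vb{Q}^{T}\vb{H}_{\vb{s}}^{T}\vb{H}_{\vb{s}}\vb{Q}
  &= (\vb{F},\vb{D},\vb{0})^{T}(\vb{F},\vb{D},\vb{0})
  = \diag\big(\vb{F}^{T}\vb{F},\,\vb{0},\,\vb{0}\big),
\end{align}
so $\vb{F}^{T}\vb{F}$ is symmetric and, since $\vb{Q}$ is invertible, $\rank(\vb{F}^{T}\vb{F})=\rank(\vb{H}_{\vb{s}}^{T}\vb{H}_{\vb{s}})=g$; being $g\times g$ of rank $g$ it is full rank. Equivalently, the ambient inner product induces a nondegenerate symmetric bilinear form on $\calC_{\vb{s}}/\calD_{\vb{s}}$ (its radical is exactly $\calC_{\vb{s}}\cap\calC_{\vb{s}}^{\perp}=\calD_{\vb{s}}$), and $\vb{F}^{T}\vb{F}$ is a Gram matrix of that form; as the complement basis varies this ranges over all $\FF_2$-congruence representatives, matching the ``random'' quantifier. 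This gives the third bullet.

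For the last bullet I would reorganize the basis around $\vb{1}$, exploiting that any $\vb{e}\in\calD_{\vb{s}}$ is orthogonal to all of $\calC_{\vb{s}}$, so adding a $\calD_{\vb{s}}$-component to a column of $\vb{F}$ changes neither its membership conditions nor $\vb{F}^{T}\vb{F}$. Write $\vb{1}=\vb{e}+\vb{f}$ with $\vb{e}\in\calD_{\vb{s}}$ and $\vb{f}$ in the complement. If $\vb{f}=\vb{0}$, then $\vb{1}\in\calD_{\vb{s}}$ and I reorder $\col(\vb{D})$ to make $\vb{1}$ its first column. If $\vb{f}\neq\vb{0}$, expanding $\vb{1}\cdot\vb{1}$ with $\vb{e}\cdot\vb{f}=0$ and $\vb{e}\cdot\vb{e}=|\vb{e}|\bmod 2=0$ gives $\vb{f}\cdot\vb{f}=\vb{1}\cdot\vb{1}=m_1\bmod 2$. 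When $m_1$ is odd (equivalently $g$ odd, by the fifth bullet of \cref{prop:parameter_constraints}), $\vb{f}$ is non-isotropic, so I extend $\vb{f}$ to a basis of the complement and replace its first vector by $\vb{f}+\vb{e}=\vb{1}$, making $\vb{1}$ the first column of $\vb{F}$. When $m_1$ is even, $\vb{f}$ is isotropic and nonzero, so $g\geq1$ and $g$ is even, hence $g\geq2$; then there is a complement vector $\vb{f}'$ with $\{\vb{f},\vb{f}'\}$ independent, and taking the first two columns of $\vb{F}$ to be $\vb{f}'$ and $\vb{1}+\vb{f}'$ (both legitimate members of $\calC_{\vb{s}}\setminus\calD_{\vb{s}}$, using $\vb{1}\notin\calD_{\vb{s}}$) and extending to a basis makes $\vb{1}$ the sum of the first two columns of $\vb{F}$. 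I would close by noting that in each branch one may further apply column operations to bring $\vb{F}^{T}\vb{F}$ to the specific normal form used in \cref{app:details_in_stabilizer_construction}. I expect this last step to be the main obstacle: keeping the trichotomy for $\vb{1}$ exhaustive and compatible with the $\FF_2$-congruence type of $\vb{F}^{T}\vb{F}$ (the alternating type $\bigoplus_i\vb{J}$, which requires even $g$, versus the type carrying a non-isotropic vector), since which placement of $\vb{1}$ is realizable is governed by $m_1\equiv g\pmod 2$ and by whether $\vb{1}$ lies in the radical $\calD_{\vb{s}}$.
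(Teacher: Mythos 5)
Your proof is correct and takes essentially the same route as the paper's: take $\vb{D}$ to be a generator matrix of $\calD_{\vb{s}}$, extend it by $g$ columns spanning a complement of $\calD_{\vb{s}}$ in $\calC_{\vb{s}}$, pad with zeros to get $\vb{H}_{\vb{s}}\vb{Q}=(\vb{F},\vb{D},\vb{0})$, and check the bullets from \cref{lemma:self_dual_even_code}, the dual-code inclusion $\calD_{\vb{s}}\subset\calC_{\vb{s}}^{\perp}$, the congruence-invariance of the Gram rank, and a basis rearrangement around $\vb{1}$. The paper's own treatment of the last bullet is terse (``if $\vb{1}\in\calD_{\vb{s}}$ put it in $\vb{D}$, else put it first in $\vb{F}$; the sum-of-two alternative by adding the second column to the first''); you give the sharper trichotomy, keyed to $m_1\bmod 2$, and correctly identify that the ``sum of the first two columns'' clause is only forced once you insist on a specific congruent normal form for $\vb{F}^{T}\vb{F}$, as in \cref{thm:standard_form_H}. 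One small wrinkle: you deduce ``$g$ even'' from the constraint $m_1\equiv g\pmod 2$ of \cref{prop:parameter_constraints}, but the paper derives that constraint from \cref{thm:standard_form_H}, which sits downstream of this proposition, so invoked at face value it is circular. You do not actually need it: your own argument already gives the conclusion directly — $\vb{f}\neq\vb{0}$ isotropic together with nondegeneracy of the induced form on $\calC_{\vb{s}}/\calD_{\vb{s}}$ (rank $g$, $g\times g$) rules out $g=1$, hence $g\geq 2$. Citing that instead removes the circularity and the rest of your argument goes through unchanged.
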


\begin{proof}
    The matrix $\vb{D}$ is taken as the generator matrix of the dual intersection $\calD_{\vb{s}}$, which is a doubly-even code.
    The form of $\vb{D}$ follows from Proposition~\ref{lemma:self_dual_even_code}.
    The second condition is because $\calD_{\vb{s}} \subset \calC_{\vb{s}}^{\perp}$, which implies $\vb{D}^T \vb{F} = \vb{0}$. So, columns of $\vb{F}$ lie in $\ker(\vb{D}^T)$.
    The third condition is because $\rank(\vb{F}^T \vb{F}) = \rank(\vb{Q}^T \vb{H}_{\vb{s}}^T \vb{H}_{\vb{s}} \vb{Q}) = \rank(\vb{H}_{\vb{s}}^T \vb{H}_{\vb{s}}) = g$.
    As for the last condition, if $\vb{1} \in \calC_{\vb{s}}$, one can always perform basis change so that $\vb{1}$ explicitly appears in the columns of $\vb{H}_{\vb{s}}$.
    More specifically, if $\vb{1} \in \calD_{\vb{s}}$, the column operation $\vb{Q}$ can transform it as the first column of $\vb{D}$. 
    If not, it can be made as the first column of $\vb{F}$.
    The second part of this condition can be achieved by adding the second column of $\vb{F}$ to the first.
\end{proof}

\paragraph{Standard form.}

Although Ref.~\cite{kim_two_2008} gave the congruent standard form of any symmetric matrix over $\FF_2$, for our purpose of ensuring the all-ones vector being a codeword, we consider a slightly different standard form.
Such a standard form can be achieved by $\vb{H}_{\vb{s}} = (\vb{F}, \vb{D}, \vb{0})$ where $\vb{F}$ and $\vb{D}$ satisfy the conditions in \cref{prop:conditions_D_F}.
Then, the construction algorithm only needs to sample $\vb{H}_{\vb{s}}$ so that $\vb{H}_{\vb{s}}^T \vb{H}_{\vb{s}}$ is in the standard form. 
Note that $\vb{H}_{\vb{s}}$ itself may not have a unique standard form under row permutations and column operations.

\begin{theorem}
    \label{thm:standard_form_H}
    Let $(\vb{H}, \vb{s})$ be a random instance from $\calH_{n, m, g}$ and let $\vb{H}_{\vb{s}}$ be the rows of $\vb{H}$ satisfying $\vb{H}_{\vb{s}} \ \vb{s} = \vb{1}$. 
    Then, there exists an invertible matrix $\vb{Q}$, so that $\vb{H}_{\vb{s}} \vb{Q}$ is in the form of $(\vb{F}, \vb{D}, \vb{0})$ with $\vb{D}$ and $\vb{F}$ satisfying the conditions in \cref{prop:conditions_D_F} and
    \begin{align}
    \label{eq:standard_form_H}
        \vb{Q}^T \vb{H}_{\vb{s}}^T \vb{H}_{\vb{s}} \vb{Q} = 
        \begin{pmatrix}
            \vb{I} \\
             & \vb{J} \\
             & & \ddots \\
             & & & \vb{J} \\
             & & & & \vb{0}_{(n-g) \times (n-g)}
        \end{pmatrix}
        \text{ or }
        \begin{pmatrix}
            \vb{J} \\
             & \ddots \\
             & & \vb{J} \\
             & & & \vb{0}_{(n-g) \times (n-g)}
        \end{pmatrix} \ ,
    \end{align}
    where $\vb{I}$ is either 1 or $\vb{I}_2$ and $\vb{J} := \begin{pmatrix} 0 & 1 \\ 1 & 0 \end{pmatrix}$.
    In addition, $m_1 = g \mod{2}$, where $m_1$ is the number of rows in $\vb{H}_{\vb{s}}$.
\end{theorem}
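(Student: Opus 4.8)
The plan is to build $\vb{Q}$ in two stages — first a change of basis isolating the doubly-even part $\calD_{\vb{s}}$, then a congruence acting only on the surviving $g\times g$ block to reach the prescribed normal form — while tracking where the all-ones vector lands. First I would invoke \cref{prop:conditions_D_F} to obtain an invertible $\vb{Q}_0$ with $\vb{H}_{\vb{s}}\vb{Q}_0 = (\vb{F},\vb{D},\vb{0})$, where the columns of $\vb{D}$ are a basis of the doubly-even code $\calD_{\vb{s}}$ and $\vb{F}$ spans a complement of $\calD_{\vb{s}}$ in $\calC_{\vb{s}}$. By \cref{lemma:self_dual_even_code} the columns of $\vb{D}$ have weight $\equiv 0\pmod 4$ and are pairwise orthogonal, so $\vb{D}^T\vb{D}=\vb{0}$; and since $\calD_{\vb{s}}\subseteq\calC_{\vb{s}}^{\perp}$ we also get $\vb{D}^T\vb{F}=\vb{0}$. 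A block computation then yields
\begin{align}
  \vb{Q}_0^T\,\vb{H}_{\vb{s}}^T\vb{H}_{\vb{s}}\,\vb{Q}_0 = \diag\!\left(\vb{F}^T\vb{F},\ \vb{0}_{(n-g)\times(n-g)}\right),
\end{align}
with $\vb{F}^T\vb{F}$ symmetric of full rank $g$ by \cref{prop:Gram_matrix_and_code}. It then remains to normalise $\vb{F}^T\vb{F}$ by a congruence $\vb{Q}_1$ acting on the first $g$ coordinates only — so that $\vb{H}_{\vb{s}}\bigl(\vb{Q}_0(\vb{Q}_1\oplus\vb{I}_{n-g})\bigr) = (\vb{F}\vb{Q}_1,\vb{D},\vb{0})$ keeps the required shape and the conditions of \cref{prop:conditions_D_F} are preserved — and to verify the placement of $\vb{1}$ and the parity of $m_1$.

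The key observation is that the image of $\vb{1}$ in $\calC_{\vb{s}}/\calD_{\vb{s}}$ is the canonical ``diagonal'' vector of the form $\vb{F}^T\vb{F}$. Writing $\vb{1}=\vb{H}_{\vb{s}}\vb{s}=\vb{F}\vb{v}+\vb{D}\vb{w}$ and using $\vb{D}^T\vb{F}=\vb{0}$, $\vb{D}^T\vb{D}=\vb{0}$ together with the evenness of doubly-even codewords, one checks that for every $\vb{u}\in\FF_2^g$,
\begin{align}
  \vb{u}^T\vb{F}^T\vb{F}\,\vb{v} = (\vb{F}\vb{u})\cdot\vb{1} = |\vb{F}\vb{u}| \bmod 2 = \vb{u}^T\vb{F}^T\vb{F}\,\vb{u},
\end{align}
so $(\vb{F}^T\vb{F})\vb{v}$ is precisely the diagonal of $\vb{F}^T\vb{F}$; by nondegeneracy $\vb{v}$ is the unique such vector, and moreover $\vb{v}^T\vb{F}^T\vb{F}\vb{v} = |\vb{1}| \bmod 2 = m_1 \bmod 2$. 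Now I would apply the $\FF_2$-classification of nondegenerate symmetric bilinear forms~\cite{kim_two_2008}. If the form is alternating, then $g$ is even, $\vb{v}=\vb{0}$ (so $\vb{1}\in\calD_{\vb{s}}$), and after a further basis change of $\vb{D}$ — which leaves the zero block untouched — $\vb{1}$ becomes the first column of $\vb{D}$ while $\vb{F}^T\vb{F}\sim_c\bigoplus_i\vb{J}$. Otherwise the form is congruent to $\vb{I}_g$, hence to $\vb{I}_a\oplus\bigoplus_i\vb{J}$ with $a=1$ if $g$ is odd and $a=2$ if $g$ is even; in any such normal basis the diagonal vector is the sum of the first $a$ basis vectors, so automatically $\vb{v}$ becomes the first column of $\vb{F}$ (case $a=1$) or the sum of its first two columns (case $a=2$), matching condition (4) of \cref{prop:conditions_D_F}. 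In every case $m_1\equiv\vb{v}^T\vb{F}^T\vb{F}\vb{v}\equiv a\equiv g\pmod 2$ (reading $a=0$ in the alternating case), which also discharges the parity claim used in \cref{prop:parameter_constraints}.

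The step I expect to be the crux — and the reason the theorem states the slightly nonstandard normal form $\vb{I}_a\oplus\bigoplus\vb{J}$ rather than simply $\vb{I}_g$ — is the simultaneous control of the Gram normal form and the position of $\vb{1}$: a congruence that normalises $\vb{F}^T\vb{F}$ could a priori displace $\vb{1}$ from its prescribed slot. The resolution is exactly the identification above of the image of $\vb{1}$ with the canonical diagonal vector, which occupies the ``obvious'' position in every normal basis; getting that identification cleanly, and extracting $m_1\equiv g\pmod 2$ from $\vb{v}^T\vb{F}^T\vb{F}\vb{v}=|\vb{1}|\bmod 2$, is where the real content lies. The remaining ingredients — \cref{prop:conditions_D_F}, the block formula for the Gram matrix, and the quotable congruence classification over $\FF_2$ — are routine.
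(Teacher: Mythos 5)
Your proposal is correct in essence but takes a genuinely different — and cleaner — route from the paper. The paper proceeds by a case analysis on the parity of $m_1$ and on whether $\vb{1}\in\calD_{\vb{s}}$: in each case it positions $\vb{1}$ as a prescribed column of $\vb{F}$ or $\vb{D}$ \emph{first} (using the last clause of \cref{prop:conditions_D_F}), then runs an explicit Gaussian-style congruence reduction on $\vb{F}^T\vb{F}$ (with a special $\vb{J}_1=\bigl(\begin{smallmatrix}0&1\\1&1\end{smallmatrix}\bigr)$ block followed by a column-addition to clean up $\vb{I}_2$) while tracking that $\vb{1}$ survives each step. You instead identify the image of $\vb{1}$ in $\calC_{\vb{s}}/\calD_{\vb{s}}$ with the characteristic (``Wu'') element $\vb{v}$ of the form $\vb{F}^T\vb{F}$ via $(\vb{F}^T\vb{F})\vb{v}=\diag(\vb{F}^T\vb{F})$, which simultaneously pins down where $\vb{1}$ must land in any normal basis \emph{and} gives the parity constraint $m_1\equiv\vb{v}^T(\vb{F}^T\vb{F})\vb{v}\equiv a\equiv g\pmod 2$ in a single line. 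That unifies what the paper derives case-by-case and shows the $\vb{I}_a\oplus\bigoplus\vb{J}$ normal form is not an arbitrary choice but the one in which $\vb{v}$ is the indicator of the first $a$ coordinates.

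One small but genuine gap: you conflate the coefficient vector $\vb{v}\in\FF_2^g$ with ``the first column of $\vb{F}$,'' and more importantly your argument only shows $\vb{F}\vb{v}=\vb{1}+\vb{D}\vb{w}$, i.e.\ $\vb{1}$ equals the sum of the first $a$ columns of (the normalised) $\vb{F}$ only \emph{modulo} $\calD_{\vb{s}}$. Condition (4) of \cref{prop:conditions_D_F} requires $\vb{1}$ to be \emph{exactly} the first column of $\vb{D}$ or $\vb{F}$, or exactly the sum of the first two columns of $\vb{F}$. You handle the $a=0$ case (a further basis change of $\vb{D}$) but not $a\in\{1,2\}$: there you need one more observation, namely that replacing $\vb{F}_1\mapsto\vb{F}_1+\vb{D}\vb{w}$ is an invertible column operation on $(\vb{F},\vb{D},\vb{0})$ that leaves the Gram matrix unchanged (because $\vb{D}^T\vb{F}=\vb{0}$ and $\vb{D}^T\vb{D}=\vb{0}$), so the $\vb{D}\vb{w}$ correction can be absorbed into $\vb{F}_1$ without disturbing the normal form. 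With that sentence added, the proof is complete.
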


\begin{proof}
    First, according to \cref{prop:conditions_D_F}, $\vb{H}_{\vb{s}}^T \vb{H}_{\vb{s}} \sim_c \vb{F}^T \vb{F} \oplus \vb{D}^T \vb{D} \oplus \vb{0}$.
    For the $\vb{D}$ matrix, we have $\vb{D}^T \vb{D} = \vb{0}_{d\times d}$, which already matches standard form, where $d = \rank(\vb{H}_{\vb{s}}) - g$.
    So, we focus on $\vb{G}' := \vb{F}^T \vb{F}$.
    The matrix $\vb{F}$ satisfies $\vb{D}^T \vb{F} = \vb{0}$ and $\rank(\vb{F}^T \vb{F}) = g$, where $g$ is the number of columns in $\vb{F}$.
    We want to find an invertible matrix $\vb{Q}'$, which leaves the $\vb{D}$ matrix unchanged and only changes the $\vb{F}$ matrix, so that $\vb{G}'$ is equal to the top-left $g\times g$ submatrix in the standard form.
    We first discuss the congruent standard form of general full-rank symmetric matrix.
    \begin{enumerate}
        \item First, suppose that not all diagonal elements of $\vb{G}'$ are zero. 
        In this case, we can assume $\vb{G}'_{11} = 1$, because otherwise, we can always apply a permutation matrix to $\vb{F}$, so that the nonzero diagonal element of $\vb{G}'$ is moved to the $(1, 1)$-location.
        Then, up to congruent transformations,
        \begin{align}
        \vb{G}' = 
        \begin{pmatrix}
            1 & \vb{g}^T \\
            \vb{g} & \vb{G}_1 
        \end{pmatrix} \ .
        \end{align}
        Let,
        \begin{align}\label{eq:case_1_Q}
        \vb{Q}_1 = \begin{pmatrix}
            1 & \vb{g}^T \\ 0 & \vb{I}
        \end{pmatrix} \ .
        \end{align}
        We have,
        \begin{align}\label{eq:case_1}
        \vb{Q}_1^T \vb{G}' \vb{Q}_1 = \begin{pmatrix}
            1 & \vb{0}^T \\ \vb{0} & \vb{g} \vb{g}^T + \vb{G}_1
        \end{pmatrix} \ .
        \end{align}

        \item If $\vb{G}'_{jj} = 0$ for $1 \leq j \leq g$, then without loss of generality, we can assume $\vb{G}_{12}' = 1$; otherwise, we can apply a permutation matrix to swap the the non-zero entry to the $(1, 2)$ and $(2, 1)$ positions.
        In this case, up to congruent transformations,
        \begin{align}
        \vb{G}' = \begin{pmatrix}
            \vb{J} & \vb{G}_2 \\ \vb{G}_2^T & \vb{G}_3
        \end{pmatrix} \ ,
        \end{align}
        where $\vb{J} = \begin{pmatrix} 0 & 1 \\ 1 & 0 \end{pmatrix}$.
        Let
        \begin{align}\label{eq:case_2_Q}
        \vb{Q}_1 = \begin{pmatrix}
            \vb{I}_2 & \vb{J} \vb{G}_2 \\ \vb{0} & \vb{I}_{g - 2}
        \end{pmatrix} \ .
        \end{align}
        Then, 
        \begin{align}
        \vb{Q}_1^T \vb{G}' \vb{Q}_1 = \begin{pmatrix}
            \vb{J} & \vb{0}^T \\ \vb{0} & \vb{G}_2^T \vb{J} \vb{G}_2 + \vb{G}_3
        \end{pmatrix} \ .
        \end{align}
    \end{enumerate}
    Therefore, we have $\vb{F}^T \vb{F} \sim_c \left( \vb{I} \oplus \right) \ \vb{J} \oplus \cdots \oplus \vb{J}$.

    \begin{itemize}
        \item If $m_1$ is odd, then $\vb{1}$ cannot be in $\calD_{\vb{s}}$. 
        In this case, $\vb{1}$ is the first column of $\vb{F}$, according to \cref{prop:conditions_D_F}. 
        Then, $\vb{G}'_{11} = 1$ and applying the transformation of \cref{eq:case_1_Q} leads to a matrix in the form of \cref{eq:case_1}. This implies that all other columns in the new $\vb{F}$ are orthogonal to $\vb{1}$ and hence have even parity.
        Therefore, we have $\vb{G}'_{jj} = 0$ for $j > 1$ and $\vb{G}' \sim_c 1 \oplus \left( \bigoplus\limits_{i = 1}^{(g-1)/2} \vb{J} \right)$.

        \item If $m_1$ is even and $\vb{1} \not\in \calD_{\vb{s}}$, then $\vb{1}$ is also the first column of $\vb{F}$, according to \cref{prop:conditions_D_F}.
        Then, we can assume that $\vb{G}_{12}' = 1$, as what we did in proving the general congruent standard form.
        This implies that the second column of $\vb{F}$ must be odd-parity, and so $\vb{G}_{22}' = 1$.
        As as result, up to congruent transformations, 
        \begin{align}
        \vb{G}' = \begin{pmatrix}
            \vb{J}_1 & \vb{G}_2 \\ \vb{G}_2^T & \vb{G}_3
        \end{pmatrix} \ ,
        \end{align}
        where $\vb{J}_1 = \begin{pmatrix} 0 & 1 \\ 1 & 1 \end{pmatrix}$.
        Let 
        \begin{align}
        \vb{Q}_1 = \begin{pmatrix}
            \vb{I}_2 & \vb{J}_3 \vb{G}_2 \\ \vb{0} & \vb{I}_{g - 2}
        \end{pmatrix} \ ,
        \end{align}
        where $\vb{J}_3 = \begin{pmatrix} 1 & 1 \\ 1 & 0 \end{pmatrix}$.
        Then, 
        \begin{align}
        \vb{Q}_1^T \vb{G}' \vb{Q}_1 = \begin{pmatrix}
            \vb{J}_1 & \vb{0}^T \\ \vb{0} & \vb{G}_2^T \vb{J}_3 \vb{G}_2 + \vb{G}_3
        \end{pmatrix} \ .
        \end{align}
        From the second row (column) of $\vb{Q}_1^T \vb{G}' \vb{Q}_1$, one can see that only the second column of the new $\vb{F}$ is odd-parity and all other columns are even parity, which means that the diagonal elements of $\vb{G}_2^T \vb{J}_3 \vb{G}_2 + \vb{G}_3$ are zero.
        Then, \cref{eq:case_2_Q} is repeatedly applied, so that $\vb{G}' = \vb{J}_1 \oplus \vb{J} \oplus \cdots \oplus \vb{J}$.
        Finally, let $\vb{Q}_2$ be an invertible matrix that adds the second column to the first.
        We have $\vb{Q}_2^T \vb{G}' \vb{Q}_2 = \vb{I}_2 \oplus \vb{J} \oplus \cdots \oplus \vb{J}$.

        \item If $m_1$ is even and $\vb{1} \in \calD_{\vb{s}}$, then $m_1$ must be a multiple of $4$ and $\vb{1}$ is the first column of $\vb{D}$, according to \cref{prop:conditions_D_F}.
        In this case, $\vb{D}^T \vb{F} = \vb{0}$ implies that all columns of $\vb{F}$ will be even-parity, which means that the diagonal elements of $\vb{G}'$ will be zero. 
        Moreover, the diagonal elements will remain zero if the congruent transformation only acts nontrivially on $\vb{G}'$.
        Therefore, $\vb{F}^T \vb{F} \sim_c \bigoplus\limits_{i = 1}^{g/2} \vb{J}$.
    \end{itemize}

    Above, the all-ones vector appears as the first column of $\vb{D}$ or $\vb{F}$ except for the third case, where $\vb{1} = \vb{c}_1 + \vb{c}_2$ can be obtained by adding up the first two columns of $\vb{F}$.
    Finally, in all of the above cases, $m_1 = g \mod{2}$.
\end{proof}

\begin{algorithm}[H]
    \centering
    \begin{algspec}
        \textbf{Parameters:} $m_1$ and $d$ \\
        \textbf{Require:} $d \leq m_1/2$
        \begin{algorithmic}[1]
        \State{$\vb{c}_1 \gets$ a random vector with weight 0 modulo 4}
        \State{$\vb{D} \gets (\vb{c}_1)$}
        \For{$t = 1, \cdots, d-1$}
            \State{$\vb{c}_{t+1} \gets$ a random vector from $\ker(\vb{D}^T) \slash \langle \vb{c}_1, \cdots, \vb{c}_t \rangle$ with weight 0 modulo 4}
            \If{$\vb{c}_{t+1}$ does not exist}
                \State{break}
            \EndIf
            \State{$\vb{D} \gets (\vb{D}, \vb{c}_{t+1})$}
        \EndFor
        \If{$\vb{1}$ lies in the column space of $\vb{D}$}
            \State{Apply column operations so that $\vb{1}$ is the first column of $\vb{D}$}
        \EndIf
        \State{\Return{$\vb{D}$}}
        \end{algorithmic}
    \end{algspec}
    \caption{Algorithm to sample a $\vb{D} = (\vb{c}_1, \vb{c}_2, \cdots, \vb{c}_d)$ so that $\vb{c}_i \cdot \vb{c}_j = 0$ and $|\vb{c}_i| = 0 \mod{4}$.}
    \label{alg:sample_D}
\end{algorithm}

Next, we discuss the sampling of $\vb{D}$ and $\vb{F}$.

\paragraph{Sampling $\vb{D}$.}
Here, the goal is to sample a $\vb{D} = (\vb{c}_1, \cdots, \vb{c}_d)$ with $d \leq m_1/2$, where $m_1 = g \mod{2}$.
Columns in $\vb{D}$ are orthogonal to each other and have weight a multiple of 4, according to \cref{prop:conditions_D_F}.
The algorithm is shown in Algorithm~\ref{alg:sample_D}, which works as follows. 
First, $\vb{c}_1$ can be a random vector with weight 0 modulo 4; $\vb{D}$ is initialized as $\vb{D} = (\vb{c}_1)$.
Then, the second column $\vb{c}_2$ is sampled with the constraint that $\vb{c}_1 \cdot \vb{c}_2 = 0$ and $|\vb{c}_2| = 0 \mod{4}$; $\vb{D}$ is updated to be $\vb{D} = (\vb{c}_1, \vb{c}_2)$.
Next, the third column $\vb{c}_3$ is sampled so that it is orthogonal the first two columns and $|\vb{c}_3| = 0 \mod{4}$.
This process is iterated until all $d$ columns are sampled, or until no vector satisfying the condition can be sampled, in which case a matrix with $d-1$ columns will be returned.

In the $t$-th iteration, the vector $\vb{c}_t$ is sampled from $\ker(\vb{D}^T) \slash \langle \vb{c}_1, \cdots, \vb{c}_{t-1} \rangle$ with $\vb{D} = (\vb{c}_1, \cdots, \vb{c}_{t-1})$.
That is, we want $\vb{c}_t$ to be orthogonal to the first $t - 1$ columns and outside the linear subspace that they span.
This can be achieved as follows.
We first solve for a basis of $\ker(\vb{D}^T)$, and then the first $t-1$ of the basis vectors are set as $\{ \vb{c}_1, \cdots, \vb{c}_{t-1} \}$, with the remaining basis vectors changed accordingly.
The vector $\vb{c}_t$ is sampled to be the random linear combination of the remaining basis vectors.
In this way, the orthogonality and independence of $\vb{c}_t$ with respect to $\vb{c}_1, \cdots, \vb{c}_{t-1}$ are guaranteed.

In addition, to ensure that $|\vb{c}_t| = 0 \mod{4}$, we can first sample an even-parity vector from $\ker(\vb{D}^T) \slash \langle \vb{c}_1, \cdots, \vb{c}_{t-1} \rangle$.
It is well-known that for a linear subspace over $\FF_2$, either all vectors are even-parity or half the vectors are even-parity.
Therefore, the sampling of even-parity vector can be efficiently done and we denote resulted vector as $\vb{a}_1$.
The weight of $\vb{a}_1$ will be either 0 or 2 modulo 4.
If $|\vb{a}_1| = 0 \mod{4}$, then it is set to be $\vb{c}_t$.
Otherwise, we sample a vector $\vb{a}_2$ from $\ker(\vb{D}^T) \slash \langle \vb{c}_1, \cdots, \vb{c}_{t-1}, \vb{a}_1 \rangle$ that is orthogonal to $\vb{a}_1$; 
that is, $\vb{a}_2$ is a random vector from $\ker(\vb{D}^T)$ that is orthogonal to and outside $\langle \vb{c}_1, \cdots, \vb{c}_{t-1}, \vb{a}_1 \rangle$.
Then, if $|\vb{a}_2| = 0 \mod{4}$, it is set to be $\vb{c}_t$ and if not, it follows from Lemma~\ref{lemma:sum_of_codewords} that $\vb{a}_1 + \vb{a}_2$ must have a weight that is a multiple of 4, and thus we assign it as $\vb{c}_{t}$.
With this approach, a $\vb{c}_t$ with weight a multiple of 4 can be guaranteed to be sampled except for the final iteration of two extremal cases.

We now turn to discuss the cases $d = m_1/2$ and $d = (m_1 - 1)/2$, where in the last iteration, the column $\vb{c}_d$ \emph{may} not exist.
In such a case, only a matrix $\vb{D}$ with $d - 1$ columns will be returned.
However, we would like to emphasize that it is the $g$ parameter that affects the value of the correlation function.
For the subspace $\calD_{\vb{s}}$, we only require it to be doubly-even, and its dimension does not matter.
Therefore, we do not require the sampling algorithm succeed every time when applied to these two extremal cases.

When $d = m_1/2$ with $m_1$ even, the resulting $\calD_{\vb{s}}$ forms a doubly-even self-dual code, which implies $g = 0$.
In this scenario, $\vb{1}$ is included in $\calD_{\vb{s}}$ because $\vb{1}$ will be orthogonal to all vectors in $\calD_{\vb{s}}$ and itself.
This implies that $m_1$ must be a multiple of 4.
An example of this case is the extended QRC~\cite{macwilliams1977book}.
On the other hand, when $d = (m_1 - 1)/2$ with $m_1$ odd, we have $g = 1$ along with $\calD_{\vb{s}} = \calC_{\vb{s}}^{\perp}$. 
In this situation, the $\vb{F}$ matrix must be $\vb{1}$.
The QRC serves as an example of this particular case.

The reason why the last iteration of Algorithm~\ref{alg:sample_D} may break on these two cases is as follows.
We only discuss $d = m_1/2$, but the reasoning for $d = (m_1 - 1)/2$ is similar.
When $t = d - 1$, $\vb{D} = (\vb{c}_1, \cdots, \vb{c}_{d-1})$. 
If $\vb{1}$ is not in $\langle \vb{c}_1, \cdots, \vb{c}_{d-1} \rangle$, then the algorithm will assign it as $\vb{c}_d$, and this iteration ends normally.
However, if $\vb{1} \in \langle \vb{c}_1, \cdots, \vb{c}_{d-1} \rangle$, the dimension of $\ker(\vb{D}^T)$ is $m_1 - (d - 1) = m_1/2 + 1$.
So, the subspace that $\vb{a}_1$ is sampled from has dimension $m_1/2 + 1 - t =  2$.
If this subspace has a nonzero vector with weight 0 modulo 4, then this vector will be assigned as $\vb{c}_d$ and the iteration will also end normally.
However, if all vectors in this subspace have weight 2 modulo 4, except for the all-zeros vector, then $\vb{c}_d$ could not be found.
In this case, the iteration breaks.

\begin{algorithm}[H]
    \centering
    \begin{algspec}
        \textbf{Input:} $m_1, g$ and $\vb{D}$ \\
        \textbf{Require:} $g \leq m_1 - 2d$ with $d$ the number of columns in $\vb{D}$; $g = m_1 \mod{2}$
        \begin{algorithmic}[1]
        \State{$\calD \gets$ column space of $\vb{D}$}
        \If{$m_1$ is odd}  \Comment{$\vb{G}' = \mathrm{diag}(1, \vb{J}, \cdots, \vb{J})$ and $g$ is odd}
            \State{$\vb{c}_1 \gets \vb{1}_{m_1}$}
            \State $\vb{F} \gets (\vb{c}_1)$
        \ElsIf{$m_1$ is even and $\vb{1}_{m_1} \not\in \calD$}  \Comment{$\vb{G}' = \mathrm{diag}(\vb{I}_2, \vb{J}, \cdots, \vb{J})$}
            \State{$\vb{c}_2 \gets$ a random odd-parity vector in $\ker(\vb{D}^T) \slash \calD$}
            \State{$\vb{c}_1 \gets \vb{1}_{m_1} + \vb{c}_2$}
            \State $\vb{F} \gets (\vb{c}_1, \vb{c}_2)$
        \Else  \Comment{$\vb{G}' = \mathrm{diag}(\vb{J}, \cdots, \vb{J})$}
            \State{$\vb{c}_1 \gets$ a random vector in $\ker(\vb{D}^T) \slash \calD$}
            \State{$\vb{c}_2 \gets$ a random vector in $\ker(\vb{D}^T) \slash \calD$ that satisfies $\vb{c}_1 \cdot \vb{c}_2 = 1$}
            \State $\vb{F} \gets (\vb{c}_1, \vb{c}_2)$
        \EndIf

        \While{number of columns in $\vb{F} < g$}
            \State $\calC \gets \calD \oplus$ column space of $\vb{F}$
            \State $\vb{a} \gets$ a random vector in $\calC^{\perp} \slash \calD$
            \State $\vb{b} \gets$ a random vector in $\calC^{\perp} \slash \calD$ that satisfies $\vb{a} \cdot \vb{b} = 1$
            \State $\vb{F} \gets (\vb{F}, \vb{a}, \vb{b})$
        \EndWhile
        \State{\Return{$\vb{F}$}}
        \end{algorithmic}
    \end{algspec}
    \caption{Algorithm to sample $\vb{F} = (\vb{c}_1, \cdots, \vb{c}_g)$ so that $\vb{D}^T \vb{F} = \vb{0}$ and $\rank(\vb{F}^T \vb{F}) = g$.}
    \label{alg:sample_F}
\end{algorithm}

\paragraph{Sampling $\vb{F}$.}
Next, we give the algorithm to sample $\vb{F} = (\vb{c}_1, \cdots, \vb{c}_g)$ (Algorithm~\ref{alg:sample_F}).
The algorithm takes $m_1, g$ and $\vb{D}$ as inputs, and outputs a matrix $\vb{F}$ so that $\vb{F}^T \vb{F}$ is a rank-$g$ symmetric matrix in the standard form as in Eq.~\eqref{eq:standard_form_H}, $\vb{D}^T \vb{F} = \vb{0}$, and $\vb{1}$ lies in the span of columns in $\vb{D}$ and $\vb{F}$.
This implies that all columns of $\vb{F}$ should be sampled from $\ker(\vb{D}^T) \slash \calD_{\vb{s}}$, with the additional orthogonality constraints imposed by $\vb{F}^T \vb{F}$.

There are three cases for $\vb{F}^T \vb{F}$.
First, if $m_1$ is odd, then $\vb{1}$ cannot lie in $\calD_{\vb{s}}$. 
According to \cref{prop:conditions_D_F} and \cref{thm:standard_form_H}, $\vb{1}$ can be set as the first column of $\vb{F}$ and $\vb{F}^T \vb{F} = \mathrm{diag}(1, \vb{J}, \cdots, \vb{J})$.
Second, if $m_1$ is even but $\vb{1}$ is not in $\calD_{\vb{s}}$, then $\vb{F}^T \vb{F} = \mathrm{diag}(\vb{I}_2, \vb{J}, \cdots, \vb{J})$, according to \cref{thm:standard_form_H}.
In this case, $\vb{c}_1$ and $\vb{c}_2$ are odd-parity vectors, and $\vb{c}_1 + \vb{c}_2 = \vb{1}$.
Third, if $m_1$ is even and $\vb{1}$ lies in $\calD_{\vb{s}}$, then $\vb{F}^T \vb{F} = \mathrm{diag}(\vb{J}, \cdots, \vb{J})$.
In this case, $\vb{c}_1$ is a random vector from $\ker(\vb{D}^T) \slash \calD_{\vb{s}}$ and $\vb{c}_2$ is a random vector from $\ker(\vb{D}^T) \slash \calD_{\vb{s}}$ satisfying $\vb{c}_2 \cdot \vb{c}_1 = 1$.
Note that $\vb{c}_1$ and $\vb{c}_2$ are automatically even-parity, since they are orthogonal to $\vb{1}$.
Moreover, $\vb{c}_2$ must lie outside the space $\calD_{\vb{s}} \oplus \ev{\vb{c}_1}$, due to the constraint $\vb{c}_2 \cdot \vb{c}_1 = 1$.

After the initialization of $\vb{c}_1$ (and $\vb{c}_2$), the algorithm proceeds to sample other columns of $\vb{F}$, if $g > 1$ (or $g > 2$). 
We only illustrate the case when $m_1$ is odd below, but the sampling process for an even $m_1$ follows a similar pattern.
For $m_1$ odd, $\vb{F}^T \vb{F} = \mathrm{diag}(1, \vb{J}, \cdots, \vb{J})$.
We first initialize $\calC_{\vb{s}} \gets \calD_{\vb{s}} \oplus \ev{\vb{c}_1}$, which is the subspace $\calC$ in \cref{alg:sample_F}.
The block diagonal form of $\vb{F}^T \vb{F}$ implies that $\vb{c}_2$ and $\vb{c}_3$ are vectors from $\ker(\vb{D}^T) \slash \calD_{\vb{s}}$ that are orthogonal to $\vb{c}_1 = \vb{1}$, i.e., $\vb{c}_2, \vb{c}_3 \in \calC_{\vb{s}}^{\perp} \slash \calD_{\vb{s}}$.
For $\vb{c}_2$, it is sampled as a random vector from $\calC_{\vb{s}}^{\perp} \slash \calD_{\vb{s}}$, which is the vector $\vb{a}$ in \cref{alg:sample_F}.
For $\vb{c}_3$, it is sampled as a random vector from $\calC_{\vb{s}}^{\perp} \slash \calD_{\vb{s}}$ satisfying $\vb{c}_2 \cdot \vb{c}_3 = 1$, which is the vector $\vb{b}$ in \cref{alg:sample_F}.
This finishes the sampling of columns corresponding to the first $\vb{J}$ block.
Then, the subspace $\calC_{\vb{s}}$ is updated to include $\vb{c}_2$ and $\vb{c}_3$ into its basis, with its dimension increased by 2.
This process is repeated for other columns, until all $g$ columns are sampled.
Finally, $\vb{F} = (\vb{c}_1, \cdots, \vb{c}_g)$ and it can be verified that $\vb{F}^T \vb{F}$ is indeed equal to the standard form $\mathrm{diag}(1, \vb{J}, \cdots, \vb{J})$.

\section{Column redundancy}\label{app:column_redundancy}

Essentially, adding column redundancy is to replace a full rank generator matrix
of a code with a ``redundant'' generator matrix.
The procedure of adding column redundancy is as follows.
\begin{enumerate}
  \item Given a full-rank $\vb{H}_{\vb{s}}$, (e.g., the last 4 columns in
        Eq.~\eqref{eq:QRC_main_part}) and the secret, we first append all-zeros
        columns to $\vb{H}_{\vb{s}}$ and extend $\vb{s}$ accordingly,
    \begin{align}\label{eq:add_column_redundancy}
      \vb{H}_{\vb{s}}
      & \gets (\vb{H}_{\vb{s}}, \vb{0})
      & \vb{s} \gets \begin{pmatrix} \vb{s} \\ \vb{s}' \end{pmatrix} \; .
    \end{align}

  \item Apply random column operations $\vb{Q}$ to obtain
        $\vb{H}_{\vb{s}} \gets \vb{H}_{\vb{s}} \vb{Q}$ and
        $\vb{s} \gets \vb{Q}^{-1} \vb{s}$.
\end{enumerate}
Here, in the first step $\vb{s}'$ is an arbitrary vector whose length is the
same as the number of all-zeros columns appended to $\vb{H}_{\vb{s}}$.
Since the correlation function only depends on the linear code generated by
$\vb{H}_{\vb{s}}$~\cite{shepherd_temporally_2009,yung_anti-forging_2020} and
adding column redundancy does not change the linear code, the correlation
function with respect to the new secret is unchanged after the above two steps.
We would like to remark that although there are $2^{n_2}$ choices for $\vb{s}'$
of length $n_2$, once we fix a choice, the only constraint to the redundant rows
is to be orthogonal to the specific new secret $\vb{s}$.
Moreover, since the final IQP matrix $\vb{H}$ is of full column rank, only the
real secret $\vb{s}$ will correspond to the code generated by $\vb{H}_{\vb{s}}$.
In the case of QRC-based construction, if one chooses a redundant generator
matrix of QRC by adding column redundancy, then $n$ can be any integer larger
than $(q+1)/2$, the dimension of QRC.

\section{Classical sampling and equivalent secrets}
\label{app:equivalent_secrets}

Here, we show that if a classical prover finds a wrong secret $\vb{s}'$, and generate classical samples using the naive sampling algorithm, then the generated samples will have the wrong correlation function on the real secret.
We first prove the following proposition.

\begin{proposition}\label{prop:even_prob}
    For a nonzero $\vb{s} \neq \vb{1}$, if we randomly sample a vector $\vb{d}$ of even parity, then
    \begin{align}
        \Pr_{|\vb{d}| \text{ even}}(\vb{s} \cdot \vb{d} = 1) = \Pr_{|\vb{d}| \text{ even}}(\vb{s} \cdot \vb{d} = 0) = \frac{1}{2} \ .
    \end{align}
\end{proposition}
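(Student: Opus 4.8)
The plan is to observe that the even-weight vectors form a linear subspace of codimension one and that $\vb{d}\mapsto\vb{s}\cdot\vb{d}$ restricts to a linear functional on it, so that the statement reduces to checking this functional is not identically zero, after which a standard counting of cosets gives the probabilities $1/2$.

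First I would set $E := \{\vb{d} : |\vb{d}|\text{ is even}\} = \ker(\vb{d}\mapsto\vb{1}\cdot\vb{d})$, writing $n$ for the ambient dimension; this is a subspace with $|E| = 2^{n-1}$, so sampling a uniformly random even-parity vector is the same as sampling uniformly from $E$. Next I would consider the $\FF_2$-linear map $\varphi\colon E\to\FF_2$ defined by $\varphi(\vb{d}) = \vb{s}\cdot\vb{d}$. A linear map onto $\FF_2$ is either identically zero or surjective, and in the surjective case its kernel and its unique nontrivial coset each contain exactly $|E|/2$ elements; hence it suffices to show $\varphi\not\equiv 0$, i.e.\ $\vb{s}\notin E^{\perp}$.

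The one genuinely substantive step is identifying $E^{\perp}$. Since the standard bilinear form on $\FF_2^n$ is non-degenerate and $E = \langle\vb{1}\rangle^{\perp}$, we have $E^{\perp} = \langle\vb{1}\rangle = \{\vb{0},\vb{1}\}$; this is precisely where both hypotheses $\vb{s}\neq\vb{0}$ and $\vb{s}\neq\vb{1}$ are used, since together they give $\vb{s}\notin\{\vb{0},\vb{1}\} = E^{\perp}$. One can also argue this elementarily: if $\vb{s}\notin\{\vb{0},\vb{1}\}$, choose indices $i$ with $s_i = 1$ and $j$ with $s_j = 0$; then $\vb{e}_i + \vb{e}_j\in E$ but $\vb{s}\cdot(\vb{e}_i+\vb{e}_j) = 1$, so $\varphi\not\equiv 0$. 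Once this is in place, $|\ker\varphi| = |E|/2$ yields $\Pr(\vb{s}\cdot\vb{d} = 0) = \Pr(\vb{s}\cdot\vb{d} = 1) = 1/2$. I do not expect any real obstacle here: the proposition is a short exercise in $\FF_2$-linear algebra, and the only place requiring care is keeping track of which hypothesis rules out which of the two ``bad'' vectors.
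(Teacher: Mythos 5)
Your proof is correct and takes essentially the same approach as the paper's (which simply invokes the well-known fact that a linear functional on a subspace is either identically zero or takes each value equally often, then rules out the degenerate case). You actually supply more detail than the paper does: the paper only remarks ``Since $\vb{s}\neq\vb{1}$, we have the former case,'' silently using $\vb{s}\neq\vb{0}$ as well, whereas you explicitly identify $E^{\perp} = \langle\vb{1}\rangle = \{\vb{0},\vb{1}\}$ and even give a hands-on witness $\vb{e}_i+\vb{e}_j$ showing $\varphi\not\equiv 0$, which makes it transparent exactly where each hypothesis is needed.
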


\begin{proof}
    The set of even-parity vectors forms a linear subspace. It is well known that for a linear subspace and a vector $\vb{s}$, either half of vectors are orthogonal to $\vb{s}$, or all vectors are orthogonal to $\vb{s}$.
    Since $\vb{s} \neq \vb{1}$, we have the former case.
\end{proof}

Given $\vb{H} = \begin{pmatrix} \vb{H}_{\vb{s}} \\ \vb{R}_{\vb{s}} \end{pmatrix}$ and $\vb{s}$, we say another vector $\vb{s}'$ is equivalent to $\vb{s}$ if $\vb{H} \vb{s} = \vb{H} \vb{s}'$; that is, they have the same inner-product relations with rows in $\vb{H}$.
The following lemma shows that a random row orthogonal to $\vb{s}'$ will have probability $1/2$ to have inner product 1 with $\vb{s}$,  even if $\vb{H} \vb{s} = \vb{H} \vb{s}'$.

\begin{lemma}\label{lemma:one_redundant_row}
    For $\vb{s}' \neq \vb{s}$, if we uniformly randomly sample a vector $\vb{p}$ orthogonal to $\vb{s}'$, then
    \begin{align}
        \Pr_{\vb{p} \cdot \vb{s}' = 0} (\vb{p} \cdot \vb{s} = \vb{1}) = \frac{1}{2} \ .
    \end{align}
\end{lemma}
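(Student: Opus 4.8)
The plan is to reduce the statement to a counting argument over the linear subspace $\{\vb{p}: \vb{p}\cdot\vb{s}'=0\}$. First I would set $V := \{\vb{p}\in\FF_2^n : \vb{p}\cdot\vb{s}'=0\} = \ker(\vb{s}'^T)$, which is a linear subspace of dimension $n-1$ since $\vb{s}'\neq\vb{0}$. Sampling $\vb{p}$ uniformly from $V$, I want to show $\Pr_{\vb{p}\in V}(\vb{p}\cdot\vb{s}=1)=\tfrac12$. The key observation is the standard dichotomy for linear functionals on a subspace: for the linear map $\vb{p}\mapsto\vb{p}\cdot\vb{s}$ restricted to $V$, either it is identically zero on $V$ (in which case $\Pr=0$), or it is surjective onto $\FF_2$, in which case its kernel in $V$ has index $2$ and $\Pr=\tfrac12$. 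So the whole statement comes down to showing that $\vb{p}\mapsto\vb{p}\cdot\vb{s}$ is \emph{not} identically zero on $V=\ker(\vb{s}'^T)$.

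Next I would argue that $\vb{p}\cdot\vb{s}=0$ for all $\vb{p}\perp\vb{s}'$ would force $\vb{s}\in(\vb{s}'^\perp)^\perp=\langle\vb{s}'\rangle$, i.e. $\vb{s}=\vb{s}'$ (the only nonzero element of $\langle\vb{s}'\rangle$, since we are over $\FF_2$). This contradicts the hypothesis $\vb{s}\neq\vb{s}'$. Here I am using the elementary fact that for a subspace $W\subseteq\FF_2^n$ one has $(W^\perp)^\perp=W$, applied to $W=\langle\vb{s}'\rangle$ so that $W^\perp=\{\vb{p}:\vb{p}\cdot\vb{s}'=0\}=V$ and $(W^\perp)^\perp=\langle\vb{s}'\rangle$. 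Hence the functional $\vb{p}\mapsto\vb{p}\cdot\vb{s}$ is nonzero on $V$, and by the dichotomy above exactly half the vectors in $V$ satisfy $\vb{p}\cdot\vb{s}=1$.

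I expect there is essentially no serious obstacle here; this is a short linear-algebra fact. The only mild subtlety is making sure the "dichotomy" lemma is stated cleanly: for a subspace $V$ and a vector $\vb{s}$, the set $\{\vb{p}\in V:\vb{p}\cdot\vb{s}=1\}$ is either empty or a coset of the hyperplane $V\cap\vb{s}^\perp$ in $V$, and in the latter case it has exactly $|V|/2$ elements. This is the same fact already invoked in the proof of Proposition~\ref{prop:even_prob} (applied there to the even-parity subspace), so the cleanest writeup would simply mirror that structure: identify the relevant subspace $V$, check $\vb{s}\notin\langle\vb{s}'\rangle$ to rule out the degenerate case, and conclude. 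One presentational choice to make is whether to phrase the non-degeneracy via $(W^\perp)^\perp=W$ or more directly: if $\vb{p}\cdot\vb{s}=\vb{p}\cdot\vb{s}'=0$ for all $\vb{p}$, then $\vb{s}+\vb{s}'$ is orthogonal to everything, hence $\vb{s}+\vb{s}'=\vb{0}$, i.e. $\vb{s}=\vb{s}'$; but since we only know $\vb{p}$ ranges over $\vb{s}'^\perp$ rather than all of $\FF_2^n$, the former phrasing via orthogonal complements is the more careful route, and that is the one I would take.
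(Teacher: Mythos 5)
Your proof is correct and takes a genuinely different, cleaner route than the paper's. The paper fixes a coordinate presentation (WLOG $\vb{s}'$ supported on the first $k$ coordinates), splits $\vb{p}=(\vb{p}_1,\vb{p}_2)$ and $\vb{s}=(\vb{s}_1,\vb{s}_2)$ accordingly so that the constraint $\vb{p}\cdot\vb{s}'=0$ becomes ``$\vb{p}_1$ has even parity'' with $\vb{p}_2$ unconstrained, and then runs a three-way case analysis on whether $\vb{s}_1=\vb{1}$ and/or $\vb{s}_2=\vb{0}$, invoking Proposition~\ref{prop:even_prob} in one branch and the uniformity of $\vb{p}_2$ in the others. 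Your argument instead applies the subspace dichotomy directly to $V=\vb{s}'^\perp$: the functional $\vb{p}\mapsto\vb{p}\cdot\vb{s}$ on $V$ is either identically zero or balanced, and the double-annihilator identity $(\langle\vb{s}'\rangle^\perp)^\perp=\langle\vb{s}'\rangle$ rules out the zero case. This is essentially the same one-line fact the paper already uses to prove Proposition~\ref{prop:even_prob}, applied to the general hyperplane rather than only to the even-weight subspace, so it eliminates the coordinate bookkeeping and the case split entirely. What the paper's route buys is a presentation visibly parallel to the adjacent Proposition~\ref{prop:even_prob} and a purely hands-on computation; what yours buys is brevity and structural clarity.

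One small omission, shared with the paper: your non-degeneracy step concludes $\vb{s}=\vb{s}'$ from $\vb{s}\in\langle\vb{s}'\rangle$, silently discarding the possibility $\vb{s}=\vb{0}$, for which the claim is actually false (the probability is $0$, not $\tfrac12$). The paper's proof has the same gap---its Case~1 invokes Proposition~\ref{prop:even_prob} on $\vb{s}_1$, which requires $\vb{s}_1\neq\vb{0}$, yet its case conditions permit $\vb{s}_1=\vb{0}$ with $\vb{s}_2=\vb{0}$, i.e. $\vb{s}=\vb{0}$. In context $\vb{s}$ is a nontrivial secret so this is harmless, but if you write this up it is cleanest to state the hypothesis $\vb{s}\neq\vb{0}$ explicitly.
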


\begin{proof}
    Without loss of generality, assume $\vb{s}'$ has ones in the first $k$ entries and zeros elsewhere. 
    We can split $\vb{p} = \begin{pmatrix} \vb{p}_1 \\ \vb{p}_2 \end{pmatrix}$ and $\vb{s} = \begin{pmatrix} \vb{s}_1 \\ \vb{s}_2 \end{pmatrix}$, where $\vb{p}_1$ is a random even-parity string and $\vb{p}_2$ is uniformly random over $\FF_2^{n-k}$.
    Then, $\vb{p} \cdot \vb{s} = \vb{p}_1 \cdot \vb{s}_1 + \vb{p}_2 \cdot \vb{s}_2$.
    \begin{enumerate}
        \item If $\vb{s}_2 = \vb{0}$ and $\vb{s}_1 \neq \vb{1}$, 
        \begin{align}
            \Pr_{\vb{p} \cdot \vb{s}' = 0} (\vb{p} \cdot \vb{s} = \vb{1}) = \Pr_{\vb{p}_1 \text{ even}} (\vb{p}_1 \cdot \vb{s}_1 = \vb{1}) = \frac{1}{2} \ ,
        \end{align}
        according to Proposition~\ref{prop:even_prob}.

        \item If $\vb{s}_2 \neq \vb{0}$ and $\vb{s}_1 = \vb{1}$,
        \begin{align}
            \Pr_{\vb{p} \cdot \vb{s}' = 0} (\vb{p} \cdot \vb{s} = \vb{1}) = \Pr_{\vb{p}_2} (\vb{p}_2 \cdot \vb{s}_2 = \vb{1}) = \frac{1}{2} \ ,
        \end{align}
        because $\vb{p}_2$ is uniformly random.

        \item If $\vb{s}_2 \neq \vb{0}$ and $\vb{s}_1 \neq \vb{1}$,
        \begin{align}
            \Pr_{\vb{p} \cdot \vb{s}' = 0} (\vb{p} \cdot \vb{s} = \vb{1}) &= \Pr_{\vb{p}_1, \vb{p}_2} (\vb{p}_1 \cdot \vb{s}_1 = \vb{1}, \vb{p}_2 \cdot \vb{s}_2 = \vb{0}) + \Pr_{\vb{p}_1, \vb{p}_2} (\vb{p}_1 \cdot \vb{s}_1 = \vb{0}, \vb{p}_2 \cdot \vb{s}_2 = \vb{1}) \\
            &= \frac{1}{2} \cdot \frac{1}{2} + \frac{1}{2} \cdot \frac{1}{2} = \frac{1}{2} \ ,
        \end{align}
        where we used the independence of $\vb{p}_1$ and $\vb{p}_2$.
    \end{enumerate}
\end{proof}

Now, we are ready to prove Lemma~\ref{lemma:zero_correlation}.

\begin{proof}
    With similar derivations to Lemma~\ref{lemma:one_redundant_row}, one can show that
    \begin{align}
        \Pr_{\vb{p} \cdot \vb{s}' = 0} (\vb{p} \cdot \vb{s} = \vb{0}) = \Pr_{\vb{p} \cdot \vb{s}' = 1} (\vb{p} \cdot \vb{s} = \vb{0}) = \Pr_{\vb{p} \cdot \vb{s}' = 1} (\vb{p} \cdot \vb{s} = \vb{1}) = \frac{1}{2} \ .
    \end{align}
    That means, if one samples a random $\vb{p}$ to be orthogonal to $\vb{s}'$ with probability $\beta$, and not orthogonal to $\vb{s}'$ with probability $1 - \beta$, then
    \begin{align}
    \Pr_{\vb{p}} (\vb{p} \cdot \vb{s} = 0) = \beta \Pr_{\vb{p} \cdot \vb{s} = 0}(\vb{p} \cdot \vb{s} = 0) + (1 - \beta) \Pr_{\vb{p} \cdot \vb{s} = 1}(\vb{p} \cdot \vb{s} = 0) = \frac{1}{2} \ .
    \end{align}
    That is, $\vb{p}$ is uncorrelated with $\vb{s}$ and the correlation function is
    \begin{align}
        \EE [(-1)^{\vb{p} \cdot \vb{s}}] = \Pr_{\vb{p}} (\vb{p} \cdot \vb{s} = 0) - \Pr_{\vb{p}} (\vb{p} \cdot \vb{s} = 1) = 0 \ .
    \end{align}
    Therefore, if the secret extraction procedure returns a vector $\vb{s}' \neq \vb{s}$ and the classical prover uses the naive classical sampling algorithm to generate samples, then the samples will produce zero correlation function on the real secret.
\end{proof}

\section{More on Linearity Attack}
\label{app:more_attack}

\begin{metaalgorithm}[H]
    \centering
    \begin{algspec}
        \textbf{Parameter:} number of linear equations $l$
        \begin{algorithmic}[1]
        \Procedure{ExtractSecret}{$\vb{H}$} 
            \State{Uniformly randomly pick $\vb{d} \in \FF_2^n$.}
            \For{$j = 1, 2, \cdots, l$} \Comment{construct the linear-system matrix $\vb{M}$}
                \State{Uniformly randomly pick $\vb{e}_j \in \FF_2^n$.}
                \State{$\vb{m}_j^T \gets \textsc{RowSum}(\vb{H}_{\vb{d}, \vb{e}_j})$.}
            \EndFor{}
            \State{$\vb{M} \gets (\vb{m}_1, \cdots, \vb{m}_l)^T$.}
            \For{each vector $\vb{s}_i \in \ker(\vb{M})$} 
                \If{$\vb{s}_i$ passes the QRC check}  \Comment{discussed in the main text}
                    \State{\Return{$\vb{s}_i$}}
                \EndIf
            \EndFor
        \EndProcedure
        \end{algorithmic}
    \end{algspec}
    \caption{The \textsc{ExtractSecret}($\vb{H}$) subroutine in Ref.~\cite{kahanamoku-meyer_forging_2023}. Here, given $\vb{H}$ and two vectors $\vb{d}$ and $\vb{e}_t$, we define $\vb{H}_{\vb{d}, \vb{e}_j}$ to be a submatrix from $\vb{H}$ by deleting rows orthogonal to either $\vb{d}$ or $\vb{e}_j$.}
    \label{alg:extract_secret_KM}
\end{metaalgorithm}

\subsection{Secret extraction in Kahanamoku-Meyer's attack}
\label{app:secret_extraction_KM}

Meta-Algorithm~\ref{alg:extract_secret_KM} presents the secret extraction procedure in Ref.~\cite{kahanamoku-meyer_forging_2023}.
The procedure will be repeated if no vectors pass the QRC check.
Here, we prove the following proposition.
\begin{proposition}
    \label{prop:M_and_G_d}
    The matrix $\vb{M}$ obtained from Meta-Algorithm~\ref{alg:extract_secret_KM} consists of rows from the row space of $\vb{G}_{\vb{d}} = \vb{H}_{\vb{d}}^T \vb{H}_{\vb{d}}$.
\end{proposition}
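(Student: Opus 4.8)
The plan is to unpack the definitions of $\vb{d}$, $\vb{e}_j$ and the $\textsc{RowSum}$ operation and express the $j$-th row $\vb{m}_j^T$ of $\vb{M}$ directly as a linear combination of rows of $\vb{H}$, then recognize that combination as a row of $\vb{G}_{\vb{d}} = \vb{H}_{\vb{d}}^T \vb{H}_{\vb{d}}$. Recall from the main text that
\begin{align}
  \vb{m}_j^T = \sum_{\substack{\vb{p}^T \in \row(\vb{H}) \\ \vb{p}\cdot\vb{d} = \vb{p}\cdot\vb{e}_j = 1}} \vb{p}^T \; .
\end{align}
The first thing I would observe is that $\vb{H}_{\vb{d}}$, by definition, is exactly the submatrix of $\vb{H}$ whose rows $\vb{p}^T$ satisfy $\vb{p}\cdot\vb{d} = 1$, so the sum above is restricted to rows of $\vb{H}_{\vb{d}}$, further filtered by the condition $\vb{p}\cdot\vb{e}_j = 1$.

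Next I would rewrite this filtered sum as a matrix-vector product. Writing the rows of $\vb{H}_{\vb{d}}$ as $\vb{p}_1^T, \ldots, \vb{p}_k^T$, the condition $\vb{p}_i \cdot \vb{e}_j = 1$ is precisely the statement that the $i$-th entry of $\vb{H}_{\vb{d}}\,\vb{e}_j$ is $1$. Hence
\begin{align}
  \vb{m}_j^T = \sum_{i : (\vb{H}_{\vb{d}}\vb{e}_j)_i = 1} \vb{p}_i^T = (\vb{H}_{\vb{d}}\,\vb{e}_j)^T \vb{H}_{\vb{d}} = \vb{e}_j^T \vb{H}_{\vb{d}}^T \vb{H}_{\vb{d}} = \vb{e}_j^T \vb{G}_{\vb{d}} \; .
\end{align}
This displays $\vb{m}_j^T$ as $\vb{e}_j^T \vb{G}_{\vb{d}}$, which is manifestly a linear combination (with coefficient vector $\vb{e}_j$) of the rows of $\vb{G}_{\vb{d}}$, i.e.\ an element of the row space of $\vb{G}_{\vb{d}}$. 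Since this holds for every $j \in [l]$, every row of $\vb{M}$ lies in $\row\text{-space}(\vb{G}_{\vb{d}})$, which is the claim. As an immediate corollary one gets $\ker(\vb{G}_{\vb{d}}) \subseteq \ker(\vb{M})$ (any $\vb{v}$ killed by $\vb{G}_{\vb{d}}$ on the right is killed by each $\vb{m}_j^T = \vb{e}_j^T\vb{G}_{\vb{d}}$), which is the point needed to justify replacing $\vb{M}$ by $\vb{G}_{\vb{d}}$ in Meta-Algorithm~\ref{alg:extract_secret}.

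The only genuinely delicate point — and the one I would be most careful about — is the bookkeeping of the field: the inner products $\vb{p}\cdot\vb{d}$, $\vb{p}\cdot\vb{e}_j$ and all the sums are over $\FF_2$, so I must make sure that the indicator-set manipulation "$\sum_{i:(\vb{H}_{\vb{d}}\vb{e}_j)_i=1}\vb{p}_i^T = (\vb{H}_{\vb{d}}\vb{e}_j)^T\vb{H}_{\vb{d}}$" is interpreted correctly, namely that the scalar $(\vb{H}_{\vb{d}}\vb{e}_j)_i \in \{0,1\} \subset \FF_2$ is used as the $\FF_2$-coefficient of $\vb{p}_i^T$ in the sum. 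That identification is exactly right because a $0$/$1$ coefficient either drops or keeps the row, matching the filtering. Beyond that, the argument is a one-line rearrangement and I do not anticipate any real obstacle; the harder conceptual content (why this helps the attack, and why $\vb{M} = \vb{G}_{\vb{d}}$ is the optimal choice minimizing $\ker$) is already discussed in the main text and is not part of this proposition.
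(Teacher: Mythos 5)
Your argument is correct and matches the paper's proof essentially line for line: both express $\vb{m}_j^T$ as $(\vb{H}_{\vb{d}}\vb{e}_j)^T\vb{H}_{\vb{d}} = \vb{e}_j^T\vb{G}_{\vb{d}}$, exhibiting each row of $\vb{M}$ as an explicit $\FF_2$-linear combination of the rows of $\vb{G}_{\vb{d}}$. Your extra remark about the $\FF_2$ bookkeeping and the corollary $\ker(\vb{G}_{\vb{d}})\subseteq\ker(\vb{M})$ are accurate and consistent with how the paper uses the result.
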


From this proposition, it is clear that to minimize the size of $\ker(\vb{M})$, one can choose $\vb{M} = \vb{G}_{\vb{d}}$.
In this way, the sampling of $\vb{e}_j$'s can be removed.

\begin{proof}
    Recall that the $j$-th row of $\vb{M}$ is obtained in the following way.
    First, we eliminate rows in $\vb{H}$ that are orthogonal to $\vb{d}$, which gives $\vb{H}_{\vb{d}}$.
    Then, we eliminate rows in $\vb{H}_{\vb{d}}$ that are orthogonal to $\vb{e}_j$, which gives $\vb{H}_{\vb{d}, \vb{e}_j}$. 
    Finally, we sum up the rows in $\vb{H}_{\vb{d}, \vb{e}_j}$, which gives $\vb{m}_j^T$.
    Equivalently, we have 
    \begin{align}\label{eq:rows_of_M}
        \vb{m}_j^T = (\vb{H}_{\vb{d}}\ \vb{e})^T \vb{H}_{\vb{d}} = \vb{e}^T \vb{G}_{\vb{d}} \ .
    \end{align}
    To see this, first observe that $\vb{H}_{\vb{d}}\ \vb{e}$ has ones in the positions where the corresponding rows are not orthogonal to $\vb{e}_j$.
    Then, $(\vb{H}_{\vb{d}}\ \vb{e})^T \vb{H}_{\vb{d}}$ selects and sums up the rows in $\vb{H}_{\vb{d}, \vb{e}_j}$.

    According to Eq.~\eqref{eq:rows_of_M}, the rows of $\vb{M}$ are linear combinations of rows of $\vb{G}_{\vb{d}}$ and thus are in the row space of $\vb{G}_{\vb{d}}$.
\end{proof}

For completeness, we also give the success probability that the real secret $\vb{s}$ lies in $\ker(\vb{M})$.

\begin{proposition}[Theorem~3.1 in Ref.~\cite{kahanamoku-meyer_forging_2023} restated]
    \label{prop:correct_d_KM}
    Given $(\vb{H}, \vb{s}) \in \calH_{n, m, q}^{\QRC}$, randomly sample a vector $\vb{d} \in \{0, 1\}^n$ and let $\vb{M}$ be the binary matrix obtained from Meta-Algorithm~\ref{alg:extract_secret_KM}. 
    If $\vb{G}_{\vb{s}} \vb{d} = 0$, then we have $\vb{M}\vb{s} = \vb{0}$, which happens with probability $1/2$ over all choices of $\vb{d}$. 
\end{proposition}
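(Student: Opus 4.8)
The plan is to derive this statement directly from Proposition~\ref{prop:M_and_G_d}, Proposition~\ref{prop:correct_d}, and Corollary~\ref{cry:QRC_generator}, so that essentially no new work is required beyond assembling these pieces. There are two assertions to check: the implication ``$\vb{G}_{\vb{s}}\vb{d} = \vb{0} \Rightarrow \vb{M}\vb{s} = \vb{0}$'', and the fact that $\vb{G}_{\vb{s}}\vb{d} = \vb{0}$ happens with probability $1/2$ over the random choice of $\vb{d}$.

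First I would handle the implication. By Proposition~\ref{prop:correct_d} we have $\vb{G}_{\vb{s}}\vb{d} = \vb{G}_{\vb{d}}\vb{s}$, so the hypothesis $\vb{G}_{\vb{s}}\vb{d} = \vb{0}$ is equivalent to $\vb{s} \in \ker(\vb{G}_{\vb{d}})$. On the other hand, Proposition~\ref{prop:M_and_G_d} (via the identity $\vb{m}_j^T = \vb{e}_j^T \vb{G}_{\vb{d}}$ established in its proof) says that every row of $\vb{M}$ has the form $\vb{e}^T \vb{G}_{\vb{d}}$ for some $\vb{e} \in \FF_2^n$. Hence each entry of $\vb{M}\vb{s}$ equals $\vb{e}^T \vb{G}_{\vb{d}} \vb{s} = \vb{e}^T \vb{0} = 0$, so $\vb{M}\vb{s} = \vb{0}$.

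Next I would compute the probability. By Proposition~\ref{prop:correct_d}, $\Pr_{\vb{d}}[\vb{G}_{\vb{s}}\vb{d} = \vb{0}] = 2^{-g}$ where $g = \rank(\vb{G}_{\vb{s}})$ and $\vb{G}_{\vb{s}} = \vb{H}_{\vb{s}}^T\vb{H}_{\vb{s}}$, so it remains to argue $g = 1$ for every $(\vb{H},\vb{s}) \in \calH_{n,m,q}^{\QRC}$. This follows from Corollary~\ref{cry:QRC_generator}: for the canonical QRC generator matrix (with its all-ones column) the correlation function has magnitude $1/\sqrt{2} = 2^{-1/2}$, hence by Theorem~\ref{thm:cor_func_rank} the associated Gram matrix has rank $g = 1$. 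Since the only freedom in $\calH_{n,m,q}^{\QRC}$ relative to this canonical form is the obfuscation of Definition~\ref{def:obfuscation} --- row permutations, which leave the Gram matrix unchanged because $\vb{P}^T\vb{P} = \vb{I}$, and column operations $\vb{Q}$, which replace $\vb{G}_{\vb{s}}$ by the congruent matrix $\vb{Q}^T\vb{G}_{\vb{s}}\vb{Q}$ and hence preserve the rank --- we conclude $\rank(\vb{G}_{\vb{s}}) = 1$ in general, and therefore $\Pr_{\vb{d}}[\vb{G}_{\vb{s}}\vb{d} = \vb{0}] = 1/2$.

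There is no genuine obstacle here; the only point requiring care is the bookkeeping behind why $\rank(\vb{G}_{\vb{s}})$ is an invariant of the QRC family (that it is unaffected by the obfuscation transformation, and also by the optional addition of column redundancy, which only changes $\vb{H}_{\vb{s}}$ by an invertible $\vb{Q}$ acting on its columns), together with the mild notational point that the $\vb{G}_{\vb{s}}$ appearing in Proposition~\ref{prop:correct_d} is built from the submatrix $\vb{H}_{\vb{s}}$ of the obfuscated $\vb{H}$, which still generates a code linearly equivalent to a length-$q$ QRC. Everything else is a direct substitution of the cited results.
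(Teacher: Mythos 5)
Your proposal is correct and takes essentially the same route as the paper: the paper's proof also reduces $\vb{m}_i\cdot\vb{s}$ to $\vb{e}_i^T\vb{G}_{\vb{s}}\vb{d}$ (it just redoes the sum over rows inline rather than citing the identities from Propositions~\ref{prop:M_and_G_d} and~\ref{prop:correct_d}), and then invokes $\rank(\vb{G}_{\vb{s}})=1$ for QRC to get the factor $1/2$. The only difference is that the paper simply asserts $\rank(\vb{G}_{\vb{s}})=1$ for a QRC-generating $\vb{H}_{\vb{s}}$, whereas you derive it from Corollary~\ref{cry:QRC_generator} and Theorem~\ref{thm:cor_func_rank} and explicitly verify that the rank is unaffected by the obfuscation and column-redundancy freedoms in $\calH^{\QRC}_{n,m,q}$ --- a point the paper leaves implicit but which is genuinely needed for the statement to hold over the whole family rather than just the canonical generator.
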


\begin{proof}
    First, note that for the $i$-th row of $\vb{M}$,
    \begin{align}
        \vb{m}_i \cdot \vb{s} = \sum_{\substack{\vb{p}^T \in \row(\vb{H}) \\ \vb{p} \cdot \vb{d} = \vb{p} \cdot \vb{e}_i = 1}} \vb{p} \cdot \vb{s} = \sum_{\vb{p} \in \row(\vb{H})} (\vb{p} \cdot \vb{s}) (\vb{p} \cdot \vb{d}) (\vb{p} \cdot \vb{e}_i) \ ,
    \end{align}
    since each term equals 1 if and only if it has inner product 1 with $\vb{d}$, $\vb{e}$ and $\vb{s}$ simultaneously. 
    The above transformation is to take the conditions in the summation up to the summand, and we can take the term $\vb{p} \cdot \vb{s} = 1$ down to the summation. That is, we can write 
    \begin{align}
        \vb{m}_i \cdot \vb{s} = \sum_{\vb{p} \in \row(\vb{H}_{\vb{s}})} (\vb{p} \cdot \vb{d}) (\vb{p} \cdot \vb{e}_i) \ ,
    \end{align}
    which can be seen to be a quantity only depending on $\vb{H}_{\vb{s}}$. 
    Further observe that the above is the inner product between $\vb{H}_{\vb{s}} \vb{d}$ and $\vb{H}_{\vb{s}} \vb{e}_i$, i.e., 
    \begin{align}
        \vb{m}_i \cdot \vb{s} = (\vb{H}_{\vb{s}} \vb{e}_i) \cdot (\vb{H}_{\vb{s}} \vb{d}) = \vb{e}_i^T \vb{G}_{\vb{s}} \vb{d} \ .
    \end{align}
    Therefore, if $\vb{G}_{\vb{s}} \vb{d} = \vb{0}$, then $\vb{m}_i \cdot \vb{s} = 0$ for every $i$, which means $\vb{M} \vb{s} = \vb{0}$.
    That is, the verifier's secret lies in the kernel of $\vb{M}$ if $\vb{d}$ lies in the kernel of $\vb{G}_{\vb{s}}$.
    If $\vb{H}_{\vb{s}}$ generates a QRC, then $\rank(\vb{G}_{\vb{s}}) = 1$.
    Then, the probability that $\vb{d}$ lies in the kernel of $\vb{G}_{\vb{s}}$ is $2^{n-1}/2^n = 1/2$.    
\end{proof}

\subsection{Classical sampling}
\label{app:more_attack_classical_sampling}

Here, we give two classical sampling algorithms that given an IQP circuit $C$ and a candidate set $S = \{ \vb{s}_1, \cdots, \vb{s}_t \}$ with $t \leq n$ as input, output samples that have the correct correlation function on all candidate secrets in the set.
We first consider a simple case here, where the Gram matrix $\vb{G}_{\vb{s}_i} = \vb{H}_{\vb{s}_i}^T \vb{H}_{\vb{s}_i}$ associated with each candidate secret $\vb{s}_i$ has the same rank.
Then, if the samples are from a quantum computer, the probability bias relative to every candidate secret should be the same, denoted as $\beta$.
Given this candidate set, a classical prover can use Algorithm~\ref{alg:sampling_candidate_same_bias} to mimic the quantum behavior.
Here, the matrix $\vb{S}$ is defined to be a $t \times n$ matrix whose $i$-th row is $\vb{s}_i^T$.
The output bit strings will have probability $\beta$ to be orthogonal to all vectors in the candidate set $S$, and probability $1 - \beta$ to have inner product 1 with them. 
Therefore, the generated samples will have correct bias with every vector in the candidate set and hence the correct correlation function.
The condition for Algorithm~\ref{alg:sampling_candidate_same_bias} to work is that the all-ones vector $\vb{1}$ needs to be in the column space of $\vb{S}$. 
Otherwise, the specific solution $\vb{y}'$ cannot be found.
A sufficient condition is that the candidate vectors are linearly independent.
Then, the matrix $\vb{S}$ will have full row rank, and the all-ones vector $\vb{1}$ will be in the column space of $\vb{S}$.

\begin{algorithm}[H]
    \centering
    \begin{algspec}
        \textbf{Parameter:} number of samples $T$. 
        \begin{algorithmic}[1]
        \Procedure{ClassicalSampling}{$\vb{S}, \beta$}
            \State{Solve $\vb{S y} = \vb{1}$ for a specific solution $\vb{y}'$.}
            \State{Find the basis $\{\vb{y}_1, \cdots, \vb{y}_k\}$ of $\ker(\vb{S})$} \Comment{$k$ is the dimension of $\ker(\vb{S})$.}
            \For{$j = 1, 2, \cdots, T$}
                \State{Randomly sample $(c_1, \cdots, c_k) \in \FF^k$.}
                \State{With probability $\beta$, set $\vb{x}_j \gets \sum_{i=1}^k c_i \vb{y}_i$.}
                \State{With probability $1 - \beta$, set $\vb{x}_j \gets \vb{y}' + \sum_{i=1}^k c_i \vb{y}_i$.}
            \EndFor
            \State{\textbf{return} $\vb{x}_1, \cdots, \vb{x}_T$}
        \EndProcedure
        \end{algorithmic}
    \end{algspec}
    \caption{The \textsc{ClassicalSampling} subroutine for the candidate set where all secrets are associated with the same bias.}
    \label{alg:sampling_candidate_same_bias}
\end{algorithm}

Next, we do not require the associated biases $\beta_1, \cdots, \beta_t$ to be the same.
We present a similar sampling algorithm to Algorithm~\ref{alg:sampling_candidate_same_bias} to output samples that mimic what a quantum computer will output.
For the sake of illustration, we assume that the associated biases are all different, denoted as $\{\beta_1, \beta_2, \cdots, \beta_t\}$, but the following discussion can be easily generalized to the case where some $\beta_i$'s are the same.
As before, the attacker does not have extra information to judge which one is the correct secret, even though the correct secret is in the candidate set.
So he would have to generate samples that have bias $\beta_1$ with $\vb{s}_1$, $\beta_2$ with $\vb{s}_2$, and so on.
Below, the algorithm for generating such samples is shown in Algorithm~\ref{alg:sampling_candidate_diff_bias}.
Again, we transform the set $S$ into a $t \times n$ matrix $\vb{S}$.

The correctness of the sampling algorithm can be easily seen via a sanity check.
But one problem is whether the linear system $\vb{Sy} = \vb{b}_j$ has solutions or not.
If nonzero solutions can be found for every linear systems, then $\vb{b}_j$'s are all in the column space of $\vb{S}$, which implies that the rank of $\vb{S}$ is $t$.
Since there are $t$ rows in $\vb{S}$, a necessary and sufficient condition for the sampling algorithm to work is that $\{\vb{s}_1, \cdots, \vb{s}_t\}$ are linearly independent.
This condition can be relaxed if some of the $\beta_j$'s are the same.

\begin{algorithm}[H]
    \centering
    \begin{algspec}
        \textbf{Input:} a binary matrix $\vb{S} \in \FF^{t\times n}$; biases $\beta_1 > \beta_2 > \cdots > \beta_t$. \\
        \textbf{Parameter:} number of samples $T$. \\
        \textbf{Output:} $\vb{x}_1, \cdots, \vb{x}_T \in \FF^n$.
        \begin{algorithmic}[1]
            \State{Find the basis of $\ker(\vb{S})$, denoted as $\{\vb{y}_1, \cdots, \vb{y}_k\}$.}
            \For{$j = 1, 2, \cdots, t$}
                \State{Define $\vb{b}_j$ to be a binary vector whose last $j$ entries are all zero.}
                \State{Solve a specific solution $\vb{y}_j'$ for $\vb{Sy} = \vb{b}_j$.}
            \EndFor
            \For{$i = 1, 2, \cdots, T$}
                \State{Randomly sample $(c_1, \cdots, c_k) \in \FF^k$.}
                \State{With probability $\beta_t$, set $\vb{x}_i \gets \sum_{i=1}^k c_i \vb{y}_i$.}
                \State{With probability $\beta_{t-1} - \beta_{t}$, set $\vb{x}_i \gets \vb{y}'_t + \sum_{i=1}^k c_i \vb{y}_i$.}
                \State{With probability $\beta_{t-2} - \beta_{t-1}$, set $\vb{x}_i \gets \vb{y}'_{t-1} + \sum_{i=1}^k c_i \vb{y}_i$.}
                \State{\vdots}
                \State{With probability $\beta_{1} - \beta_{2}$, set $\vb{x}_i \gets \vb{y}'_{2} + \sum_{i=1}^k c_i \vb{y}_i$.}
                \State{With probability $1 - \beta_{1}$, set $\vb{x}_i \gets \vb{y}'_{1} + \sum_{i=1}^k c_i \vb{y}_i$.}
            \EndFor
            \State{\textbf{return} $\vb{x}_1, \cdots, \vb{x}_T$}
        \end{algorithmic}
    \end{algspec}
    \caption{The \textsc{ClassicalSampling} subroutine for the candidate set where all vectors are associated with different biases.}
    \label{alg:sampling_candidate_diff_bias}
\end{algorithm}

\subsection{The probability of choosing a good $\vb{d}$}
\label{app:prob_good_d}

Here, we prove Proposition~\ref{prop:correct_d}.

\begin{proof}
    First, $\vb{G}_{\vb{s}} \vb{d} = \vb{H}_{\vb{s}}^T \vb{H}_{\vb{s}} \vb{d}$ and $\vb{H}_{\vb{s}} \vb{d}$ is a vector, where the positions of ones gives the indices of the rows in $\vb{H}_{\vb{s}}$ that have inner product 1 with $\vb{d}$.
    Therefore, the ones of $\vb{H}_{\vb{s}} \vb{d}$ correspond to the rows in $\vb{H}$ that have inner product 1 with both $\vb{s}$ and $\vb{d}$.
    Moreover, if the vector $\vb{H}_{\vb{s}} \vb{d}$ is multiplied to $\vb{H}_{\vb{s}}^T$ on the right, then those rows are summed up, i.e., 
    \begin{align}
        \vb{G}_{\vb{s}} \vb{d} = \vb{H}_{\vb{s}}^T \vb{H}_{\vb{s}} \vb{d} = \sum_{\substack{\vb{p}^T \in \row(\vb{H}) \\ \vb{p} \cdot \vb{d} = \vb{p} \cdot \vb{s} = 1}} \vb{p} \ .
    \end{align}
    Similarly, we have
    \begin{align}
        \vb{G}_{\vb{d}} \vb{s} = \vb{H}_{\vb{d}}^T \vb{H}_{\vb{d}} \vb{s} = \sum_{\substack{\vb{p}^T \in \row(\vb{H}) \\ \vb{p} \cdot \vb{d} = \vb{p} \cdot \vb{s} = 1}} \vb{p} \ .
    \end{align}
    Thus, $\vb{G}_{\vb{s}} \vb{d} = \vb{G}_{\vb{d}} \vb{s}$.
    If we want $\vb{s}$ to lie in $\ker(\vb{G}_{\vb{d}})$, then $\vb{d}$ needs to lie in $\ker(\vb{G}_{\vb{s}})$, which happens with probability 
    \begin{align}
        \frac{2^{n-g}}{2^n} = 2^{-g} \ ,
    \end{align}
    for a random $\vb{d}$.
\end{proof}

\subsection{Size of $\ker(\vb{G}_{\vb{d}})$}
\label{app:kernel_size}

Here, we prove Theorem~\ref{thm:kernel_size}.

\begin{proof}
    First, observe that the rows in $\vb{G}_{\vb{d}}$ are formed by linear combination of rows in $\vb{H}_{\vb{d}}$, which means the rows space of $\vb{G}_{\vb{d}}$ is no larger than that of $\vb{H}_{\vb{d}}$, and $\rank(\vb{G}_{\vb{d}}) \leq \rank(\vb{H}_{\vb{d}})$.
    So, the dimension of $\ker(\vb{G}_{\vb{d}})$ is
    \begin{align}
        n - \rank(\vb{G}_{\vb{d}}) \geq n - \rank(\vb{H}_{\vb{d}}) \ .
    \end{align}
    In expectation, the number of rows $r(\vb{H}_{\vb{d}})$ in $\vb{H}_{\vb{d}}$ is
    \begin{align}
        \EE_{\vb{d}} [r(\vb{H}_{\vb{d}})] = \EE_{\vb{d}} \left[ \sum_{\vb{p}^T \in \row(\vb{H})} \vb{p} \cdot \vb{d} \right] 
        = \sum_{\vb{p}^T \in \row(\vb{H})} \EE_{\vb{d}} \left[ \vb{p} \cdot \vb{d} \right] = \frac{m}{2} \ ,
    \end{align}
    since $\EE_{\vb{d}} \left[ \vb{p} \cdot \vb{d} \right] = 1/2$ for every row $\vb{p}^T$.
    Since $\rank(\vb{H}_{\vb{d}}) \leq r(\vb{H}_{\vb{d}})$, we have $\dim(\ker(\vb{G}_{\vb{d}})) \geq n-m/2$ in expectation.
\end{proof}

\section{Discussion on new classical attacks}
\label{app:more_classical_attacks}

{Here, we complement the discussion in the main text about the classical attacks in Ref.~\cite{gross_secret_2023}.
Using those as guidelines, we propose the structural properties of the IQP matrix $\vb{H}$ that could possibly circumvent all currently known classical attacks.}

\paragraph{Linearity Attack and its variants.}

To invalidate the Linearity Attack, we generally require $\lambda_1 := n - m/2$ to be large, as shown in Theorem~\ref{thm:kernel_size}.
Ref.~\cite{gross_secret_2023} also proposed two variants of the Linearity Attack, namely, the Lazy Linearity Attack and the Double Meyer Attack.

The Lazy Linearity Attack exploits the statistical fluctuation of the dimension of $\ker(\vb{G}_{\vb{d}})$.
One may expect that $\dim(\ker(\vb{G}_{\vb{d}}))$ is approximately a Gaussian random variable with mean $\lambda_1 = n - m/2$.
Therefore, the basic idea is to repeatedly sample $\vb{d}$, check the dimension of $\ker(\vb{G}_{\vb{d}})$ and explore the kernel only if its dimension is lower than certain threshold.
In particular, the expected number of iterations is given by~\cite{gross_secret_2023}
\begin{align}
    E \sim \frac{2^g}{C_{\lambda_1, \sqrt{m}/2}(A)} \ ,
\end{align}
where $A$ is the threshold set for the dimension of $\ker(\vb{G}_{\vb{d}})$ and $C_{\mu, \sigma}$ is the CDF for the Gaussian distribution $N(\mu, \sigma^2)$.
Let $\lambda_2 := \log{E}$. 
Then, when $g = 10$, $\lambda_1 = 150$ and $A = 8$, the scaling of $\lambda_2$ versus the number of rows $m$ is plotted below.
If one chooses $m \leq 1200$, then one can expect $\lambda_2 > 60$.

\begin{figure}[h]
    \centering
    \includegraphics[width=0.5\linewidth]{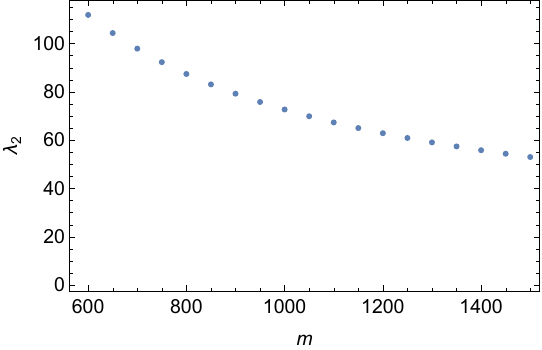}
    \caption{The logarithmic scaling of the expected number of iterations $\lambda_2$ versus the number of rows $m$ in the Lazy Linearity Attack.}
\end{figure}

For the Double Meyer Attack, the basic idea is that if one chooses $k$ random vectors $\vb{d}_1, \cdots, \vb{d}_k$, then intersection of all the kernels $\ker(\vb{G}_{\vb{d}_i})$ may exponentially shrink as $k$ increases.
When $\lambda_1 = 150$, the analysis in Ref.~\cite{gross_secret_2023} suggests that $k = 6$ is sufficient to make the intersection small enough to be explored.
Since the probability that the real secret lies in the intersection is given by $2^{-gk}$, the expected number of iterations is $\sim 2^{gk}$.
However, as pointed out by Ref.~\cite{gross_secret_2023}, the Double Meyer Attack might be infeasible in practice.
For example, choosing $g = 10$ means that the expected number of iterations is $\sim 2^{60}$, which suffices to make the Double Meyer Attack fail in practice.

\paragraph{Analysis of Hamming's razor.}

Without further structure in $\vb{H}$, the Hamming's razor has a good chance to work if one wants to invalidate the Linearity Attack and its variants.
To understand this, let $\vb{H} = (\vb{H}_1, \vb{H}_2)$, where $\vb{H}_1$ consists of the first $g+d$ columns and $\vb{H}_2$ consists of the remaining.
After removing $p$ fraction of rows, solving for the kernel of the remaining submatrix {$\vb{H}' = (\vb{H}'_1, \vb{H}'_2)$} is equivalent to solving for the following two linear systems:
\begin{align}\label{eq:hamming_razor_linear_systems}
    \vb{H}_1' \vb{v}_1 &= \vb{0} & \vb{H}_2' \vb{v}_2 &= \vb{0} \ ,
\end{align}
where $\vb{H}_1'$ is the remaining submatrix from $\vb{H}_1$ and $\vb{H}_2'$ is the remaining submatrix from $\vb{H}_2$.
The attacker wants to find a threshold $p^*$ so that only the second linear system has nontrivial solutions.
In this way, the solution vector {will be of the form} $\vb{v} = (\vb{v}_1^T, \vb{v}_2^T)^T \in \vb{0}_{g+d} \oplus \FF_2^{n-g-d}$, which indicates the indices of the redundant rows since $\vb{H} \vb{v} = \begin{pmatrix} \vb{0} \\ \vb{c}_2 \end{pmatrix}$.

Note that these two linear systems contain the same number of equations.
For the linear system from $\vb{H}_1'$, the number of variables is $g+d$, which is $\sim \frac{m_1}{2}$ when one takes the maximum possible value of $d$.
For the linear system from $\vb{H}_2'$, the number of variables is $n - g - d \sim \frac{m_2}{2} + \lambda_1$, where we used $\lambda_1 = n - m/2$.
So, there are generally more variables in the second linear system, {since we set $\lambda_1$ to be large to invalidate the Linearity Attack and its variants}.
What is worse, the all-zeros block in $\vb{H}_2'$ imposes no constraint.
Thus, when the fraction $p$ is steadily increased, one can expect a threshold $p^*$ so that only the second linear system has a nontrivial solution.

\paragraph{Invalidating the Hamming's razor.}

Here, we propose the following structural constraints on $\vb{H}$ to invalidate the Hamming's razor.
Essentially, we want to make $\vb{H}_1$ sparse, so that the first linear system also has solutions when the second one does, despite that it has relatively more constraints than the second one.

There are two parts to be considered.
For the $\vb{D}$ matrix, we are not aware of methods for sampling sparse generator matrices for doubly-even codes, so we use code concatenation to achieve our purpose.
We discuss the specific construction later.
For the $(\vb{A}, \vb{B})$ matrix, we first set it to be a random $t$-sparse matrix.
In our numerical experiment, we simply set $t$ to be 1.
Then, we postselect on the case where $(\vb{B}, \vb{C})$ is full rank.
Finally, we only need to modify one column of $\vb{A}$ so that $\vb{R}_{\vb{s}} \, \vb{s} = \vb{0}$ is satisfied.

Now, we turn to discuss the construction of the $\vb{D}$ matrix.
Let $\vb{D}_1, \cdots, \vb{D}_k \in \FF_2^{m_0 \times d_0}$ be generator matrices for doubly-even codes.
Here, we require $m_0 k = m_1$ and $d_0 k \geq d$.
Of course, in general, $\vb{D}_i$ can be of different sizes, but for the sake of illustration, we let their sizes be the same.
For a vector $\vb{a} \in \FF_2^{d_0 k}$, we split it into $k$ blocks,
\begin{align}
    \vb{a} = 
    \begin{pmatrix}
        \vb{a}^{(1)} \\ \vdots \\ \vb{a}^{(k)}
    \end{pmatrix} \ ,
\end{align}
where $\vb{a}^{(j)} \in \FF_2^{d_0}$.
Then, we apply $\vb{D}_j$ to the $j$-th block $\vb{a}^{(j)}$, resulting in
\begin{align}
    \begin{pmatrix}
        \vb{D}_1 \vb{a}^{(1)} \\ \vdots \\ \vb{D}_k \vb{a}^{(k)} 
    \end{pmatrix} \ ,
\end{align}
a vector of length $m_1 = m_0 k$, which is also doubly-even.
To define the matrix $\vb{D}$, let $\vb{K}_{\mathrm{in}} \in \FF_2^{d_0 k \times d}$ be a random matrix of full column rank with $d \leq d_0 k$.
Then, for each vector in the columns of $\vb{K}_{\mathrm{in}}$, we split it into $k$ blocks and apply $\vb{D}_j$ to its $j$-th block for $j \in [k]$.
Formally, define the encoding matrix
\begin{align}
    \vb{E}_c = \diag(\vb{D}_1, \cdots, \vb{D}_k) \ .
\end{align}
Then, $\vb{D}$ is constructed to be $\vb{D} = \vb{E}_c \vb{K}_{\mathrm{in}}$.
It is not hard to verify that $\vb{D}$ is a generator matrix for a doubly-even code.
The sparsity of $\vb{D}$ can be controlled by the sparsity of $\vb{K}_{\mathrm{in}}$.
To see this, consider in the extreme case where $\vb{K}_{\mathrm{in}}$ is the identity matrix.
Then, $\vb{D}$ is just a block diagonal matrix.

\paragraph{Numerics.}
We test the above ideas numerically~\cite{challenge}.
We generate instances with the $\vb{D}$ and $(\vb{A}, \vb{B})$ matrices as described above.
The parameter set is $n = 700, m = 1200, g = 10, m_1 = 300, d = 135, m_0 = 20$ and $d_0 = 9$.
The Radical Attack fails to recover the secret since the dimension of $\ker(\vb{H}^T \vb{H})$ is about 1.
Although $\lambda_1 = n - m/2$ for this set of parameters is 100, we observe that the mean value of the dimension of $\ker(\vb{G}_{\vb{d}})$ is about 150.
This implies that the Linearity Attack and its variants should also fail in practice, though we did not put enough computing resources to extrapolate the actual scaling of these attacks (specifically, the Double Meyer Attack).
Finally, for Hamming's razor, both linear systems in \cref{eq:hamming_razor_linear_systems} begin to have solutions when $p \approx 0.35$ and we could not find a threshold $p^*$ so that only the second linear system has solutions.
So, the Hamming's razor also fails in this case.

\end{document}